\newcommand{\be}{\begin{eqnarray}}
\newcommand{\ee}{\end{eqnarray}}
\newcommand{\bez}{\begin{eqnarray*}}
\newcommand{\eez}{\end{eqnarray*}}
\newcommand{\cA}{\mathcal{A}}
\newcommand{\tT}{\tilde{T}}
\newcommand{\cS}{S}
\newcommand{\cP}{\mathcal{P}}
\newcommand{\cU}{\mathcal{U}}
\newcommand{\cV}{\mathcal{V}}
\newcommand{\bsy}{\boldsymbol}
\newcommand{\bbN}{\mathbb{N}}
\newcommand{\bbR}{\mathbb{R}}
\newcommand{\bbZ}{\mathbb{Z}}
\renewcommand\vec[1]{\overrightarrow{#1}}
\newcommand\cev[1]{\overleftarrow{#1}}
\newtheorem{Theorem}{Theorem}[section]
\newtheorem{Proposition}[Theorem]{Proposition}
\newtheorem{Corollary}[Theorem]{Corollary}
\newtheorem{Conjecture}[Theorem]{Conjecture}
\theoremstyle{definition}
\newtheorem{Remark}[Theorem]{Remark}
\newtheorem{Example}[Theorem]{Example}
\numberwithin{equation}{section}
\begin{document}

\allowdisplaybreaks

\newcommand{\arXivNumber}{2305.17974}

\renewcommand{\PaperNumber}{051}

\FirstPageHeading

\ShortArticleName{On the Structure of Set-Theoretic Polygon Equations}

\ArticleName{On the Structure of Set-Theoretic Polygon Equations}

\Author{Folkert M\"ULLER-HOISSEN}

\AuthorNameForHeading{Folkert M\"uller-Hoissen}

\Address{Institut f\"ur Theoretische Physik, Friedrich-Hund-Platz 1, 37077 G\"ottingen, Germany}
\Email{\href{mailto:folkert.mueller-hoissen@phys.uni-goettingen.de}{folkert.mueller-hoissen@phys.uni-goettingen.de}}
\URLaddress{\url{http://mueller-hoissen.math-phys.org}}

\ArticleDates{Received December 29, 2023, in final form May 29, 2024; Published online June 11, 2024}

\Abstract{Polygon equations generalize the prominent \emph{pentagon equation} in very much the same way as simplex equations generalize the famous \emph{Yang--Baxter equation}. In particular, they appeared as ``cocycle equations'' in Street's category theory associated with oriented simplices. Whereas the $(N-1)$-simplex equation can be regarded as a realization of the higher Bruhat order $B(N,N-2)$, the $N$-gon equation is a realization of the higher Tamari order $T(N,N-2)$. The latter and its dual $\tilde T(N,N-2)$, associated with which is the dual $N$-gon equation, have been shown to arise as suborders of $B(N,N-2)$ via a ``three-color decomposition''. There are two different reductions of $T(N,N-2)$ and $\tilde T(N,N-2)$, to~${T(N-1,N-3)}$, respectively $\tilde T(N-1,N-3)$. In this work, we explore the corresponding reductions of (dual) polygon equations, which lead to relations between solutions of neighboring (dual) polygon equations. We also elaborate (dual) polygon equations in this respect explicitly up to the octagon equation.}

\Keywords{polygon equations; simplex equations, cocycle equations; pentagon equation; set-theoretic solutions; higher Bruhat orders; higher Tamari orders}

\Classification{06A06; 06A07; 15A69; 16T05; 16T25; 17A01; 18D10}

\section{Introduction}
An infinite family of ``polygon equations'' has been introduced in \cite{DMH15}, based on the
combinatorial structure of higher Tamari (partial) orders. They appeared earlier as cocycle equations
(``higher cocycloids'') in a category-theoretical framework based on oriented simplices (``orientals'') \cite{Stre87}.
The 4-gon equation is the condition of associativity for a binary operation. The dual~4-gon equation expresses coassociativity of
a comultiplication. The 5-gon or pentagon equation is the most prominent member of this family and
plays a profound role in mathematics and mathematical physics.

Of particular relevance is the pentagon equation in the context of bi- and Hopf algebras. The crucial and fairly simple observation here is the following. Given a unital associative algebra~$\cA$
with identity element $1_\cA$, it carries a trivial (left) comultiplication\footnote{Of course, there is a corresponding
result for the ``right'' trivial comultiplication $\Delta_r\colon \cA \rightarrow \cA \otimes \cA$,
$a \mapsto 1_\cA \otimes a$. }
${\Delta_\ell\colon \cA \rightarrow \cA \otimes \cA}$,~${a \mapsto a \otimes 1_\cA}$.
Non-trivial comultiplications are then obtained via $\Delta := W \Delta_\ell W^{-1}$ if $W \in \cA \otimes \cA$
is an invertible solution of the pentagon equation\footnote{A somewhat more general result
can be found in \cite{Davy01} and involves a ``mixed'' or ``entwining'' pentagon equation. Corresponding
versions of polygon equations, involving several different maps, have been treated in full
generality in \cite{DMH15}. Also see~\cite{Kass23}, for example.}
$W_{\bsy{12}} W_{\bsy{13}} W_{\bsy{23}} = W_{\bsy{23}} W_{\bsy{12}}$
in the threefold tensor product of~$\cA$, where the indices determine at which two components of it $W$ acts.
In fact, each comultiplication on a finite-dimensional algebra can be expressed in this way~\cite{Mili04}.
But also in the rigorous framework of infinite-dimensional $C^\ast$-algebras such an expression for comultiplication
plays a~profound role in surpassing technical problems in the duality theory of locally compact groups and for establishing
a~rigorous setting for quantum groups. Here $W$ appears as a unitary linear
operator under the name ``multiplicative unitary'' or (generalized) Kac--Takesaki operator
(see, in particular, \cite{Baaj+Skan93,Kust+Vaes00,Skan91,Take71,Take72,Timm08,Woro96}).

The pentagon equation, sometimes called ``fusion equation'', shows up in conformal field theory as a
condition for the fusion matrix, which arises from operator product expansion of chiral vertex operators,
see \cite{Moor+Seib89} for example.

The pentagon equation plays a crucial role as an associativity constraint for the basic functor of a monoidal
category, notably because of Mac Lane's coherence theorem \cite{MacL63} (also see~\cite{Kell64}).
This in turn has roots in Tamari's
work on relaxing associativity and Stasheff's work on homotopy associativity \cite{Stash12,Stash63}.

One way to obtain topological invariants of manifolds is via triangulations, i.e., decompositions into simplices.
To each simplex one assigns a mathematical object and builds from all of them an object (e.g., a ``state sum'')
assigned to a triangulation.
This has to be done in such a way that the latter object is invariant under manipulations of triangulations
called Pachner moves or bistellar flips \cite{Pach91} (also see, e.g., \cite{Lick99}).
In three dimensions, the simplices are tetrahedra. A Pachner $(2,3)$-move splits two adjacent tetrahedra into
three and this requires that the objects associated with the tetrahedra, constituting a triangulation,
satisfy a pentagon relation. Because of the pentagonal Biedenharn--Elliot identity \cite{Bied+Louc81},
a Wigner $6$-$j$ symbol, or a~generalization of it, is thus a choice for the object assigned
to a tetrahedron \cite{Ponz+Regg68,Tura94,Tura+Viro92}.\footnote{Essentially, (generalized) $6$-$j$ symbols arise
as follows. If there is a direct sum decomposition
$V_i \otimes V_j = \bigoplus_\ell H^\ell_{ij} \otimes V_\ell$ of vector spaces,
the associator isomorphism $(V_i \otimes V_j) \otimes V_k \rightarrow V_i \otimes (V_j \otimes V_k)$
of a monoidal category induces maps
\smash{$\left\{ \begin{smallmatrix} i & j & \ell \\ k & m & n \end{smallmatrix} \right\} \colon H^\ell_{ij} \otimes H^m_{\ell k}
 \rightarrow H^m_{i n} \otimes H^n_{j k} $}.
}
In \cite{Barr+Cran97}, it was shown that the Wheeler--DeWitt equation for three-dimensional general relativity
reduces to the pentagon relation.
There is a comprehensive literature around the idea of representing Pachner moves of a triangulation
in three dimensions by a solution of the pentagon equation (or a pentagon relation), see in particular \cite{Bara+Frei15, Cran+Fren94,Kash94,Kore02,MST22,Suzu18}.
In four dimensions, the dual hexagon equation plays a corresponding role \cite{Kash15} (also see Remark~\ref{rem:Kashaev_Pachner4d} below), in five dimensions it is the dual heptagon equation
\cite{Kore21,Kore22a,Kore22b}.

In the latter context, it is also worth to mention that the higher Tamari orders, introduced in~\cite{DMH15} and underlying polygon equations, are equivalent \cite{Will23} to higher Stasheff--Tamari orders~\mbox{\cite{Edel+Rein96,Kapr+Voev91}} on the set of triangulations of a cyclic polytope.

A pentagonal relation is also satisfied by the Rogers \cite{Roge07} and quantum dilogarithms\footnote{Also the
quantum dilogarithm can be understood as a $6$-$j$ symbol \cite{Kash94}.} \cite{Fadd11,Fadd+Kash94}.
Although it does not have the structure of the pentagon equation as considered in this work, there
are certain relations, see, e.g., Example~\ref{ex:Rogers_dilog} below.\footnote{Also Drinfeld associators of quasi-Hopf
algebras (see, e.g., \cite{Etin+Schi98}) are subject to a certain pentagon equation, but they are
elements of a triple tensor product and therefore the corresponding pentagon equation is different from the standard one.}

Comparatively little is known so far about higher polygon equations. An expression for the dual hexagon equation appeared as a 4-cocycle condition in \cite{Stre87,Stre98}.

Polygon equations play a role in the theory of completely solvable models of statistical mechanics.
If solutions of the $N$-gon and the dual $N$-gon equation satisfy a certain compatibility condition, a special solution of the $(N-1)$-simplex equation is obtained, see \cite[Section~5]{DMH15}. This includes and generalizes a relation between solutions of the pentagon equation, its dual, and the~4-simplex (Bazhanov--Stroganov) equation \cite{Kash+Serg98}. There is also a relation between the pentagon equation, its dual, and the 3-simplex (tetrahedron or Zamolodchikov) equation \cite{Kash+Serg98,Kass23, Mail94,Serg23}, for which a corresponding generalization to higher polygon and simplex equations is yet unknown.

A class of solutions of odd polygon equations with maps acting on direct sums has recently been
obtained in \cite{Dima+Kore21}.

In the present work, we explore polygon equations in the set-theoretic setting, i.e., we do not assume any additional structure beyond sets and maps between (Cartesian products of) them. Using the combinatorics of higher Tamari orders, underlying polygon equations, we derive relations between the sets of solutions of neighboring (dual) polygon equations. These relations are proved for the whole infinite family of (dual) polygon equations. Additional results are provided for (dual) polygon equations up to the octagon (8-gon) equation.

Section~\ref{sec:B&T} briefly recalls the definition of higher Bruhat and higher Tamari orders. The reader is
referred to \cite{DMH15} for a detailed treatment. Section~\ref{sec:S&P} presents the definition of simplex and polygon equations in a slightly different way from \cite{DMH15}.
Section~\ref{sec:reductions} deals with reductions of polygon equations and contains the most general results
of this work.
Sections~\ref{sec:ExPEs} and \ref{sec:dualPol} present the first few examples of polygon equations and their
duals, respectively. They extend results obtained in \cite{DMH15}.\looseness=-1

Section~\ref{sec:relations_polygon_eqs} deals with the special, but most important case of (dual) polygon equations
for a~single map, acting between Cartesian products of a set $\cU$. Partly from the general
results in Section~\ref{sec:reductions}, but to some extent also independently and from results in Sections~\ref{sec:ExPEs} and \ref{sec:dualPol}, we derive relations between solutions of a (dual) $N$-gon and (dual) $(N+1)$-gon equation, $N>2$.

Section~\ref{sec:conclusions} contains some concluding remarks.

\section{Basics of higher Bruhat and higher Tamari orders}
\label{sec:B&T}
For a non-empty finite subset $M$ of $\bbN$, and $n \in \bbN$, $1 \leq n \leq |M|$
(where $|M|$ is the cardinality of $M$), let ${M \choose n}$ denote the set of
$n$-element subsets of $M$.
The \emph{packet} $P(M)$ of $M$ is the set of~${(|M|-1)}$-element subsets of $M$. We write
\smash{$\vec{P}(M)$} for $P(M)$ in lexicographic order, and
\smash{$\cev{P}(M)$} for $P(M)$ in reverse lexicographic order.

Let $N \in \bbN$, $N>1$, and $[N]=\{1,\ldots, N\}$. A \emph{linear order} (permutation) $\rho$
of ${[N] \choose n}$, ${n\in \bbN}$, ${n<N}$, is called \emph{admissible} if, for each
\smash{$K \in {[N] \choose n+1}$}, the packet $P(K)$ is contained in $\rho$ in lexicographic or in reverse lexicographic order.
Let $A(N,n)$ denote the set of admissible linear orders of ${[N] \choose n}$.
An equivalence relation is defined on $A(N,n)$ by $\rho \sim \rho'$ if and only if $\rho$ and~$\rho'$ only differ by exchange
of two neighboring elements, not both contained in some packet. Then~${A(N,n)/{\sim}}$, supplied with the
partial order given via inversions of lexicographically ordered packets, \smash{$\vec{P}(K) \mapsto \cev{P}(K)$},
is the \emph{higher Bruhat order} $B(N,n)$.

Next we consider the splitting of a packet, $P(K) = P_{\rm o}(K) \cup P_{\rm e}(K)$, where $P_{\rm o}(K)$ ($P_{\rm e}(K)$) is the half-packet
consisting of elements with odd (even) position in the lexicographically ordered~$P(K)$.

We say an element $J \in P_{\rm o}(K)$ is \emph{blue} in \smash{$\vec{P}(K)$} and \emph{red} in \smash{$\cev{P}(K)$},
an element $J \in P_{\rm e}(K)$ is \emph{red} in \smash{$\vec{P}(K)$} and \emph{blue} in \smash{$\cev{P}(K)$}.
$J \in P(K)$ is \emph{blue} (\emph{red}) in $\rho \in A(N,n)$ if $J$ is \emph{blue} (\emph{red}) with respect to
all $K$ for which $J \in P(K)$ and either \smash{$\vec{P}(K)$} or \smash{$\cev{P}(K)$} is a subsequence of $\rho$.

It can happen that $J$ is \emph{blue} in $\rho \in A(N,n)$ with respect to some $K$ and \emph{red} with respect to another $K'$. In such a case we color it \emph{green}. By $\rho^{\rm (b)}$, $\rho^{\rm (r)}$, $\rho^{\rm (g)}$
we denote the blue, red, respectively green subsequence of~$\rho$.

It has been shown in \cite{DMH15} that there are projections $B(N,n) \rightarrow B^{\rm (c)}(N,n)$, $[\rho] \mapsto \big[\rho^{\rm (c)}\big]$,
${\rm c} \in \{{\rm b},{\rm r},{\rm g}\}$, such that $B^{\rm (c)}(N,n)$ inherits a partial order from $B(N,n)$.
$T(N,n) := B^{\rm (b)}(N,n)$ are the \emph{higher Tamari orders} and $\tT(N,n) := B^{\rm (r)}(N,n)$ are called
\emph{dual higher Tamari orders}. The inversion operation in case of $T(N,n)$ is
\smash{$\vec{P}_{\rm o}(K) \mapsto \cev{P}_{\rm e}(K)$}, \smash{$K \in {[N] \choose n+1}$}. In case of $\tT(N,n)$, it is
\smash{$\vec{P}_{\rm e}(K) \mapsto \cev{P}_{\rm o}(K)$}.

\begin{Remark}
Associating with $K \in {[N] \choose n+1}$ an $n$-simplex, the packet $P(K)$ corresponds to the set of its faces, which
are $(n-1)$-simplices. The \emph{pasting scheme} given by the above inversion operation then supplies the faces
with an orientation.
This results in the \emph{orientals} (oriented simplices) introduced by Street in 1987 \cite{Stre87}.
It had been conjectured in \cite{DMH12KPBT,DMH15} and proved in~\cite{Will23} that the higher Tamari orders
are equivalent to the \emph{higher Stasheff–Tamari orders} in~\mbox{\cite{Edel+Rein96,Kapr+Voev91}}.
All these works thus deal
with essentially the same structure. In \cite{DMH11KPT,DMH12KPBT} it has been realized in terms of rooted tree-shaped
solutions of the \emph{Kadomtsev--Petviashvili $($KP$)$ hierarchy}.
\end{Remark}

\section{A brief account of simplex and polygon equations}
\label{sec:S&P}
Let $N>2$. With \smash{$J \in {[N] \choose N-2}$} we associate a set
$\cU_J$. For $\rho \in A(N,N-2)$, let $\cU_\rho$ be the correspondingly ordered
Cartesian product of the $\cU_J$, $J \in \rho$.
With \smash{$K \in {[N] \choose N-1} = P([N])$} we associate a map
\[
 R_K \colon\ \cU_{\vec{P}(K)} \longrightarrow \cU_{\cev{P}(K)}.
\]
The \emph{$(N-1)$-simplex equation}
\begin{gather}
 R_{\vec{P}([N])} = R_{\cev{P}([N])} \label{(N-1)-simplex_eq}
\end{gather}
may then be regarded as a realization of $B(N,N-2)$.
The expressions on both sides are compositions of maps $R_K$, $K \in P([N])$,
applied on the left-hand side in lexicographic, on the right-hand side in reverse lexicographic order.
Writing \smash{$\vec{P}([N]) = (K_1,\ldots,K_N)$},
we have \smash{$R_{\vec{P}([N])} = R_{K_N} \cdots R_{K_1}$}.
Hence, as a composition, \smash{$R_{\vec{P}([N])}$} is actually in reverse lexicographic order,
but the maps are applied in lexicographic order.
We have to add the following rules in order for \eqref{(N-1)-simplex_eq} to make sense.
\begin{enumerate}\itemsep=0pt
\item[(1)] Both sides of \eqref{(N-1)-simplex_eq} act on $\cU_\alpha$ and map to
$\cU_\omega$, where $\alpha$ ($\omega$) is ${[N] \choose N-2}$ in lexicographic (reverse lexicographic) order.
\item[(2)] Each of the maps $R_K$ acts at consecutive positions in the respective multiple Cartesian product of spaces.
\item[(3)] If $J, J' \in {[N] \choose N-2}$ are such that they do not both belong to $P(K)$ for any
$K \in {[N] \choose N-1}$, then
\[
 \cdots \times \cU_J \times \cU_{J'} \times \cdots \; \boldsymbol{\sim} \;
 \cdots \times \cU_{J'} \times \cU_J \times \cdots
\]
imposes an equivalence relation on Cartesian products.
\end{enumerate}

Starting with $\cU_\alpha$, it may be necessary to use the third rule to arrange
that $R_{K_1}$, respectively~$R_{K_N}$, can be applied, which means that the sets associated with
elements of $P(K_1)$, respectively~$P(K_N)$, have to be in lexicographic order and at neighboring
positions in the respective multiple Cartesian product of sets.
After an application of some $R_K$, it may again be necessary to use the third rule to arrange
a further application of a map $R_{K'}$, or to achieve the final reverse lexicographic order
$\cU_\omega$.
That this works is a consequence of the underlying structure of higher Bruhat orders \cite{DMH15,Manin+Schecht86a,Manin+Shekhtman86b}.

 We have to stress that \eqref{(N-1)-simplex_eq} is \emph{not} the form in which simplex equations usually
appear in the literature, see \cite{DMH15} for the relation and references.

With each $K \in {[N] \choose N-1}$, now we associate a map
\[
 T_K\colon\ \cU_{\vec{P}_{\rm o}(K)} \longrightarrow \cU_{\cev{P}_{\rm e}(K)}.
\]
Writing $K = (k_1,\ldots,k_{N-1})$, with $k_i < k_{i+1}$, $i=1,\ldots,N-2$, we have
\begin{align*}
 \cU_{\vec{P_{\rm o}}(K)}
 &= \cU_{ K \setminus \{k_{N-1}\} } \times \cU_{ K \setminus \{k_{N-3}\} } \times \cdots \times
 \cU_{ K \setminus \{k_{1 + (N \, \mathrm{mod}\, 2)}\} }, \\
 \cU_{\cev{P_{\rm e}}(K)}
 &= \cU_{ K \setminus \{k_{2 - (N \, \mathrm{mod}\, 2)}\} } \times \cdots \times
 \cU_{ K \setminus \{k_{N-4}\} } \times \cU_{ K \setminus \{k_{N-2}\} }.
\end{align*}

The \emph{$N$-gon equation}
\begin{gather}
 T_{\vec{P}_{\rm o}([N])} = T_{\cev{P}_{\rm e}([N])} \label{N-gon_eq_short}
\end{gather}
may be regarded as a realization of $T(N,N-2)$.
It is well defined if we require the following rules.
\begin{enumerate}\itemsep=0pt
\item[(1)] Let $\alpha$ ($\omega$) be again \smash{${[N] \choose N-2}$} in lexicographic (reverse lexicographic) order,
and let $\alpha^{\rm (b)}$ and~\smash{$\omega^{\rm (b)}$} be the corresponding blue parts.
Both sides of \eqref{N-gon_eq_short} act on $\cU_{\alpha^{\rm (b)}}$ and map to
$\cU_{\omega^{\rm (b)}}$.
\item[(2)] Each of the maps $T_K$ acts at consecutive positions in the respective multiple Cartesian product of
sets.
\item[(3)] If $J, J' \in {[N] \choose N-2}$ are such that they do not both belong to $P(K)$ for any
$K \in {[N] \choose N-1}$, then
\[
 \cdots \times \cU_J \times \cU_{J'} \times \cdots \; \boldsymbol{\sim} \;
 \cdots \times \cU_{J'} \times \cU_J \times \cdots.
\]
\end{enumerate}

As in the case of simplex equations, to apply or work out a polygon equation, we have
to check at each step whether a map $T_K$ can be applied directly or whether we first have to use
the third rule above to achieve a reordering of the respective multiple Cartesian product.
In any case, we have to keep track of the numbering of the sets, even if they are identical as sets.

It is, therefore, convenient to realize the above equivalence relation by introducing explicitly
transposition maps (sometimes called flip or switch maps) in the equations, at the price of ending up with a form
of the equation that looks more complicated and apparently lost its universal structure, but it is
often better suited for applications.
Instead of keeping track of the numbering of sets, we then have to keep track of the first
position on which a map acts in a multiple Cartesian product. This has been done in \cite{DMH15}.
For several polygon equations, we will recall the resulting form in Section~\ref{sec:ExPEs}.
In this form, we can best deal with the case of prime interest, where all the sets $\cU_J$ are the
same and there is only a single map $T$.

If $N$ is odd, \eqref{N-gon_eq_short} can be written as
\begin{gather}
 T_{\hat{1}} T_{\hat{3}} \cdots T_{\widehat{N-2}} T_{\widehat{N}} = T_{\widehat{N-1}} T_{\widehat{N-3}} \cdots T_{\hat{2}},
 \label{N-gon_odd}
\end{gather}
where $\hat{k} := [N] \setminus \{k\}$ (complementary index notation).

If $N$ is even, \eqref{N-gon_eq_short} can be correspondingly expressed as
\begin{gather}
 T_{\hat{2}} T_{\hat{4}} \cdots T_{\widehat{N-2}} T_{\widehat{N}} = T_{\widehat{N-1}} T_{\widehat{N-3}} \cdots T_{\hat{1}}.
 \label{N-gon_even}
\end{gather}

With each $K \in {[N] \choose N-1}$, $N>2$, we also associate a map
\[
 \tT_K \colon\ \cU_{\vec{P}_{\rm e}(K)} \longrightarrow \cU_{\cev{P}_{\rm o}(K)}.
\]
The \emph{dual $N$-gon equation}
\begin{gather}
 \tT_{\vec{P}_{\rm e}([N])} = \tT_{\cev{P}_{\rm o}([N])} \label{N-gon_eq}
\end{gather}
may be regarded as a realization of the dual Tamari order $\tT(N,N-2)$. Both sides act on $\alpha^{\rm (r)}$,
which is equal to $\omega^{\rm (b)}$ totally reversed, and map to $\omega^{\rm (r)}$, which is equal to $\alpha^{\rm (b)}$
totally reversed.

For odd $N$, \eqref{N-gon_eq} is
\[ %\label{dual_N-gon_odd}
 \tT_{\hat{2}} \tT_{\hat{4}} \cdots \tT_{\widehat{N-1}}
 = \tT_{\widehat{N}} \tT_{\widehat{N-2}} \cdots \tT_{\hat{3}} \tT_{\hat{1}},
\]
which is \eqref{N-gon_odd} reversed.
For even $N$, we have
\begin{gather}
 \tT_{\hat{1}} \tT_{\hat{3}} \cdots \tT_{\widehat{N-1}}
 = \tT_{\widehat{N}} \tT_{\widehat{N-2}} \cdots \tT_{\hat{2}}, \label{dual_N-gon_even}
\end{gather}
which is \eqref{N-gon_even} reversed.

Simplex equations, and also (dual) polygon equations, are interrelated by a kind of integrability feature, which
crucially distinguishes them from similar equations. We refer to \cite{DMH15} for the general structure, but
in Section~\ref{sec:ExPEs} we elaborate this feature for some examples of polygon equations.

\begin{Remark}
The dual $(N+2)$-gon equation is the $N$-cocycle condition in \cite{Stre87,Stre98}. In \cite{Stre87}, the reader finds
an explanation in which sense these equations can be regarded as ``cocycles''.
\end{Remark}

\section{Reductions of polygon equations}
\label{sec:reductions}

For any fixed $k \in [N+1]$, there is a projection $\mathrm{pr}_k\colon A(N+1,n+1) \rightarrow A(N,n)$, obtained by restricting
$\rho \in A(N+1,n+1)$ to the subsequence consisting only of elements $K \in {[N+1] \choose n+1}$ with~${k \in K}$.
The set of all these subsequences
is in bijection with $A(N,n)$, simply by deleting $k$ in each~$K$ and an obvious renumbering. Moreover, the projection
is compatible with the equivalence relation~$\sim$ and induces
a projection $B(N+1,n+1) \rightarrow B(N,n)$. See~\cite[Remark~2.5]{DMH15}.

But only for $k \in \{1,N+1\}$, the projection $\mathrm{pr}_k $ is compatible with the 3-color decomposition,
see \cite[Remark~2.17]{DMH15}.
If $k=1$, this yields projections $T(N+1,n+1) \rightarrow T(N,n)$ and
${\tT(N+1,n+1) \rightarrow \tT(N,n)}$. For $k=N+1$, we have
projections $T(N+1,n+1) \rightarrow \tT(N,n)$ and $\tT(N+1,n+1) \rightarrow T(N,n)$.

In particular, there are thus projections $T(N+1,N-1) \rightarrow T(N,N-2)$ and $T(N+1,N-1) \rightarrow \tT(N,N-2)$, which then induce reductions of the $(N+1)$-gon to the $N$-gon equation, respectively the dual $N$-gon equation.
In the same way, the projections $\tT(N+1,N-1) \rightarrow \tT(N,N-2)$ and~${\tT(N+1,N-1) \rightarrow T(N,N-2)}$
induce reductions of the dual $(N+1)$-gon to the dual $N$-gon equation, respectively the~$N$-gon equation.
Since we realize Tamari orders by sets and maps between them, we have to make sure, however, that this procedure indeed leads to a~realization of the (dual) $N$-gon equation. This will be made precise in the following subsections. Throughout we assume $N>2$.

By a \emph{degenerate} map (e.g., a solution of a polygon equation), we mean a map whose values do not
depend on (at least) one of its arguments.

\subsection[Reductions induced by pr\_1]{Reductions induced by $\boldsymbol{\mathrm{pr}_1}$}
The reduction of the $(N+1)$-gon equation induced by $\mathrm{pr}_1$ is essentially
obtained by dropping the map $T_K$ with $1 \notin K$. But we also have to arrange that the remaining maps $T_K$ with $ 1 \in K$ are reduced to maps between (products of) spaces $U_J$ with $1 \in J \in P(K)$.

\begin{Theorem}
\label{thm:polygon->oddpolygon_red1}
Let $N \in \bbN$ be odd, $N>2$.
\begin{itemize}\itemsep=0pt
\item[$(1)$] Let $T_K$, $K \in {[N+1] \choose N}$, be maps solving the $(N+1)$-gon equation.
For $K$ with $1 \in K$, let~$T_K$ not depend on the last component of its domain. Let $T_{K'}$, $K' = K \setminus \{1\}$, be obtained from $T_K$ by excluding $\cU_{K'}$ from its domain.
Then \smash{$\big\{T_{K'} | K' \in {\{2,\ldots, N+1 \} \choose N-1 } \big\}$} solve the $N$-gon equation.
\item[$(2)$] Each solution of the $N$-gon equation can be extended to a degenerate solution of the $(N+1)$-gon equation.
\end{itemize}
\end{Theorem}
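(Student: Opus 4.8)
The plan is to work with the explicit even form \eqref{N-gon_even} of the $(N+1)$-gon equation (recall $N+1$ is even, as $N$ is odd),
\[
 T_{\hat 2} T_{\hat 4} \cdots T_{\widehat{N-1}} T_{\widehat{N+1}} = T_{\widehat N} T_{\widehat{N-2}} \cdots T_{\hat 3} T_{\hat 1},
\]
where now $\hat k = [N+1] \setminus \{k\}$ and $K=\hat k$ ranges over $\binom{[N+1]}{N}$. The unique map indexed by a $K$ with $1 \notin K$ is $T_{\hat 1}$, which sits at the rightmost position of the right-hand side. First I would record two structural facts, both consequences of $N$ being odd. For each $k \ge 2$ (so $1 \in \hat k$ and the least element of $\hat k$ is $1$), the domain $\cU_{\vec{P}_{\rm o}(\hat k)}$ contains exactly one factor indexed by a set missing $1$, namely $\cU_{\hat k \setminus \{1\}}$; since $N$ is odd, this factor occupies the last odd position of $\vec{P}(\hat k)$, hence is precisely the last component of the domain. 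Thus the degeneracy hypothesis says $T_{\hat k}$ ignores its unique non-$1$ argument, making $T_{K'}$, $K' = \hat k \setminus\{1\}$, well defined; moreover the codomain $\cU_{\cev{P}_{\rm e}(\hat k)}$ consists entirely of $1$-indexed factors, so these maps never output non-$1$-indexed data. By contrast, the domain and codomain of $T_{\hat 1}$ consist entirely of the non-$1$-indexed factors, which are exactly the packet $P(\hat 1)$, split into $\vec{P}_{\rm o}(\hat 1)$ (inputs) and $\cev{P}_{\rm e}(\hat 1)$ (outputs).

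Next I would identify the renumbered maps with those of the $N$-gon equation via $\mathrm{pr}_1$. Deleting $1$ from $K' = \hat k \setminus \{1\}$ and shifting the remaining indices down by one sends $T_{\hat k} \mapsto T_{\widetilde{k-1}}$, where $\widetilde{j} = [N] \setminus \{j\}$. Under this correspondence the left-hand factors $\hat 2, \hat 4, \dots, \widehat{N+1}$ become $\widetilde 1, \widetilde 3, \dots, \widetilde N$, and the right-hand factors $\widehat N, \dots, \hat 3$ (after dropping $T_{\hat 1}$) become $\widetilde{N-1}, \dots, \widetilde 2$, i.e.\ exactly the two sides of the odd $N$-gon equation \eqref{N-gon_odd}. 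That the relative order and interleaving of the $1$-indexed factors in the domain and codomain also descend correctly is precisely the statement, quoted from the cited results, that $\mathrm{pr}_1$ is a projection $T(N+1,N-1) \to T(N,N-2)$ compatible with the $3$-color decomposition.

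The heart of the argument is the decoupling of the non-$1$-indexed data. Tracking the factors indexed by $P(\hat 1)$, I would show that on the left-hand side the elements of $\vec{P}_{\rm o}(\hat 1)$ are consumed exactly as the ignored last slots of $T_{\hat 2}, \dots, T_{\widehat{N+1}}$, while the factors of $\cev{P}_{\rm e}(\hat 1)$ are carried through unchanged to $\cU_{\omega^{(\mathrm{b})}}$. On the right-hand side $T_{\hat 1}$ consumes $\vec{P}_{\rm o}(\hat 1)$ and produces $\cev{P}_{\rm e}(\hat 1)$, whose factors are in turn absorbed into the ignored slots of $T_{\hat 3}, \dots, T_{\widehat N}$; hence the output of $T_{\hat 1}$ never reaches $\cU_{\omega^{(\mathrm{b})}}$, and the surviving non-$1$ data is again the carried-through copy of $\cev{P}_{\rm e}(\hat 1)$. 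Consequently, restricting the given $(N+1)$-gon identity to its $1$-indexed output factors turns the left-hand side into the composite of the odd-indexed $T_{\widetilde j}$ and the right-hand side into the composite of the even-indexed $T_{\widetilde j}$, with $T_{\hat 1}$ dropped, which is exactly \eqref{N-gon_odd} for the reduced maps. This proves part $(1)$.

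For part $(2)$ I would run the same bookkeeping in reverse. Given a solution $\{T_{\widetilde j}\}$ of the $N$-gon equation, define for $1 \in K$ the extension $T_K$ to be $T_{\widetilde{k-1}}$ precomposed with the projection that forgets the last (non-$1$) factor of $\cU_{\vec{P}_{\rm o}(K)}$; these are degenerate by construction. Define $T_{\hat 1}\colon \cU_{\vec{P}_{\rm o}(\hat 1)} \to \cU_{\cev{P}_{\rm e}(\hat 1)}$ to be any (degenerate) map of the correct type. By the flow analysis above, the non-$1$-indexed outputs of both sides are the same carried-through copy of $\cev{P}_{\rm e}(\hat 1)$, determined purely by the combinatorics of $\mathrm{pr}_1$ and in particular independent of the choice of $T_{\hat 1}$, whose image is always reabsorbed; hence the two sides of the $(N+1)$-gon equation agree on the non-$1$ outputs automatically, while on the $1$-indexed outputs they reduce to the two sides of \eqref{N-gon_odd}, which coincide by hypothesis. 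I expect the main obstacle to be making this decoupling fully rigorous at the level of maps: one must track the first-position data through every application of rule $(3)$ and confirm that each transposition either descends to a legitimate rule-$(3)$ move for the reduced configuration (when it exchanges two $1$-indexed factors) or merely shifts a non-$1$ spectator aside, so that the combinatorial projection $\mathrm{pr}_1$ genuinely lifts to the claimed identity of maps.
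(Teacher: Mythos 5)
Your proposal follows the same route as the paper's proof, and its load-bearing steps are all correct: for odd $N$, the unique non-$1$-indexed factor in the domain of $T_K$, $1\in K$, is $\cU_{K\setminus\{1\}}$, sitting at the last (odd) position of \smash{$\vec{P}(K)$}, so the degeneracy hypothesis lets $T_K$ descend to $T_{K'}$; the codomains of these maps are entirely $1$-indexed; the outputs of $T_{\hat 1}$ land only in the ignored slots of $T_{\hat 3},\ldots,T_{\widehat{N}}$ on the right-hand side, so $T_{\hat 1}$ can be dropped; and the shift $k\mapsto k-1$ turns the truncated equation into the odd $N$-gon equation \eqref{N-gon_odd}. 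Part (2) as a reversal of these steps, with $T_{\hat 1}$ chosen freely, is also exactly the paper's argument (note that $T_{\hat 1}$ itself need not be degenerate, as your parenthetical suggests; only the maps $T_K$ with $1\in K$ are).

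There is, however, one factually wrong step in your flow analysis: the ``carried-through copy of \smash{$\cev{P}_{\rm e}(\hat 1)$}'' does not exist. A set $J=\widehat{1k}$ occupies position $N+2-k$ in \smash{$\vec{P}(\hat 1)$} and position $N$ in \smash{$\vec{P}(\hat k)$}; membership in $\alpha^{\rm (b)}$ requires an odd position in both packets containing $J$, while membership in $\omega^{\rm (b)}$ requires an even position in both. Since $N$ is odd, the non-$1$-indexed part of $\alpha^{\rm (b)}$ is exactly $P_{\rm o}(\hat 1)$ (the $k$ even), and the non-$1$-indexed part of $\omega^{\rm (b)}$ is \emph{empty}: the odd position $N$ in \smash{$\vec{P}(\hat k)$} kills the $\omega$-condition for every $J\not\ni 1$. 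Concretely, for the hexagon ($N=5$) both sides act on $\cU_{1234}\times\cU_{1245}\times\cU_{1256}\times\cU_{2345}\times\cU_{2356}\times\cU_{3456}$ and map to $\cU_{1456}\times\cU_{1346}\times\cU_{1236}$; the factors $\cU_{2456}$, $\cU_{2346}$ indexed by $P_{\rm e}(\hat 1)$ appear in neither, arising only as intermediates on the right-hand side, produced by $T_{\hat 1}$ and immediately absorbed by the ignored slots of $T_{\hat 3}$, $T_{\hat 5}$. Consequently ``restricting to the $1$-indexed outputs'' in your part (1) is no restriction at all, and the agreement ``on the non-$1$ outputs'' that your part (2) verifies is a phantom condition, vacuously satisfied. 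The error is not fatal -- deleting the spectator claim leaves your argument intact and identical to the paper's, indeed simpler than you feared -- but in a proof whose entire substance is this bookkeeping, the miscount of $\alpha^{\rm (b)}$ and $\omega^{\rm (b)}$ must be corrected.
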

\begin{proof}
(1) If $N$ is odd, then $ K' = K \setminus \{1\}$ is the last element of $\vec{P}_{\rm o}(K)$. If
\[
 T_K \colon\ \cU_{\vec{P}'_{\rm o}(K)} \times \cU_{K \setminus \{1\}} \rightarrow \cU_{\cev{P}_{\rm e}(K)}
\]
does not depend on the last component of its domain, it induces a map
\[
 T_{K'} \colon\ \cU_{\vec{P}_{\rm o}(K')} \rightarrow \cU_{\cev{P}_{\rm e}(K')},
\]
with \smash{$\cU_J := \cU_{\{1\} \cup J}$} for $J \in P(K')$.
Since \smash{$T_{\hat{1}} = T_{\{2,3,\ldots,N+1\}}\colon \cU_{\vec{P}_{\rm o}(\{2,3,\ldots, N+1\})} \rightarrow \cU_{\cev{P}_{\rm e}(\{2,3,\ldots, N+1\})}$} maps to spaces that are, as a consequence of our assumption, disregarded by all other maps,
it can be dropped from the $(N+1)$-gon equation
\[
 T_{\hat{2}} T_{\hat{4}} \cdots T_{\widehat{N-1}} T_{\widehat{N+1}} = T_{\widehat{N}} T_{\widehat{N-2}} \cdots T_{\hat{1}},
\]
which thus reduces to
\[
 T_{\hat{2}} T_{\hat{4}} \cdots T_{\widehat{N-1}} T_{\widehat{N+1}} = T_{\widehat{N}} T_{\widehat{N-2}} \cdots T_{\hat{3}},
\]
where we can now regard the indices as complementary in $[N+1] \setminus \{1\}$.
By a shift in the numbering, this is turned into the standard form of the odd $N$-gon equation,
\[
 T_{\hat{1}} T_{\hat{3}} \cdots T_{\widehat{N-2}} T_{\widehat{N}} = T_{\widehat{N-1}} T_{\widehat{N-3}} \cdots T_{\hat{2}},
\]
where the indices are complementary in $[N]$.

(2) Applying a shift in the numbering of maps, constituting a solution of the $N$-gon equation, it is given by
\smash{$\big\{T_{K'} | K' \in {\{2,\ldots, N+1 \} \choose N-1 } \big\}$}. Associating $K=\{1\} \cup K'$ with $K'$, we trivially extend
the map~$T_{K'}$ to a map $T_K$ of the form given above, by introducing a set
$\cU_{K \setminus \{1\}}$. Choosing furthermore a~map \smash{$T_{\{2,3,\ldots, N+1\}} \colon \cU_{\vec{P}_{\rm o}(\{2,3,\ldots, N+1\})} \rightarrow \cU_{\cev{P}_{\rm e}(\{2,3,\ldots, N+1\})}$}, with sets $\cU_J$, $J \in P(\{2,3,\ldots, N+1\})$, we obtain a solution of the $(N+1)$-gon equation. This essentially reverses the steps taken in the proof of (1).
\end{proof}

\begin{Theorem}
\label{thm:polygon->evenpolygon_red1}
Let $N \in \bbN$ be even, $N>2$. Let $T_K$, $K \in {[N+1] \choose N}$, be maps solving the $(N+1)$-gon equation.
For $K$ with $1 \in K$ let $T_{K'}$, $K' = K \setminus \{1\}$, be obtained from $T_K$ by deleting the first
component of its codomain. Then \smash{$\bigl\{T_{K'} | K' \in {\{2,\ldots, N+1 \} \choose N-1 } \bigr\}$}
solve the $N$-gon equation.
\end{Theorem}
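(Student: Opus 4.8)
The plan is to mirror the proof of Theorem~\ref{thm:polygon->oddpolygon_red1}(1), interchanging the roles of domain and codomain. The point is that for even $N$ the $(N+1)$-gon equation is the \emph{odd} one, \eqref{N-gon_odd} with $N$ replaced by $N+1$, so the unique map with $1 \notin K$, namely $T_{\hat 1} = T_{\{2,\ldots,N+1\}}$, now sits in the leftmost (last-applied) slot of the left-hand side rather than on the right. Crucially, no degeneracy hypothesis on $T_K$ is needed here: deleting a codomain factor is simply post-composition with the projection forgetting that factor, and is always well defined, whereas in the odd case one had to \emph{assume} independence of a domain factor.

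First I would locate $K' = K \setminus \{1\}$ inside the packet $P(K)$. Writing $K = (k_1,\ldots,k_N)$ with $k_1 = 1$, the lexicographically ordered packet runs from $K \setminus \{k_N\}$ to $K \setminus \{k_1\}$, so $K'$ occupies the last, $N$-th position; as $N$ is even this is an even position, whence $K' \in P_{\rm e}(K)$ appears as the \emph{first} entry of $\cev{P}_{\rm e}(K)$. Thus the codomain $\cU_{\cev{P}_{\rm e}(K)}$ of $T_K$ begins with the factor $\cU_{K'}$, while all remaining codomain factors and all domain factors $\cU_{\vec{P}_{\rm o}(K)}$ are indexed by subsets containing $1$. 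A short position count — the domain keeps all $N/2$ of its factors, the codomain keeps $N/2 - 1$ — together with the order-preserving bijection $J \mapsto \{1\} \cup J$ between $P(K')$ and $\{J \in P(K) : 1 \in J\}$, shows that deleting the leading factor produces exactly a map $T_{K'} \colon \cU_{\vec{P}_{\rm o}(K')} \to \cU_{\cev{P}_{\rm e}(K')}$ under the identification $\cU_J := \cU_{\{1\} \cup J}$.

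The heart of the argument is a decoupling of the sets $\cU_J$ into a track with $1 \in J$ and a track with $1 \notin J$. Every retained map $T_K$ (with $1 \in K$) has its entire domain in the first track and emits a single second-track factor, precisely the $\cU_{K'}$ we delete; by contrast $T_{\hat 1}$ acts wholly within the second track. Hence no retained map ever reads a second-track factor, so the first-track outputs of either side of \eqref{N-gon_odd} (for $N+1$) are computed by the retained maps alone, independently of $T_{\hat 1}$ and of any second-track value. Projecting both sides onto their first-track codomain factors therefore deletes the leading output of each retained $T_K$, removes $T_{\hat 1}$ from the left-hand side entirely, and leaves an identity among the truncated maps $T_{K'}$. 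Dropping $T_{\hat 1}$ from \eqref{N-gon_odd} (with $N \to N+1$) and shifting the index set $\{2,\ldots,N+1\} \to [N]$ by subtracting $1$ turns this identity into \eqref{N-gon_even}, the even $N$-gon equation.

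I expect the main obstacle to be making the decoupling fully rigorous in the presence of the reordering moves of rule~(3): one must check that throughout the composition each $T_K$ indeed finds its first-track arguments at consecutive positions, so that the second-track factors may be carried along inertly and projected away at the end, and that the first-track part of the target $\omega^{\rm (b)}$ of the $(N+1)$-gon equation coincides with the $\omega^{\rm (b)}$ of the $N$-gon equation. Both are consequences of the first step together with the compatibility of the projection $\mathrm{pr}_1$ with the three-color decomposition, but the bookkeeping of positions must be tracked with care.
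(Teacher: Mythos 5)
Your proposal is correct and follows essentially the same route as the paper's proof: identify $K'=K\setminus\{1\}$ as the last element of $\vec{P}(K)$ and hence (since $N$ is even) the first entry of \smash{$\cev{P}_{\rm e}(K)$}, truncate the codomains via $\cU_J:=\cU_{\{1\}\cup J}$, observe that $T_{\hat{1}}$ --- whose domain and codomain lie entirely among the spaces $\cU_J$ with $1\notin J$, which are excluded from the truncated ranges of all other maps --- splits off from the odd $(N+1)$-gon equation, and renumber to reach the even $N$-gon equation. Your ``two-track'' decoupling merely spells out, in welcome extra detail, the paper's one-line justification for dropping $T_{\hat{1}}$.
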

\begin{proof}
Let $N$ be even and $K$ such that $1 \in K$. The packet $P(K)$ has $N$ elements and its last member
in the lexicographic order is $K \setminus \{1\}$. Hence, $K \setminus \{1\}$ is the first element of \smash{$\cev{P}_{\rm e}(K)$}.
Then
\[
 T_K \colon\ \cU_{\vec{P}_{\rm o}(K)} \rightarrow \cU_{K \setminus \{1\}} \times \cU_{\cev{P}'_{\rm e}(K)},
\]
where \smash{$\cev{P}'_{\rm e}(K)$} is \smash{$\cev{P}_{\rm e}(K)$} without the element $K' := K \setminus \{1\}$.
By disregarding the first component of its codomain, each $T_K$, $1 \in K$, induces a map
\[
 T_{K'} \colon\ \cU_{\vec{P}_{\rm o}(K')} \rightarrow \cU_{\cev{P}_{\rm e}(K')}, \qquad K' \in { \{2,\ldots N+1 \} \choose N-1 },
\]
where, for $J \in P(K')$, we set again $\cU_J := \cU_{\{1\} \cup J}$.
Since the domain of the remaining map $T_{\hat{1}}$ only involves spaces that are excluded from the
range of all other maps $T_K$, it splits off from the $(N+1)$-gon equation
\[
 T_{\hat{1}} T_{\hat{3}} \cdots T_{\widehat{N-1}} T_{\widehat{N+1}} = T_{\widehat{N}} T_{\widehat{N-2}} \cdots T_{\hat{2}},
\]
which thus reduces to
\[
 T_{\hat{3}} \cdots T_{\widehat{N-1}} T_{\widehat{N+1}} = T_{\widehat{N}} T_{\widehat{N-2}} \cdots T_{\hat{2}},
\]
where the complementary indices now refer to $[N+1] \setminus \{1\} = \{2,\ldots,N+1\}$.
By renaming $T_{\hat{k}}$ to $T_{\widehat{k-1}}$ in the last equation, it reads
\[
 T_{\hat{2}} \cdots T_{\widehat{N-2}} T_{\widehat{N}} = T_{\widehat{N-1}} T_{\widehat{N-3}} \cdots T_{\hat{1}},
\]
where now the indices are complementary in $[N]$, so that we have the standard form of the (even) $N$-gon equation.
\end{proof}

\begin{Remark}
The last result means that, if \smash{$\big\{T_K | K \in {[N+1] \choose N} \big\}$} is a solution of an odd $(N+1)$-gon equation,
then each map $T_K$ with $1 \in K$ has the form
\[
 T_K = S_{K'} \times T_{K'}, \qquad K' := K \setminus \{1\},
\]
with a map \smash{$S_{K'} \colon \cU_{\vec{P}_{\rm o}(K)} \rightarrow \cU_{K'}$}, and \smash{$\big\{ T_{K'} | K' \in { \{ 2,3, \ldots, N+1 \} \choose N-1} \big\}$} solve the even $N$-gon equation.
Each solution of the odd $(N+1)$-gon equation is thus an extension, of the above form, of a~solution of the
(even) $N$-gon equation.
\end{Remark}

\begin{Example}
The pentagon equation $T_{\hat{1}} T_{\hat{3}} T_{\hat{5}} = T_{\hat{4}} T_{\hat{2}}$ involves
the maps $T_{\hat{1}} \colon \cU_{234} \times \cU_{245} \rightarrow \cU_{345} \times \cU_{235}$,
$T_{\hat{2}} \colon \cU_{134} \times \cU_{145} \rightarrow \cU_{345} \times \cU_{135}$,
 $T_{\hat{3}} \colon\cU_{124} \times \cU_{145} \rightarrow \cU_{245} \times \cU_{125}$,
$T_{\hat{4}} \colon \cU_{123} \times \cU_{135} \rightarrow \cU_{235} \times \cU_{125}$,
 $T_{\hat{5}} \colon \cU_{123} \times \cU_{134} \rightarrow \cU_{234} \times \cU_{124}$.
We thus have
\[
 T_{\hat{4}} T_{\hat{2}} \colon\ \cU_{123} \times \cU_{134} \times \cU_{145} \rightarrow
(\cU_{345} \times \cU_{235}) \times \cU_{125}
\]
 and
\[
 T_{\hat{3}} T_{\hat{5}}\colon\ \cU_{123} \times \cU_{134} \times \cU_{145} \rightarrow
(\cU_{234} \times \cU_{245}) \times \cU_{125}.
\]
Recalling that $T_{\hat{1}}$ does not involve any of
the spaces $\cU_{1ij}$, $1<i<j \leq 5$, the pentagon equation implies that the maps, obtained from $\{ T_{\hat{k}} \}$
by deleting the first component of their codomain, satisfy the tetragon equation (with indices shifted by $1$).
\end{Example}

Let us now turn to the \emph{dual} $(N+1)$-gon equation and consider the subset of maps
$\tT_K \colon \cU_{\vec{P}_{\rm e}(K)}\allowbreak \rightarrow \cU_{\cev{P}_{\rm o}(K)}$ with $1 \in K$.

\begin{Theorem}
\label{thm:dualpolygon->evendualpolygon_red1}
Let $N \in \bbN$ be even, $N>2$.
\begin{itemize}\itemsep=0pt
\item[$(1)$] Let $\tT_K$, \smash{$K \in {[N+1] \choose N}$}, be maps solving the dual $(N+1)$-gon equation.
For $K$ with $1 \in K$, let~$\tT_K$ not depend on the last component of its domain. Let
$\tT_{K'}$, $K' = K \setminus \{1\}$, be obtained from $\tT_K$ by excluding the last
component of its domain. Then \smash{$\big\{\tT_{K'} | K' \in {\{2,\ldots, N+1 \} \choose N-1 } \big\}$} solve the dual $N$-gon equation.
\item[$(2)$] Each solution of the dual $N$-gon equation can be extended to a degenerate solution of the dual $(N+1)$-gon equation.
 \end{itemize}
\end{Theorem}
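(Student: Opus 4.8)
The plan is to follow the template of the proofs of Theorems~\ref{thm:polygon->oddpolygon_red1} and~\ref{thm:polygon->evenpolygon_red1}, but now for the \emph{dual} equation with even $N$, so that $N+1$ is odd and the pertinent form of the dual $(N+1)$-gon equation is the odd one,
\begin{gather*}
 \tT_{\hat{2}} \tT_{\hat{4}} \cdots \tT_{\widehat{N}} = \tT_{\widehat{N+1}} \tT_{\widehat{N-1}} \cdots \tT_{\hat{3}} \tT_{\hat{1}},
\end{gather*}
with indices complementary in $[N+1]$. The only map whose index does not contain $1$ is $\tT_{\hat{1}}$, sitting at the rightmost (first-applied) position on the right-hand side, and the goal is to show that, under the stated degeneracy hypothesis, it splits off and leaves the dual $N$-gon equation~\eqref{dual_N-gon_even} for the induced maps.

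First I would pin down the combinatorial placement of $K' := K \setminus \{1\}$ inside the ordered packet. Writing $K = (k_1,\ldots,k_N)$ with $k_1 = 1$, the lexicographic order is $\vec{P}(K) = (K \setminus \{k_N\}, K \setminus \{k_{N-1}\}, \ldots, K \setminus \{k_1\})$, so $K' = K \setminus \{k_1\}$ occupies the last, $N$-th position; since $N$ is even, $K' \in P_{\rm e}(K)$ and is the final entry of $\vec{P}_{\rm e}(K)$, hence the last component of the domain of $\tT_K$. Deleting it therefore yields a map $\tT_{K'} \colon \cU_{\vec{P}_{\rm e}(K')} \rightarrow \cU_{\cev{P}_{\rm o}(K')}$, where for $J \in P(K')$ one sets $\cU_J := \cU_{\{1\} \cup J}$. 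To confirm that domain and codomain really are those of a dual $N$-gon map, I would use the order-preserving bijection $J' \mapsto \{1\} \cup J'$ between $P(K')$ and the $1$-containing members of $P(K)$: it identifies $\vec{P}(K')$ with the first $N-1$ entries of $\vec{P}(K)$ position by position, so parities are preserved, and a short count (using that $|P(K')| = N-1$ is odd, whence its largest odd position equals $N-1$) shows that $\vec{P}_{\rm e}(K')$ and $\cev{P}_{\rm o}(K')$ are carried exactly onto the surviving domain and the full codomain of $\tT_K$.

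The heart of the argument, and the step I expect to be the main obstacle, is justifying that $\tT_{\hat{1}}$ may be dropped. Its index $\hat{1} = \{2,\ldots,N+1\}$ contains no $1$, so every element of $P(\hat{1})$ is $1$-free and the whole codomain $\cU_{\cev{P}_{\rm o}(\hat{1})}$ consists of $1$-free spaces. The key combinatorial fact is that, for any $K$ with $1 \in K$, the unique $1$-free member of $P(K)$ is $K \setminus \{1\}$, which for even $N$ is precisely the last, ignored entry of $\vec{P}_{\rm e}(K)$. Hence every $1$-free output of $\tT_{\hat{1}}$ can only be fed into such an ignored last-domain slot of a subsequent map, and, because each remaining map has an entirely $1$-containing codomain, no $1$-free space is ever regenerated. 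Consequently the output of $\tT_{\hat{1}}$ never influences any $1$-containing component of the final result, exactly as in the mechanism used for $T_{\hat{1}}$ in Theorem~\ref{thm:polygon->oddpolygon_red1}. The delicate point is to verify this \emph{throughout} the composition, so that dropping $\tT_{\hat{1}}$ leaves a genuinely well-defined equation on the $1$-containing spaces.

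With $\tT_{\hat{1}}$ removed, the equation reads $\tT_{\hat{2}} \tT_{\hat{4}} \cdots \tT_{\widehat{N}} = \tT_{\widehat{N+1}} \tT_{\widehat{N-1}} \cdots \tT_{\hat{3}}$ with indices complementary in $\{2,\ldots,N+1\}$; the shift $\tT_{\hat{k}} \mapsto \tT_{\widehat{k-1}}$ then gives $\tT_{\hat{1}} \tT_{\hat{3}} \cdots \tT_{\widehat{N-1}} = \tT_{\widehat{N}} \tT_{\widehat{N-2}} \cdots \tT_{\hat{2}}$, which is the dual $N$-gon equation~\eqref{dual_N-gon_even}, proving~(1). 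For~(2) I would reverse these steps: given a solution $\big\{\tT_{K'}\big\}$ of the dual $N$-gon equation, shift the numbering, set $K = \{1\} \cup K'$, extend each $\tT_{K'}$ to a $\tT_K$ that ignores an additional last-domain factor $\cU_{K \setminus \{1\}}$, and choose an arbitrary map $\tT_{\hat{1}}$ on the $1$-free packet $P(\{2,\ldots,N+1\})$; the resulting degenerate family solves the dual $(N+1)$-gon equation.
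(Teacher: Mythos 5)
Your proof is correct and follows essentially the same route as the paper's: you identify $K \setminus \{1\}$ as the last entry of \smash{$\vec{P}_{\rm e}(K)$} for even $N$, split off $\tT_{\hat{1}}$ (the first-applied map on the right-hand side of the odd-form dual $(N+1)$-gon equation) by the same disregarded-output mechanism the paper borrows from the proof of Theorem~\ref{thm:polygon->oddpolygon_red1}, and renumber to reach \eqref{dual_N-gon_even}, with part~(2) obtained by reversing these steps exactly as the paper indicates. Your explicit verification that all remaining maps have entirely $1$-containing codomains, so no $1$-free space is ever regenerated along the composition, is a detail the paper leaves implicit in its appeal to the earlier proof.
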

\begin{proof}
(1) If $N$ is even, then $K' = K\setminus \{1\}$ is the last element of \smash{$\vec{P}_{\rm e}(K)$}, so that
\[
 \tT_K \colon\ \cU_{\vec{P}'_{\rm e}(K)} \times \cU_{K \setminus \{1\}} \rightarrow \cU_{\cev{P}_{\rm o}(K)},
\]
where \smash{$\vec{P}'_{\rm e}(K)$} is \smash{$\vec{P}_{\rm e}(K)$} without the last element. If $\tT_K$ does not depend on the last
component of its domain, it induces a map
\[
 \tT_{K'} \colon\ \cU_{\vec{P}_{\rm e}(K')} \rightarrow \cU_{\cev{P}_{\rm o}(K')},
 \qquad K' \in {\{2,\ldots, N+1 \} \choose N-1},
\]
where we set $\cU_J := \cU_{\{1\} \cup J}$. With the same argument as in the proof of
Theorem~\ref{thm:polygon->oddpolygon_red1}, the dual~${(N+1)}$-gon equation
\[
 \tT_{\hat{2}} \tT_{\hat{4}} \cdots \tT_{\widehat{N}}
 = \tT_{\widehat{N+1}} \tT_{\widehat{N-1}} \cdots \tT_{\hat{3}} \tT_{\hat{1}}
\]
reduces to
\[
 \tT_{\hat{2}} \tT_{\hat{4}} \cdots \tT_{\widehat{N}}
 = \tT_{\widehat{N+1}} \tT_{\widehat{N-1}} \cdots \tT_{\hat{3}},
\]
where now the complementary indices refer to $[N+1] \setminus \{1\}$. A shift in the numbering achieves
the standard form \eqref{dual_N-gon_even} of the dual even $N$-gon equation.

(2) The proof is analogous to that of part~(2) of Theorem~\ref{thm:polygon->oddpolygon_red1}.
\end{proof}

\begin{Theorem}
\label{thm:dualpolygon->odddualpolygon_red1}
Let $N \in \bbN$ be odd, $N>2$. Let $\tT_K$, \smash{$K \in {[N+1] \choose N}$}, be maps solving the dual $(N+1)$-gon equation.
For $K$ with $1 \in K$ let $\tT_{K'}$, $K' = K \setminus \{1\}$, be obtained from $\tT_K$ by excluding the first
component of its codomain. Then \smash{$\big\{\tT_{K'} | K' \in {\{2,\ldots, N+1 \} \choose N-1 } \big\}$} solve the dual $N$-gon equation.
\end{Theorem}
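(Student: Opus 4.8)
The plan is to mirror the proof of Theorem~\ref{thm:polygon->evenpolygon_red1}, of which the present statement is the precise dual: there one deletes the first codomain component of the regular maps in the even case, and here one deletes the first codomain component of the dual maps in the odd case, with the roles of $P_{\rm o}$ and $P_{\rm e}$ interchanged. First I would fix $K \in {[N+1] \choose N}$ with $1 \in K$ and locate $K' := K \setminus \{1\}$ inside the packet $P(K)$, which has $N$ (odd) elements. Since deleting a larger element of $K$ produces a lexicographically smaller subset, $K'$ is the lexicographically largest element of $P(K)$ and hence occupies position $N$. As $N$ is odd, this is an odd position, so $K' \in P_{\rm o}(K)$ and, being lex-largest, $K'$ is the first entry of $\cev{P}_{\rm o}(K)$. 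Thus
\[
 \tT_K \colon\ \cU_{\vec{P}_{\rm e}(K)} \rightarrow \cU_{K'} \times \cU_{\cev{P}'_{\rm o}(K)},
\]
where $\cev{P}'_{\rm o}(K)$ denotes $\cev{P}_{\rm o}(K)$ with its first entry removed, and deleting the first codomain factor defines $\tT_{K'}$.

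Next I would check that the relabeling $\cU_J := \cU_{\{1\} \cup J}$, $J \in P(K')$, turns $\tT_{K'}$ into a genuine dual $N$-gon map. Writing $K = \{k_1, \ldots, k_N\}$ with $k_1 < \cdots < k_N$ (so $k_1 = 1$), the members of $P(K)$ containing $1$ are exactly those at positions $1, \ldots, N-1$, and $J \mapsto J \setminus \{1\}$ is a position-preserving bijection from them onto $P(K')$: removing $1$ from the $j$-th element $K \setminus \{k_{N+1-j}\}$ gives $K' \setminus \{k_{N+1-j}\}$, which is precisely the $j$-th element of $P(K')$. Hence odd and even positions, together with lexicographic and reverse-lexicographic orderings, are all preserved, and $\tT_{K'} \colon \cU_{\vec{P}_{\rm e}(K')} \rightarrow \cU_{\cev{P}_{\rm o}(K')}$ for $K' \in {\{2,\ldots,N+1\} \choose N-1}$, as required.

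The decisive structural point comes next. Because $N$ is odd, $P_{\rm e}(K)$ only uses the positions $2, 4, \ldots, N-1$, all of which are $\le N-1$, so the domain $\vec{P}_{\rm e}(K)$ of every $\tT_K$ with $1 \in K$ consists solely of spaces $\cU_J$ with $1 \in J$. Consequently the deleted factors $\cU_{K'}$ (which satisfy $1 \notin K'$) are never consumed as inputs by any of the other maps with $1 \in K$, and the spaces with $1 \in J$ form a closed subsystem for the reduced maps. The single remaining map $\tT_{\hat 1}$, with $\hat 1 = \{2,\ldots,N+1\}$, has domain and codomain built entirely from spaces $\cU_J$ with $1 \notin J$; being disjoint from this subsystem it splits off, exactly as $T_{\hat 1}$ does in Theorem~\ref{thm:polygon->evenpolygon_red1}. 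Dropping $\tT_{\hat 1}$ from the dual even $(N+1)$-gon equation \eqref{dual_N-gon_even} (with $N$ replaced by $N+1$),
\[
 \tT_{\hat 1} \tT_{\hat 3} \cdots \tT_{\widehat{N}} = \tT_{\widehat{N+1}} \tT_{\widehat{N-1}} \cdots \tT_{\hat 2},
\]
and reading off the equality on the $1 \in J$ coordinates leaves
\[
 \tT_{\hat 3} \cdots \tT_{\widehat{N}} = \tT_{\widehat{N+1}} \tT_{\widehat{N-1}} \cdots \tT_{\hat 2},
\]
with indices complementary in $[N+1] \setminus \{1\}$. The shift $k \mapsto k-1$ identifying $\{2,\ldots,N+1\}$ with $[N]$ then turns this into
\[
 \tT_{\hat 2} \tT_{\hat 4} \cdots \tT_{\widehat{N-1}} = \tT_{\widehat{N}} \tT_{\widehat{N-2}} \cdots \tT_{\hat 3} \tT_{\hat 1},
\]
which is the standard dual odd $N$-gon equation for the reduced maps $\tT_{K'}$.

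The main obstacle is the rigorous justification of this ``splitting off''/projection step: one must argue that projecting the full equality of maps onto the $1 \in J$ output coordinates reproduces exactly the composition of the reduced maps $\tT_{K'}$, with $\tT_{\hat 1}$ acting as the identity on those coordinates. This rests on the closedness of the $1 \in J$ subsystem established above, but it also requires keeping careful track of the transposition (equivalence-relation) reorderings of rule~(3), so that moving a factor with $1 \in J$ past one with $1 \notin J$ does not corrupt the coordinate-wise identification on which the projection relies. Once this bookkeeping is in place, the reduced maps satisfy the displayed dual odd $N$-gon equation, completing the argument.
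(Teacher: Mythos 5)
Your proof is correct and takes essentially the same route as the paper's: identify $K'=K\setminus\{1\}$ as the first element of \smash{$\cev{P}_{\rm o}(K)$}, delete that codomain factor to induce \smash{$\tT_{K'}\colon \cU_{\vec{P}_{\rm e}(K')}\rightarrow \cU_{\cev{P}_{\rm o}(K')}$}, split off $\tT_{\hat 1}$, and shift indices to reach the dual odd $N$-gon equation. The extra details you supply (the position-preserving bijection $J\mapsto J\setminus\{1\}$ and the closedness of the $1\in J$ subsystem) are precisely the bookkeeping the paper compresses into ``using an argument as in the proofs of the preceding theorems.''
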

\begin{proof}
If $N$ is odd and $1 \in K$, then $K'$ is the first element of $\cev{P}_{\rm o}(K)$, so that
 \[
 \tT_K \colon\ \cU_{\vec{P}_{\rm e}(K)} \rightarrow \cU_{K \setminus \{1\}} \times \cU_{\cev{P}'_{\rm o}(K)},
\]
where $\cev{P}'_{\rm o}(K)$ is $\cev{P}_{\rm o}(K)$ without the first element. Now each $\tT_K$, $1 \in K$, induces a map
\[
 \tT_{K'} \colon\ \cU_{\vec{P}_{\rm e}(K')} \rightarrow \cU_{\cev{P}_{\rm o}(K')},
 \qquad K' \in {\{2,\ldots, N+1 \} \choose N-1}.
\]
The dual $(N+1)$-gon equation then reduces to the dual odd $N$-gon equation, using an argument as
in the proofs of the preceding theorems.
\end{proof}

\subsection[Reductions induced by pr\_N+1]{Reductions induced by $\boldsymbol{\mathrm{pr}_{N+1}}$}
The projection $T(N+1,N-1) \rightarrow \tilde{T}(N,N-2)$ induces a reduction of the $(N+1)$-gon to
the dual $N$-gon equation.
It is essentially obtained by dropping the map $T_K$ with $N+1 \notin K$. But we have to arrange that the remaining maps $T_K$, $N+1 \in K$, are reduced to maps between (products of) spaces $U_J$ with $N+1 \in J \in P(K)$.

\begin{Theorem}
\label{thm:polygon->dualpolygon_red2}
Let $N \in \bbN$, $N>2$.
\begin{itemize}\itemsep=0pt
\item[$(1)$] Let $T_K$, $K \in {[N+1] \choose N}$, be maps solving the $(N+1)$-gon equation.
For $K$ with $N+1 \in K$, let $T_K$ not depend on the first component of its domain.
Set $K' = K \setminus \{N+1\}$ and let $\tT_{K'}$ be given by $T_K$ with the first component
of its domain excluded. Then \smash{$\big\{ \tT_{K'} | K' \in {[N] \choose N-1} \big\}$} solve the dual $N$-gon equation.
\item[$(2)$] Each solution of the dual $N$-gon equation can be extended to a degenerate solution of the~${(N+1)}$-gon equation.
 \end{itemize}
\end{Theorem}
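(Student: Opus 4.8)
The plan is to mirror the structure of the $\mathrm{pr}_1$-reductions (Theorems~\ref{thm:polygon->oddpolygon_red1}--\ref{thm:dualpolygon->odddualpolygon_red1}), with the single essential change that now the \emph{largest} element $N+1$ is deleted rather than the smallest. First I would locate $K' = K \setminus \{N+1\}$ inside the lexicographically ordered packet $\vec{P}(K)$. Since $N+1$ is the maximum of $K$, removing it produces the lexicographically smallest member of $P(K)$; hence $K'$ sits at position $1$ of $\vec{P}(K)$, an odd position, so $K'$ is the \emph{first} element of $\vec{P}_{\rm o}(K)$. Writing the map accordingly as $T_K\colon \cU_{K'} \times \cU_{\vec{P}'_{\rm o}(K)} \to \cU_{\cev{P}_{\rm e}(K)}$, the hypothesis that $T_K$ is independent of its first domain component produces an induced map $\tT_{K'}\colon \cU_{\vec{P}'_{\rm o}(K)} \to \cU_{\cev{P}_{\rm e}(K)}$.

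The heart of the argument, and the step I expect to be the main obstacle, is to verify that these induced maps carry dual $N$-gon data, i.e.\ that under the identification $\cU_J := \cU_{\{N+1\} \cup J}$, $J \in P(K')$, one has $\vec{P}'_{\rm o}(K) \leftrightarrow \vec{P}_{\rm e}(K')$ and $\cev{P}_{\rm e}(K) \leftrightarrow \cev{P}_{\rm o}(K')$, so that $\tT_{K'}\colon \cU_{\vec{P}_{\rm e}(K')} \to \cU_{\cev{P}_{\rm o}(K')}$ has exactly the shape of a dual $N$-gon map. The clean way to see this is to note that deleting the lex-first element from $\vec{P}(K)$ shifts the position of every surviving packet element down by exactly one: if $k_i \neq N+1$, then $K \setminus \{k_i\}$ occupies position $N-i+1$ in $\vec{P}(K)$ while $K' \setminus \{k_i\}$ occupies position $N-i$ in $\vec{P}(K')$. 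This uniform shift exchanges odd and even positions, which is precisely what turns a half-packet of type $\mathrm{o}$ into one of type $\mathrm{e}$ and vice versa, and hence converts $T$-type into $\tT$-type data. Crucially, this parity interchange is insensitive to the parity of $N$ itself, which is why — unlike the $\mathrm{pr}_1$ reductions — a single statement covers all $N>2$; I would nonetheless read off the two parities separately, as above, to confirm that both the domain and the codomain transform consistently.

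At the level of the equation, $T_{\widehat{N+1}} = T_{[N]}$ is the unique map with $N+1 \notin K$, and its codomain $\cU_{\cev{P}_{\rm e}([N])}$ consists entirely of spaces whose index omits $N+1$. Among the remaining maps $T_K$ ($N+1 \in K$), the only $(N+1)$-free space occurring is the first domain component $\cU_{K'}$, on which $T_K$ does not depend by hypothesis, while the whole codomain $\cU_{\cev{P}_{\rm e}(K)}$ is $(N+1)$-containing since $K' \in P_{\rm o}(K)$. Thus the output of $T_{\widehat{N+1}}$ is disregarded by every other map, and by the splitting-off argument already used in Theorem~\ref{thm:polygon->oddpolygon_red1} it may be dropped from the $(N+1)$-gon equation. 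Reading the surviving complementary indices as complementary in $[N]$ and applying the identification, the surviving relation becomes the dual $N$-gon equation itself — equation~\eqref{dual_N-gon_even} for even $N$, and the dual odd $N$-gon relation for odd $N$ — with no further shift in the numbering required, in contrast to the even $\mathrm{pr}_1$ case.

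For part~(2), I would simply reverse the construction: given a solution $\big\{\tT_{K'} \mid K' \in {[N] \choose N-1}\big\}$ of the dual $N$-gon equation, associate to each $K'$ the index $K = \{N+1\} \cup K'$, restore $N+1$ in all subset labels via the inverse of the identification above, and extend $\tT_{K'}$ to a map $T_K$ by prepending a dummy first domain factor $\cU_{K'}$ on which $T_K$ is declared independent. Choosing in addition any map $T_{\widehat{N+1}}\colon \cU_{\vec{P}_{\rm o}([N])} \to \cU_{\cev{P}_{\rm e}([N])}$ reassembles a degenerate solution of the $(N+1)$-gon equation, exactly reversing the steps of part~(1).
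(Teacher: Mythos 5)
Your proposal is correct and follows essentially the same route as the paper's proof: identifying $K' = K \setminus \{N+1\}$ as the lexicographically first element of $P(K)$ (hence first in $\vec{P}_{\rm o}(K)$), dropping the first domain component, observing that $T_{\widehat{N+1}}$ maps only to spaces disregarded by all other maps so it splits off, and reading the surviving indices as complementary in $[N]$, with part~(2) obtained by reversing these steps. Your explicit verification that deleting the lex-first packet element shifts all positions by one and thereby exchanges the half-packet types (turning $T$-data into $\tT$-data, uniformly in the parity of $N$) is the parity bookkeeping the paper leaves implicit, and it is carried out correctly.
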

\begin{proof}
(1) Let $K$ be such that $N+1 \in K$. Then $K' := K \setminus \{N+1\}$ is the first element of~$P(K)$ in lexicographic order,
and thus also of $P_{\rm o}(K)$. Hence, we have maps
\[
 T_K \colon\ \cU_{K \setminus \{N+1\}} \times \cU_{\vec{P}'_{\rm o}(K)} \rightarrow \cU_{\cev{P}_{\rm e}(K)},
 \qquad K \in {[N+1] \choose N}, \qquad N+1 \in K,
\]
where $P'_{\rm o}(K)$ is $P_{\rm o}(K)$ without its first element (in lexicographic order). If these maps do not
depend on the first component of their domain, they project to maps
\[
 \tT_{K'}\colon\ \cU_{\vec{P}_{\rm e}(K')} \rightarrow \cU_{\cev{P}_{\rm o}(K')}, \qquad K' \in {[N] \choose N-1},
\]
where we set \smash{$\cU_J := \cU_{J \cup \{N+1\}}$} for $J \in P(K')$. Under this assumption,
\[T_{\widehat{N+1}} \colon\
\cU_{\vec{P}_{\rm o}(\{1,2,\ldots, N\})} \rightarrow \cU_{\cev{P}_{\rm e}(\{1,2,\ldots, N\})}
\]
 maps to spaces that
are disregarded by all other $T_K$ appearing in the $(N+1)$-gon equation.

For odd $N$, the $(N+1)$-gon equation
\[
 T_{\hat{2}} T_{\hat{4}} \cdots T_{\widehat{N-1}} T_{\widehat{N+1}}
 = T_{\widehat{N}} T_{\widehat{N-2}} \cdots T_{\hat{1}}
\]
thus reduces to
\[
 \tT_{\hat{2}} \tT_{\hat{4}} \cdots \tT_{\widehat{N-1}}
 = \tT_{\widehat{N}} \tT_{\widehat{N-2}} \cdots \tT_{\hat{3}} \tT_{\hat{1}},
\]
where now the indices are complementary in $[N+1] \setminus \{N+1\} = [N]$.
This is the dual $N$-gon equation for odd $N$.

For even $N$, the $(N+1)$-gon equation
\[
 T_{\hat{1}} T_{\hat{3}} \cdots T_{\widehat{N-1}} T_{\widehat{N+1}}
 = T_{\widehat{N}} T_{\widehat{N-2}} \cdots T_{\hat{2}}
\]
reduces to
\[
 \tT_{\hat{1}} \tT_{\hat{3}} \cdots \tT_{\widehat{N-1}} = \tT_{\widehat{N}} \tT_{\widehat{N-2}} \cdots \tT_{\hat{2}},
\]
where again indices are now complementary in $[N]$. This is the dual $N$-gon equation for even~$N$.

(2) The proof is analogous to that in preceding theorems.
\end{proof}

Let us now turn to the corresponding reduction of the \emph{dual} $(N+1)$-gon equation and consider the maps
\smash{$\tT_K \colon\cU_{\vec{P}_{\rm e}(K)} \rightarrow \cU_{\cev{P}_{\rm o}(K)}$} with $N+1 \in K$.

\begin{Theorem}
\label{thm:dualpolygon->polygon_red2}
Let $N \in \bbN$, $N>2$. Let $\tT_K$, \smash{$K \in {[N+1] \choose N}$}, be maps solving the dual $(N+1)$-gon equation.
For $K$ with $N+1 \in K$, set $K' = K \setminus \{N+1\}$ and let $T_{K'}$ be given by $\tT_K$ with the last component
of its codomain excluded. Then \smash{$\big\{ T_{K'} | K' \in {[N] \choose N-1} \big\}$} solve the $N$-gon equation.
\end{Theorem}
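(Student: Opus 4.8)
The plan is to treat this statement as the codomain-side mirror of Theorem~\ref{thm:polygon->dualpolygon_red2}: there one deletes the first \emph{domain} factor of each $T_K$ (which forces a degeneracy hypothesis), whereas here one deletes the last \emph{codomain} factor of each $\tT_K$, and this requires no such hypothesis. First I would locate the factor $\cU_{K'}$ inside $\tT_K$. For $N+1\in K$, the element $K'=K\setminus\{N+1\}$ is the lexicographically first member of the packet $P(K)$, hence the first element of $P_{\rm o}(K)$ and therefore the \emph{last} element of $\cev{P}_{\rm o}(K)$, so that
\[
 \tT_K\colon\ \cU_{\vec{P}_{\rm e}(K)} \longrightarrow \cU_{\cev{P}'_{\rm o}(K)}\times\cU_{K'}.
\]
Deleting the last codomain factor yields $T_{K'}\colon \cU_{\vec{P}_{\rm e}(K)}\rightarrow\cU_{\cev{P}'_{\rm o}(K)}$. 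Setting $\cU_J:=\cU_{J\cup\{N+1\}}$ for $J\in P(K')$, the parity shift caused by deleting $N+1$ (position $i\geq 2$ in $\vec{P}(K)$ becomes position $i-1$ in $\vec{P}(K')$) identifies $\vec{P}_{\rm e}(K)$ with $\vec{P}_{\rm o}(K')$ and $\cev{P}'_{\rm o}(K)$ with $\cev{P}_{\rm e}(K')$, so $T_{K'}\colon \cU_{\vec{P}_{\rm o}(K')}\rightarrow\cU_{\cev{P}_{\rm e}(K')}$ has exactly the form demanded by the $N$-gon equation.

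Next I would partition the index sets $J\in{[N+1]\choose N-1}$ into those containing $N+1$ (\emph{retained}, which become the spaces of the $N$-gon equation) and those not (\emph{auxiliary}). The central structural observation is that a retained $\cU_J$ occurs only in maps $\tT_K$ with $N+1\in K$, whereas an auxiliary $\cU_J$ occurs in exactly two maps: in $\tT_{\widehat{N+1}}$, and as the deleted factor $\cU_{K'}$ of the unique map $\tT_K$ with $K=J\cup\{N+1\}$. In particular the domain and codomain of $\tT_{\widehat{N+1}}=\tT_{\{1,\dots,N\}}$ are entirely auxiliary, and no auxiliary space lies in the common domain $\cU_{\alpha^{\rm (r)}}$. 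I would then project both sides of the dual $(N+1)$-gon equation onto the retained factors of the codomain. Since $\tT_{\widehat{N+1}}$ is the only map without $N+1$ and acts only on auxiliary slots, it is the identity on every retained slot; as it is the outermost (last applied) factor on its side, it disappears under the projection. On the retained flow each $\tT_K$ then contributes precisely its retained part $T_{K'}$, because its auxiliary output $\cU_{K'}$ is consumed only by $\tT_{\widehat{N+1}}$ and never re-enters a retained map.

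Projecting onto retained factors thus turns the dual $(N+1)$-gon equation into an identity between two compositions of the $T_{K'}$, and it remains only to match indices. Writing $K=\hat k$ complementary in $[N+1]$ gives $K'=\hat k$ complementary in $[N]$, and discarding $\tT_{\widehat{N+1}}$ removes exactly the outermost factor on one side. For odd $N$ the dual even $(N+1)$-gon equation $\tT_{\hat1}\tT_{\hat3}\cdots\tT_{\hat N}=\tT_{\widehat{N+1}}\tT_{\widehat{N-1}}\cdots\tT_{\hat2}$ becomes $T_{\hat1}T_{\hat3}\cdots T_{\hat N}=T_{\widehat{N-1}}\cdots T_{\hat2}$, which is the odd $N$-gon equation~\eqref{N-gon_odd}; for even $N$ the dual odd $(N+1)$-gon equation reduces in the same way to the even $N$-gon equation~\eqref{N-gon_even}.

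The main obstacle I anticipate is the bookkeeping underlying the second paragraph: one must verify rigorously that projecting each retained map individually (deleting its $\cU_{K'}$ factor) and \emph{then} composing produces the same retained map as projecting the full composition. This rests entirely on the claim that auxiliary outputs never flow back into the retained part, that is, that the sole consumer of any auxiliary space is $\tT_{\widehat{N+1}}$. Pinning this down — together with checking, uniformly in the parity of $N$, the position assignments that make $K'$ the last factor of $\cev{P}_{\rm o}(K)$ — is where care is needed; once it is in place, the reduction is the codomain-side counterpart of Theorem~\ref{thm:polygon->dualpolygon_red2} and needs no degeneracy assumption, which is also why no ``extension'' part analogous to part~(2) of the earlier theorems appears here.
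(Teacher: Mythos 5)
Your proposal is correct and follows essentially the same route as the paper's proof: identify $K'=K\setminus\{N+1\}$ as the first element of \smash{$\vec{P}(K)$} and hence the last of \smash{$\cev{P}_{\rm o}(K)$}, use the induced parity shift to recognize $T_{K'}\colon \cU_{\vec{P}_{\rm o}(K')}\rightarrow\cU_{\cev{P}_{\rm e}(K')}$, drop $\tT_{\widehat{N+1}}$ because it interacts only with the disregarded (auxiliary) spaces, and reinterpret the indices as complementary in $[N]$. The paper compresses your second paragraph into the phrase ``with the kind of argument used in the proofs of preceding theorems''; your explicit retained/auxiliary bookkeeping, including the observation that $\tT_{\widehat{N+1}}$ is the sole consumer of auxiliary spaces and is applied last on its side, is precisely the detail that reference points to.
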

\begin{proof}
 Since $K' = K \setminus \{N+1\}$ is the first element of \smash{$\vec{P}(K)$}, and thus the last element of~\smash{$\cev{P}_{\rm o}(K)$},
\[
 \tT_K \colon\ \cU_{\vec{P}_{\rm e}(K)} \rightarrow \cU_{\cev{P}'_{\rm o}(K)} \times \cU_{K \setminus \{N+1\}},
\]
where $\cev{P}'_{\rm o}(K)$ is $\cev{P}_{\rm o}(K)$ without the last element. Hence, $\tT_K$ induces a map
\[
 T_{K'} \colon\ \cU_{\vec{P}_{\rm o}(K')} \rightarrow \cU_{\cev{P}_{\rm e}(K')},
\]
where we set again \smash{$\cU_J := \cU_{J \cup \{N+1\}}$} for $J \in P(K')$. With the kind of argument used in the
proofs of preceding theorems, the dual $(N+1)$-gon equation reduces to the $N$-gon equation.
\end{proof}

\begin{Remark}
The last result means that, if \smash{$\big\{\tT_K | K \in {[N+1] \choose N} \big\}$} is a solution of a dual $(N+1)$-gon equation,
then each map $\tT_K$ with $N+1 \in K$ has the form
\[
 \tT_K = T_{K'} \times S_{K'}, \qquad K' := K \setminus \{N+1\},
\]
with a map \smash{$S_{K'} \colon \cU_{\vec{P}_{\rm e}(K)} \rightarrow \cU_{K'}$}, and \smash{$\{ T_{K'} | K' \in { [N] \choose N-1} \}$} solve the $N$-gon equation.
Each solution of the dual $(N+1)$-gon equation is thus an extension, of the above form, of a solution of the~$N$-gon equation.
\end{Remark}

\section{Examples of polygon equations}
\label{sec:ExPEs}
In this section, we elaborate polygon equations up to the 8-gon equation.

\subsection{Trigon equation}
This is the equation
\begin{gather}
 T_{23} T_{12} = T_{13} \label{3gon_eq}
\end{gather}
for maps $T_{ij}\colon \cU_i \rightarrow \cU_j$, $i<j $.
On the left-hand side of \eqref{3gon_eq}, we mean the composition of two maps.
If the sets are the same, $\cU_i = \cU$, $i=1,2,3$, and if there is only a single map $T$, then
the trigon equation means that $T$ is idempotent.

\subsection{Tetragon equation}
For $i,j=1,2,3,4$, $i<j$, let a map $L_{ij}\colon \cU_i \rightarrow \cU_j$ carry a parameter from a set $\cU_{ij}$.
Let us further assume that each of the \emph{local trigon equations}
\[
 L_{jk}(u_{jk}) L_{ij}(u_{ij}) = L_{ik}(T_{ijk}(u_{ij},u_{jk})), \qquad i<j<k,
\]
uniquely determines a map
\[
 T_{ijk} \colon\ \cU_{ij} \times \cU_{jk} \longrightarrow \cU_{ik}.
\]
\begin{figure}[t]
\centering
\includegraphics[scale=.45]{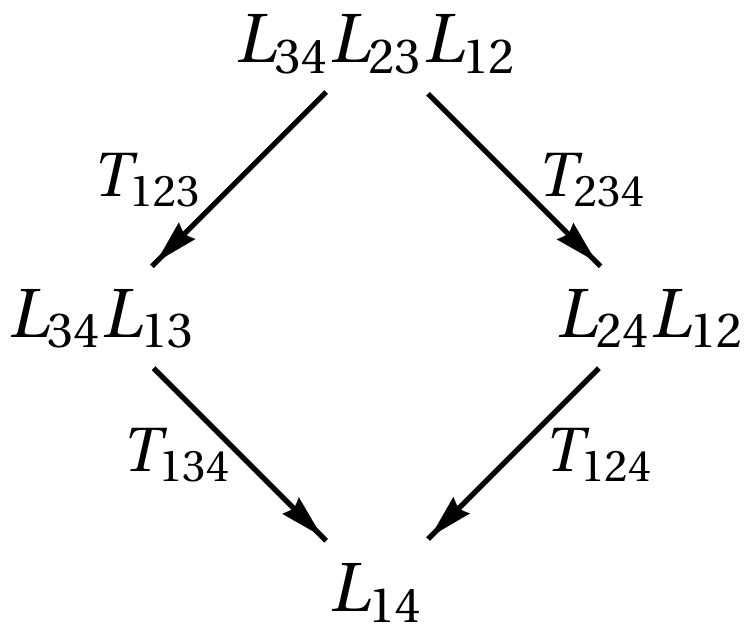}

\caption{From local trigon equations to the tetragon equation. Here, and correspondingly in following
figures, we suppress the parameters of the maps $L$.}\label{fig:tri2tetra}
\end{figure}
Using associativity of compositions, there is a consistency condition, see Figure~\ref{fig:tri2tetra},\footnote{Such
a diagram already appeared
in the 1965 PhD Thesis of James Wirth, see \cite{Wirt+Stas06}, where the maps are, however, of a
different nature, so that the step to higher polygon equations does not work.}
which requires that the maps $T_{ijk}$ have to satisfy the \emph{tetragon equation}
\[%\label{4gon_eq}
 T_{134} T_{123} = T_{124} T_{234}.
\]
Both sides of this equation act on the lexicographically ordered Cartesian product
$\cU_{12} \times \cU_{23} \times \cU_{34}$ and map to $\cU_{14}$.

Let us introduce a boldface ``position index" that indicates
the first of two neighboring sets on which the respective map acts in a Cartesian product of more
than two sets,
\[
 T_{134} T_{123,\bsy{1}} = T_{124} T_{234,\bsy{2}}.
\]
These additional indices are redundant, as long as we keep the combinatorial indices
and keep track of the numbered sets.

Using complementary index notation, where $\hat{k}$ stands for the complement of $k$ in $\{1,2,3,4\}$,
the tetragon equation reads $
 T_{\hat{2}} T_{\hat{4},\bsy{1}} = T_{\hat{3}} T_{\hat{1},\bsy{2}}$.

Writing
\[
 T_{ijk}(a_{ij},a_{jk}) =: a_{ij} \bullet_{ijk} a_{jk},
\]
with $a_{ij} \in \cU_{ij}$, the equation takes the form
\[
 (a_{12} \bullet_{123} a_{23}) \bullet_{134} a_{34} = a_{12} \bullet_{124} (a_{23} \bullet_{234} a_{34}),
\]
which is a mixed associativity condition for the (in general different) binary operations $\bullet_{123}$, $\bullet_{124}$, $\bullet_{134}$ and $\bullet_{234}$.\footnote{Examples of such associativity
relations for different binary operations are provided, for example, by nonsymmetric Poisson
algebras (in a setting of vector spaces, with $\times$ replaced by the corresponding tensor product)~\cite{Mark96}.}

In the simplest case, where all the basic sets are equal and we are dealing with a single map $T$, we may drop
the combinatorial indices, but retain the boldface ``position" indices. The tetragon equation is then $T T_{\bsy{1}} = T T_{\bsy{2}} $.
Writing $T(a,b) = a \cdot b$, it becomes the associativity relation for the binary operation $\cdot$.

\subsection{Pentagon equation}
For $i,j,k=1, \ldots,5$, $i<j<k$, let a map
\[
 L_{ijk} \colon\ \cU_{ij} \times \cU_{jk} \longrightarrow \cU_{ik}
\]
depend on a parameter from a set $\cU_{ijk}$.
Let us assume that each of the \emph{local tetragon equations}
\[
 L_{ikl}(u_{ikl}) L_{ijk,\bsy{1}}(u_{ijk}) = L_{ijl}(v_{ijl}) L_{jkl,\bsy{2}}(v_{jkl} )
, \qquad 1 \leq i<j<k<l \leq 5,
\]
uniquely determines a map
\[
 T_{ijkl} \colon\ \cU_{ijk} \times \cU_{ikl} \rightarrow \cU_{jkl} \times \cU_{ijl}
\]
via $(u_{ijk}, u_{ikl}) \mapsto (v_{jkl}, v_{ijl})$.
Then it follows that the maps $T_{ijkl}$, $1 \leq i < j < k < l \leq 5$, have to satisfy the pentagon equation,
see Figure~\ref{fig:tetra2pent}.
\begin{figure}[t]\centering
\includegraphics[scale=.70]{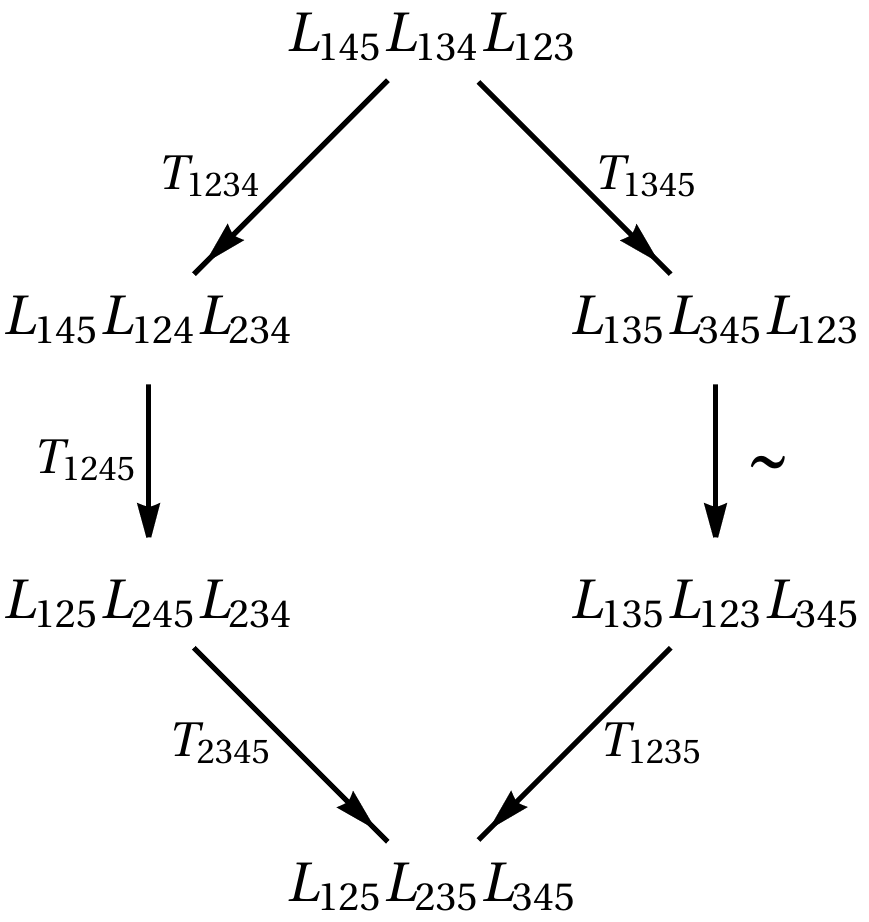}
\hspace{2cm}
\includegraphics[scale=.56]{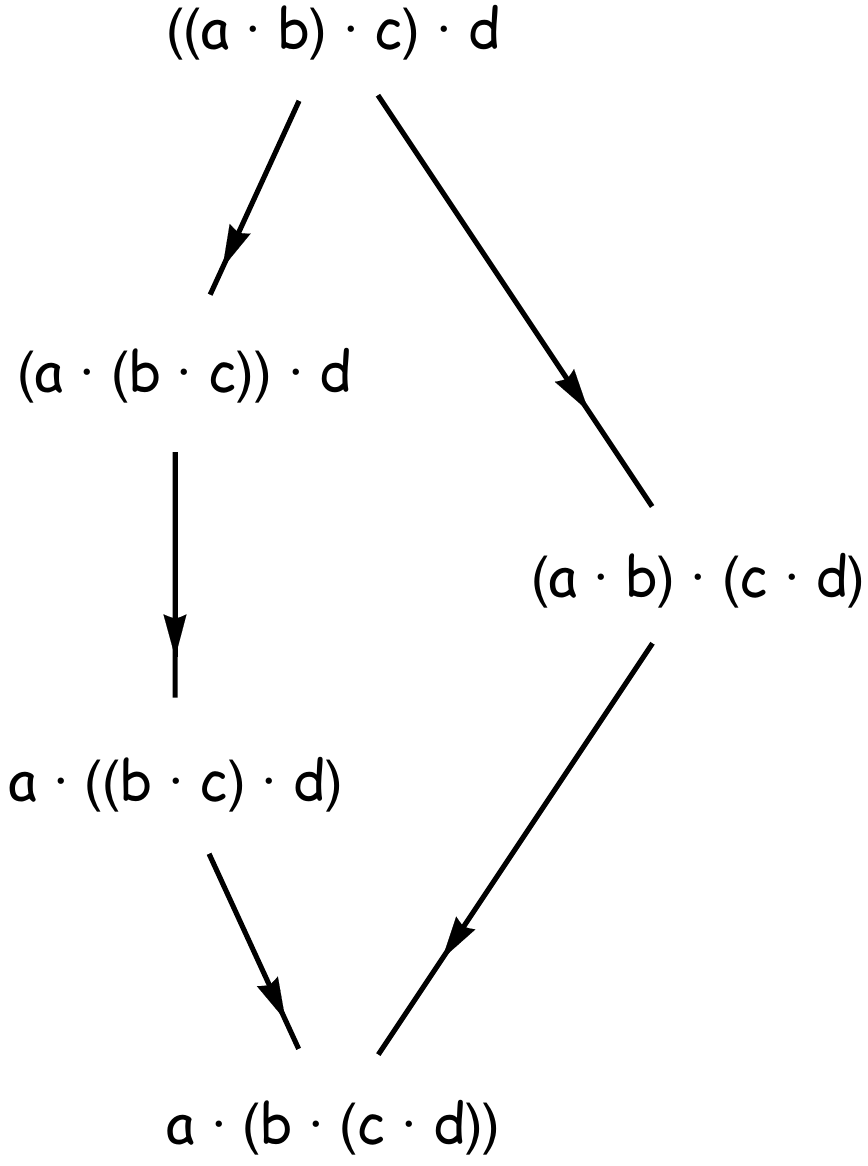}
\caption{From local tetragon equations to the pentagon equation. Here $\boldsymbol{\sim}$
stands for an equivalence, which corresponds to an application of a transposition map $\cP$.
Essentially, the left diagram is the pentagonal \emph{Tamari lattice}, in its original form,
as displayed in the second figure.
Here the action of a map $T$ corresponds to a right associativity map
$(a \cdot b) \cdot c \mapsto a \cdot (b \cdot c)$. If this is an invertible map in a category with a binary operation $\cdot$, the pentagon relation means that it is a~\emph{coherent associativity isomorphism} in the sense of \cite{MacL63}.}\label{fig:tetra2pent}
\end{figure}

Using complementary index notation, the \emph{pentagon equation} is
\[%\label{5gon_eq}
 T_{\hat{1}} T_{\hat{3}} T_{\hat{5}} = T_{\hat{4}} T_{\hat{2}}.
\]
Both sides of this equation act on
$\cU_{123} \times \cU_{134} \times \cU_{145}$ and map to $\cU_{345} \times \cU_{235} \times \cU_{125}$.

Representing the equivalence relation $\boldsymbol{\sim}$ in the diagram in Figure~\ref{fig:tetra2pent}
by a transposition map, $\cP(a,b)=(b,a)$, and reading off on which neighboring positions in a multiple
Cartesian product a map acts, we get
\[
 T_{\hat{1},\bsy{1}} T_{\hat{3},\bsy{2}} T_{\hat{5},\bsy{1}}
 = T_{\hat{4},\bsy{2}} \cP_{\bsy{1}} T_{\hat{2},\bsy{2}}.
\]

In case of identical basic sets, then renamed to $\cU$, and a single map $T$, the last equation takes the form
\begin{gather}
 T_{\bsy{1}} T_{\bsy{2}} T_{\bsy{1}}
 = T_{\bsy{2}} \cP_{\bsy{1}} T_{\bsy{2}}, \label{5gon_eq_pos}
\end{gather}
where the combinatorial indices have been dropped. Now all the information needed is provided
by the ``position'' indices.
The latter is our abbreviation of the following more familiar form of the pentagon equation
\[
 T_{\bsy{12}} T_{\bsy{23}} T_{\bsy{12}}
 = T_{\bsy{23}} \cP_{\bsy{12}} T_{\bsy{23}}.
\]

Writing
\begin{gather}
 T(a,b) = (a \ast b, a \cdot b), \label{5gon_multiplications}
\end{gather}
the last restricted form of the pentagon equation is equivalent to the conditions (cf.\ \cite{Kash+Resh07,Kash+Serg98})
\begin{gather}
 (a \ast b) \ast ((a \cdot b) \ast c) = b \ast c, \qquad
 (a \ast b) \cdot ((a \cdot b) \ast c) = a \ast (b \cdot c),\nonumber \\
 (a \cdot b) \cdot c = a \cdot (b \cdot c), \label{5gon_prod_conditions}
\end{gather}
for all $a,b,c \in \cU$.
Set-theoretic solutions have been obtained in \cite{Baaj+Skan93,CMM19,CMS20,CJK20,Jian+Liu05,Kash96AA,Kash99TMP,Kash00TMP,Kash11,Kash+Serg98,Kass23,Mazz23,MPS24,Zakr92}.

\begin{Example} \quad
\begin{itemize}\itemsep=0pt
\item[(1)] If $a \ast b = b$ for all $a,b \in \cU$, \eqref{5gon_prod_conditions} reduces to the associativity condition for $\cdot$.
If $(\cU,\cdot)$ is a~group and if $T$ is invertible, then this is the only solution \cite{CJK20,Kash+Serg98}.
It underlies one of the Kac--Takesaki operators on a group (see, e.g., \cite{Timm08}).
If $\cU$ is a subset of a group~$(G,\cdot)$, not containing the identity element,
there are more solutions \cite{Kash+Serg98}.
\item[(2)] If $a \cdot b =a$, the above system reduces to $(a \ast b) \ast (a \ast c) = b \ast c$.
In a group, a solution is given by $a \ast b = a^{-1} b$. This underlies another Kac--Takesaki operator on a group (see, e.g.,~\mbox{\cite{CMM19,Timm08}}).
\end{itemize}
\end{Example}

In terms of the composition $\hat{T} := T\cP$, \eqref{5gon_eq_pos} takes the form
\begin{gather}
 \hat{T}_{\bsy{12}} \hat{T}_{\bsy{13}} \hat{T}_{\bsy{23}}
 = \hat{T}_{\bsy{23}} \hat{T}_{\bsy{12}}. \label{5gon_eq_hat}
\end{gather}

\begin{Remark}
If $T$ solves \eqref{5gon_eq_pos}, then $\check{T} := \cP T$ satisfies the -- relative to \eqref{5gon_eq_hat} -- \emph{reversed} pentagon equation
\begin{gather}
 \check{T}_{\bsy{12}} \check{T}_{\bsy{23}} = \check{T}_{\bsy{23}} \check{T}_{\bsy{13}} \check{T}_{\bsy{12}}.
 \label{5gon_eq_check}
\end{gather}
If an invertible map $\hat{T}$ satisfies \eqref{5gon_eq_hat}, then its inverse $\hat{T}^{-1}$ satisfies the above reversed equation.
\end{Remark}

\begin{Remark}
If $\hat{T}$ is involutive, then it satisfies both, \eqref{5gon_eq_hat} and \eqref{5gon_eq_check}.
Such solutions have been explored in \cite{CJK20}.
\end{Remark}

\subsubsection{An example related to incidence geometry}

Let $\cV$ be a vector space and $L(a) \colon \cV \times \cV \rightarrow \cV$, $a \in \cU$, be such that the ``local''
tetragon equation~${L(b) L(a)_{\boldsymbol{1}} = L(b') L(a')_{\boldsymbol{2}}}$
determines a unique map $T \colon \cU \times \cU \rightarrow \cU \times \cU$ via
${(a,b) \mapsto (a',b')}$. Then this map satisfies the pentagon equation \eqref{5gon_eq_pos}.
Writing $x \circ_a y := L(a)(x,y)$, we can express the above equation as
the parameter-dependent associativity condition $(x \circ_a y) \circ_b z = x \circ_{b'} (y \circ_{a'} z) $.

An example is given by $\cV = \bbR^n$, $\cU = (0,1) \subset \bbR$, and $x \circ_a y := a x + (1-a) y$.
Then we obtain the solution
\begin{gather}
 T(a,b) = \left( \frac{(1 - a) b}{1- a b}, a b \right) \label{5gon_sol_proj_geom}
\end{gather}
of the pentagon equation (also see \cite{Kash99TMP}). It also shows up as a map of ``polarizations''
resulting from the evolution of tree-shaped matrix KP solitons \cite{DMH18TMP}.

The above binary operation can be interpreted as
the collinearity of three points $A,B,A\circ_a B$: $A \circ_a B = a A + (1-a) B$.
The generalized associativity condition can then be viewed as a $(6_2,4_3)$ configuration \cite{Grue09},
which consists of 6 points and 4 lines, where each point is incident with exactly two lines, each line
with exactly 3 points. See Figure~\ref{fig:incidence_geometry}.
A Desargues configuration~$(10_3)$ consists of 10 points and 10 lines, each point (line) incident
with 3 lines (points). It contains~5 configurations of type $(6_2,4_3)$, which thus constitute a pentagon.
Also see \cite{Doli+Serg11} for related considerations.

\begin{figure}[t]
\centering
 \includegraphics[scale=.6]{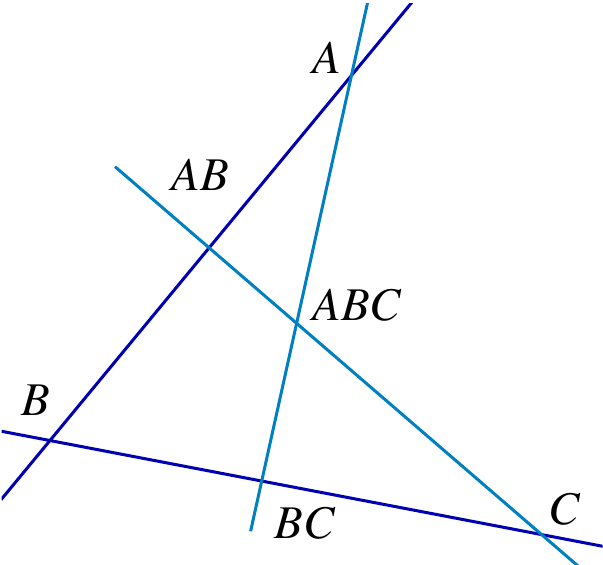}

\caption{A $(6_2,4_3)$ configuration.}\label{fig:incidence_geometry}
\end{figure}

\subsection{Hexagon equation}
For $i,j,k,l=1, \ldots,6$, $i<j<k<l$, let a map
\[
 L_{ijkl} \colon\ \cU_{ijk} \times \cU_{ikl} \longrightarrow \cU_{jkl} \times \cU_{ijl}
\]
depend on a parameter from a set $\cU_{ijkl}$. We assume that each of the \emph{local pentagon equations}
\begin{gather*}
 L_{jklm}(u_{jklm}) L_{ijlm}(u_{ijlm}) L_{ijkl}(u_{ijkl}) = L_{ijkm}(v_{ijkm}) L_{iklm}(v_{iklm}), \\
 1 \leq i<j<k<l<m \leq 6,
\end{gather*}
uniquely determines a map
\[
 T_{ijklm} \colon\ \cU_{ijkl} \times \cU_{ijlm} \times \cU_{jklm} \rightarrow \cU_{iklm} \times \cU_{ijkm}
\]
via $(u_{ijkl}, u_{ijlm}, u_{jklm}) \mapsto (v_{iklm}, v_{ijkm})$.

Then the maps $T_{ijklm}$, $1 \leq i < j < k < l <m \leq 6$, have to satisfy the hexagon equation.
See Figure~\ref{fig:penta2hexa}.
\begin{figure}[t]
\centering
\includegraphics[scale=.5]{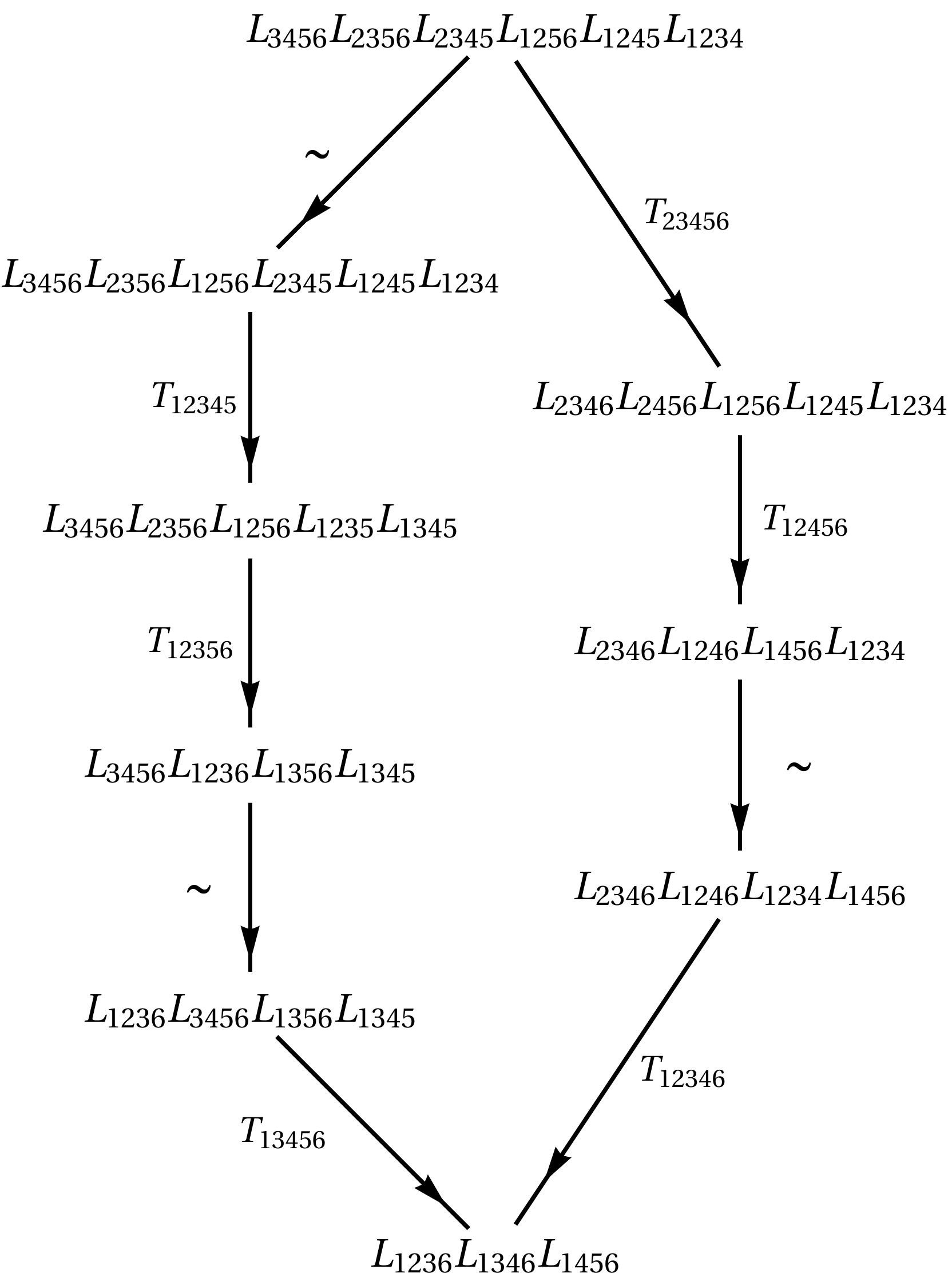}

\caption{From local pentagon equations to the hexagon equation.} \label{fig:penta2hexa}

\end{figure}
Using complementary index notation, the \emph{hexagon equation} reads
\[
 T_{\hat{2}} T_{\hat{4}} T_{\hat{6}} = T_{\hat{5}} T_{\hat{3}} T_{\hat{1}}. \label{6gon_eq}
\]
Both sides act on $\cU_{1234} \times \cU_{1245} \times \cU_{1256} \times \cU_{2345} \times \cU_{2356} \times \cU_{3456}$
and map to $\cU_{1456} \times \cU_{1346} \times \cU_{1236}$.

Employing transposition maps, according to the equivalences $\boldsymbol{\sim}$ appearing in Figure~\ref{fig:penta2hexa},
and introducing position indices, the hexagon equation takes the form
\[
 T_{\hat{2},\bsy{1}} \cP_{\bsy{3}} T_{\hat{4},\bsy{2}} T_{\hat{6},\bsy{1}}
 \cP_{\bsy{3}}
 = T_{\hat{5},\bsy{2}} \cP_{\bsy{1}} T_{\hat{3},\bsy{2}} T_{\hat{1},\bsy{4}}.
\]

If all basic sets are equal and we are dealing with a single map $T$,
we may drop the combinatorial indices and only retain the (boldface) position indices:
\begin{gather}
 T_{\bsy{1}} \cP_{\bsy{3}} T_{\bsy{2}} T_{\bsy{1}}
 \cP_{\bsy{3}}
 = T_{\bsy{2}} \cP_{\bsy{1}} T_{\bsy{2}} T_{\bsy{4}}.
 \label{6-gon_eq_pos}
\end{gather}

\begin{Remark}
According to our knowledge, without introduction of an auxiliary structure as discussed in \cite{DMH15}, the hexagon equation \eqref{6-gon_eq_pos}, and correspondingly all higher \emph{even} polygon equations, cannot be rewritten without the explicit appearance of transpositions. This is in contrast
to the case of the pentagon equation (see \eqref{5gon_eq_hat}) and higher \emph{odd} polygon equations.
\end{Remark}

Expressing $T$ in terms of two ternary operations,
\begin{gather}
 T(a,b,c) = ( \langle a,b,c \rangle, [a,b,c] ), \label{6gon_ternary_ops}
\end{gather}
the hexagon equation, acting on $(a,b,c,d,e,f) \in \cU^6$, is equivalent to the following conditions,
\begin{gather}
 \langle \langle a,b,d \rangle, \langle [a,b,d],c,e \rangle, f \rangle =
 \langle b,c, \langle d,e,f \rangle \rangle, \nonumber \\
 [\langle a,b,d \rangle, \langle [a,b,d],c,e \rangle, f] = \langle a, [b,c,\langle d,e,f \rangle], [d,e,f]\rangle,
 \nonumber \\
 [[a,b,d],c,e] = [a,[b,c,\langle d,e,f \rangle ], [d,e,f] ].
 \label{6gon_ternary_ops_cond}
\end{gather}

\subsection{Heptagon equation}
\label{subsec:7gon}
Let maps
\[
 L_{ijklm} \colon\ \cU_{ijkl} \times \cU_{ijlm} \times \cU_{jklm} \rightarrow \cU_{iklm} \times \cU_{ijkm},
\]
where $1 \leq i <j<k<l<m \leq 7$,
be subject to local hexagon equations
\begin{gather*}
 L_{iklmn}(u_{iklmn}) L_{ijkmn}(u_{ijkmn}) L_{ijklm}(u_{ijklm})\\
 \qquad = L_{ijkln}(v_{ijkln}) L_{ijlmn}(v_{ijlmn}) L_{jklmn}(v_{jklmn}),
\end{gather*}
where $1 \leq i <j<k<l<m<n \leq 7$.
If these equations uniquely determine maps
\[
 T_{ijklmn} \colon\ \cU_{ijklm} \times \cU_{ijkmn} \times \cU_{iklmn}
 \rightarrow \cU_{jklmn} \times \cU_{ijlmn} \times \cU_{ijkln}
\]
via $(u_{ijklm}, u_{ijkmn}, u_{iklmn}) \mapsto (v_{jklmn}, v_{ijlmn}, v_{ijkln})$,
elaborating
\[
 L_{14567} L_{13467} L_{12367} L_{13456} L_{12356} L_{12345}
 = L_{\widehat{23}} L_{\widehat{25}} L_{\widehat{45}} L_{\widehat{27}} L_{\widehat{47}} L_{\widehat{67}}
\]
in two different ways, using local hexagon equations, we find that the maps $T_{ijklmn}$ have to satisfy
the heptagon equation, which is
\[
 T_{\hat{1}} T_{\hat{3}} T_{\hat{5}} T_{\hat{7}}
 = T_{\hat{6}} T_{\hat{4}} T_{\hat{2}},
 \label{7-gon_eq}
\]
in complementary index notation.

Introducing position indices and transposition maps, we can express it as
\[
 T_{\hat{1},\bsy{1}} T_{\hat{3},\bsy{3}} \cP_{\bsy{5}} \cP_{\bsy{2}}
 T_{\hat{5},\bsy{3}} T_{\hat{7},\bsy{1}} \cP_{\bsy{3}}
 = \cP_{\bsy{3}} T_{\hat{6},\bsy{4}} \cP_{\bsy{3}} \cP_{\bsy{2}} \cP_{\bsy{1}}
 T_{\hat{4},\bsy{3}} \cP_{\bsy{2}} \cP_{\bsy{3}}
 T_{\hat{2},\bsy{4}}.
\]
Both sides act on $\cU_{12345} \times \cU_{12356} \times \cU_{12367} \times \cU_{13456} \times \cU_{13467} \times \cU_{14567}$ and map to
$\cU_{34567} \times \cU_{23567} \times \cU_{23457} \times \cU_{12567} \times \cU_{12457} \times \cU_{12347}$
(which can be read off from \cite[Figure~18]{DMH15}).

If all basic sets are equal and there is only a single map $T$, all the information we need is in the position indices, so that the combinatorial indices can be dropped,
\[%\label{7gon_eq_pos}
 T_{\bsy{1}} T_{\bsy{3}} \cP_{\bsy{5}} \cP_{\bsy{2}}
 T_{\bsy{3}} T_{\bsy{1}} \cP_{\bsy{3}}
 = \cP_{\bsy{3}} T_{\bsy{4}} \cP_{\bsy{3}} \cP_{\bsy{2}} \cP_{\bsy{1}}
 T_{\bsy{3}} \cP_{\bsy{2}} \cP_{\bsy{3}}
 T_{\bsy{4}}.
\]

Expressing $T$ in terms of three ternary operations,\footnote{It should be clear from the context when $\{\,,\,, \, \}$
denotes a ternary operation and not a set.}
\begin{gather}
 T(a,b,c) = ( \{ a,b,c \}, \langle a,b,c \rangle, [a,b,c] ), \label{7gon_ternary_ops}
\end{gather}
the heptagon equation, evaluated on $(a,b,c,d,e,f)\! \in\! \cU^6$, is equivalent to \eqref{6gon_ternary_ops_cond}, supplemented~by
\begin{gather}
 \{ \{ a,b,d \}, \{ [a,b,d],c,e \}, \{ \langle a,b,d \rangle, \langle [a,b,d], c,e \rangle, f \} \}
 = \{ d, e, f \}, \nonumber \\
 \langle \{ a,b,d \}, \{ [a,b,d], c, e \}, \{ \langle a,b,d \rangle, \langle [a,b,d], c, e \rangle, f \} \rangle
 = \{ b,c, \langle d,e,f \rangle \}, \nonumber \\
 [ \{a,b,d\}, \{[a,b,d],c,e\}, \{ \langle a,b,d \rangle, \langle [a,b,d],c,e \rangle, f \} ]
 = \{ a, [b,c,\langle d,e,f \rangle ],[d,e,f] \}.
 \label{7gon_conditions_beyond_6gon}
\end{gather}

\subsection{Octagon equation}
We consider maps
\[
 T_{ijklmpq}\colon\ \cU_{ijklmp} \times \cU_{ijklpq} \times \cU_{ijlmpq} \times \cU_{jklmpq}
 \rightarrow \cU_{iklmpq} \times \cU_{ijkmpq} \times \cU_{ijklmp},
\]
where $i<j<k<l<m<p<q$.
The \emph{octagon equation} is
\[% \label{8gon_eq}
 T_{\hat{2}} T_{\hat{4}} T_{\hat{6}} T_{\hat{8}} = T_{\hat{7}} T_{\hat{5}} T_{\hat{3}} T_{\hat{1}}.
\]
It arises as the consistency condition of a system of local heptagon equations.

Using position indices and transposition maps, it can be written as
\begin{gather*}
 T_{\hat{2},\bsy{1}} \cP_{\bsy{4}} \cP_{\bsy{5}} \cP_{\bsy{6}}
 T_{\hat{4},\bsy{3}} \cP_{\bsy{6}} \cP_{\bsy{5}} \cP_{\bsy{2}}
 T_{\hat{6},\bsy{3}} \cP_{\bsy{6}} T_{\hat{8},\bsy{1}} \cP_{\bsy{4}}
 \cP_{\bsy{5}} \cP_{\bsy{6}} \cP_{\bsy{3}} \\
 \qquad= \cP_{\bsy{3}} T_{\hat{7},\bsy{4}} \cP_{\bsy{3}} \cP_{\bsy{2}} \cP_{\bsy{1}}
 T_{\hat{5},\bsy{3}} \cP_{\bsy{6}} \cP_{\bsy{2}} \cP_{\bsy{3}}
 T_{\hat{3},\bsy{4}} T_{\hat{1},\bsy{7}}.
\end{gather*}
The two sides act on $\cU_{\widehat{78}} \times \cU_{\widehat{58}} \times \cU_{\widehat{56}} \times \cU_{\widehat{38}}
\times \cU_{\widehat{36}} \times \cU_{\widehat{34}} \times \cU_{\widehat{18}} \times \cU_{\widehat{16}}
\times \cU_{\widehat{14}} \times \cU_{\widehat{12}}$ and map to
$\cU_{\widehat{23}} \times \cU_{\widehat{25}} \times \cU_{\widehat{27}} \times \cU_{\widehat{45}}
\times \cU_{\widehat{47}} \times \cU_{\widehat{67}}$, which can be read off from \cite[Figure~19]{DMH15}.

If all basic sets are equal and we are dealing with a single map $T$, this reduces to
\begin{gather*} %\label{8gon_eq_pos}
 T_{\bsy{1}} \cP_{\bsy{4}} \cP_{\bsy{5}} \cP_{\bsy{6}}
 T_{\bsy{3}} \cP_{\bsy{6}} \cP_{\bsy{5}} \cP_{\bsy{2}}
 T_{\bsy{3}} \cP_{\bsy{6}} T_{\bsy{1}} \cP_{\bsy{4}}
 \cP_{\bsy{5}} \cP_{\bsy{6}} \cP_{\bsy{3}} = \cP_{\bsy{3}} T_{\bsy{4}} \cP_{\bsy{3}} \cP_{\bsy{2}} \cP_{\bsy{1}}
 T_{\bsy{3}} \cP_{\bsy{6}} \cP_{\bsy{2}} \cP_{\bsy{3}}
 T_{\bsy{4}} T_{\bsy{7}}.
\end{gather*}
Writing
\[
 T(a,b,c,d) = ( \{ a,b,c,d \}, \langle a,b,c,d \rangle, [a,b,c,d] ),
\]
with three quaternary operations $\cU^4 \!\rightarrow\! \cU$, the octagon equation, acting
on $(a,b,c,d,e,f,g,h,k,l)$ results in the following six conditions,
\begin{gather}
 \{ \{ a, b, d, g \}, \{ [a, b, d, g], c, e, h \}, \{ \langle a, b, d, g \rangle, \langle [a, b, d, g], c, e, h\rangle, f, k \}, l \}\nonumber\\
 \qquad = \{ d,e,f,\{ g,h,k,l \} \}, \nonumber \\
 \langle \{ a, b, d, g \}, \{ [a, b, d, g], c, e, h \}, \{ \langle a, b, d, g \rangle, \langle [a, b, d, g], c, e, h \rangle, f, k \}, l \rangle
 \nonumber \\
 \qquad
 = \{ b, c, \langle d, e, f, \{g, h, k, l\} \rangle, \langle g, h, k, l \rangle \}, \nonumber \\
 [ \{ a, b, d, g \}, \{ [a, b, d, g], c, e, h \}, \{ \langle a, b, d, g \rangle, \langle [a, b, d, g], c, e, h \rangle, f, k \}, l ] \nonumber \\
 \qquad
 = \{ a, [b, c,\langle d, e, f, \{ g, h, k, l \} \rangle, \langle g, h, k, l \rangle ], [d, e, f, \{g, h, k, l \} ], [g, h, k, l] \}, \nonumber \\
\langle \langle a, b, d, g \rangle, \langle [a, b, d, g], c, e, h \rangle, f, k \rangle
 = \langle b, c, \langle d, e, f, \{ g, h, k, l \} \rangle, [g, h, k, l] \rangle, \nonumber \\
 [ \langle a, b, d, g \rangle, \langle [a, b, d, g], c, e, h \rangle, f, k] \nonumber \\
 \qquad = \langle a, [b, c, \langle d, e, f, \{ g, h, k, l \} \rangle, \langle g, h, k, l \rangle ], [d, e, f, \{ g, h, k, l \} ], [g, h, k, l] \rangle
, \nonumber \\
 [ a, [b, c, \langle d, e, f, \{ g, h, k, l \} \rangle, \langle g, h, k, l \rangle ], [d, e, f, \{ g, h, k, l \} ], [g, h, k, l] ]\nonumber\\
 \qquad = [ [a, b, d, g], c, e, h ], \label{8gon_conditions}
\end{gather}
for all $a,b,c,d,e,f,g,h,k,l \in \cU$.

\section{Examples of dual polygon equations}
\label{sec:dualPol}
In this section, we elaborate dual polygon equations up to the dual 8-gon equation.

\subsection{Dual trigon equation}
This is the equation
\[
 T_{12} T_{23} = T_{13}
\]
for maps $\tT_{ij} \colon \cU_j \rightarrow \cU_i$, $i<j $.

\subsection{Dual tetragon equation}
\label{subsec:dual4gon}

For $i,j=1,2,3,4$, $i<j$, let $L_{ij}\colon \cU_i \rightarrow \cU_j$ carry a parameter from a set $\cU_{ij}$.
Let each of the local trigon equations
\[
 L_{ik}(u_{ik}) = L_{jk}(v_{jk}) L_{ij}(v_{ij}), \qquad i<j<k,
\]
uniquely determine a map
\[
 \tT_{ijk}\colon\ \cU_{ik} \longrightarrow \cU_{ij} \times \cU_{jk}, \qquad u_{ik} \mapsto (v_{ij}, v_{jk}).
\]
Then the maps $\tT_{ijk}$ have to satisfy a consistency condition, which is obtained by reversing
the arrows in Figure~\ref{fig:tri2tetra}.
This means that the maps $\tT_{ijk}$ have to satisfy the dual tetragon equation,{\samepage
\begin{gather}
 \tT_{123,\bsy{2}} \tT_{134} = \tT_{234,\bsy{1}} \tT_{124}, \label{dual4gon_eq}
\end{gather}
which acts on $\cU_{14}$.}

If all basic sets are equal and there is only a single map $\tT$, \eqref{dual4gon_eq} reduces to
\begin{gather}
 \tT_{\bsy{2}} \tT = \tT_{\bsy{1}} \tT. \label{dual4gon_eq_pos}
\end{gather}
Writing $\tT = (\tT_1, \tT_2)$, this amounts to idempotency and commutativity
of the maps $\tT_i\colon \cU \rightarrow \cU$, $i=1,2$.

\begin{Remark}
In a framework of vector spaces, with the Cartesian product replaced by the corresponding tensor product,
and linear maps, \eqref{dual4gon_eq_pos} means that $\tT$ is \emph{coassociative}.
\end{Remark}

\subsection{Dual pentagon equation}

Using complementary index notation, the \emph{dual pentagon equation} is
\[% \label{dual5gon_eq}
 \tT_{\hat{5}} \tT_{\hat{3}} \tT_{\hat{1}} = \tT_{\hat{2}} \tT_{\hat{4}},
\]
for maps
\[
 \tT_{ijkl} \colon\ \cU_{ijl} \times \cU_{jkl} \rightarrow \cU_{ikl} \times \cU_{ijk}, \qquad
 1 \leq i<j<k<l \leq 5.
\]
Letting it act on $\cU_{125} \times \cU_{235} \times \cU_{345}$,
this equation takes the form
\[
 \tT_{\hat{5},\bsy{2}} \tT_{\hat{3},\bsy{1}} \tT_{\hat{1},\bsy{2}}
 = \tT_{\hat{2},\bsy{1}} \cP_{\bsy{2}} \tT_{\hat{4},\bsy{1}}.
\]

If all the basic sets are the same and there is only a single map $\tT$, this is simply
\[% \label{dual5gon_eq_pos}
 \tT_{\bsy{2}} \tT_{\bsy{1}} \tT_{\bsy{2}} = \tT_{\bsy{1}} \cP_{\bsy{2}} \tT_{\bsy{1}}.
\]
Writing $\tT(a,b) =: (a \cdot b, a \ast b)$, the last equation is equivalent to
\begin{gather}
 a \cdot (b \cdot c) = (a \cdot b) \cdot c, \qquad
 (a \ast (b \cdot c) ) \cdot (b \ast c) = (a \cdot b) \ast c, \nonumber\\
 (a \ast (b \cdot c)) \ast (b \ast c) = a \ast b, \label{dual5gon_conditions}
\end{gather}
for all $a,b,c \in \cU$.

\subsection{Dual hexagon equation}

This is the equation
\[ %\label{dual6gon_eq}
 \tT_{\hat{6}} \tT_{\hat{4}} \tT_{\hat{2}} = \tT_{\hat{1}} \tT_{\hat{3}} \tT_{\hat{5}}
\]
for maps
\[
 \tT_{ijklm} \colon\ \cU_{ijkm} \times \cU_{iklm}
 \rightarrow \cU_{jklm} \times \cU_{ijlm} \times \cU_{ijkl}, \qquad
 1 \leq i<j<k<l < m \leq 6.
\]
Introducing position indices and transposition maps, it takes the form
\[ %\label{dual6-gon_eq}
 \tT_{\hat{6},\bsy{1}} \tT_{\hat{4},\bsy{2}} \cP_{\bsy{3}} \tT_{\hat{2},\bsy{1}}
 = \cP_{\bsy{3}} \tT_{\hat{1},\bsy{4}} \tT_{\hat{3},\bsy{2}} \cP_{\bsy{1}} \tT_{\hat{5},\bsy{2}}.
\]

If all the basic sets are the same and there is only a single map $\tT$, this is
\begin{gather}
 \tT_{\bsy{1}} \tT_{\bsy{2}} \cP_{\bsy{3}} \tT_{\bsy{1}}
 = \cP_{\bsy{3}} \tT_{\bsy{4}} \tT_{\bsy{2}} \cP_{\bsy{1}} \tT_{\bsy{2}},
 \label{dual6gon_pos}
\end{gather}
which already appeared as a 4-cocycle condition in \cite{Stre98}.
Writing
\[
 \tT(a,b) =: (a \ast b, a \cdot b, a \diamond b),
\]
\eqref{dual6gon_pos} imposes \eqref{5gon_prod_conditions} on the first two binary operations, and requires in addition
\begin{gather}
 (a \diamond (b \cdot c)) \ast (b \diamond c) = (a \ast b) \diamond ((a \cdot b) \ast c),
 \qquad
 (a \diamond (b \cdot c)) \cdot (b \diamond c) = (a \cdot b) \diamond c, \nonumber \\
 (a \diamond (b \cdot c)) \diamond (b \diamond c) = a \diamond b,
 \label{dualhex_conditions_beyond_pent}
\end{gather}
for all $a,b,c \in \cU$.

\begin{Remark}
\label{rem:Kashaev_Pachner4d}
In \cite{Kash15} (see (3.8) therein), the dual hexagon equation appeared, as a realization of a Pachner move of
type (3,3) in four dimensions, in the form
\[
 (Q \cP)_{\bsy{1}} Q_{\bsy{2}} \cP_{\bsy{1}} Q_{\bsy{2}}
 = \cP_{\bsy{3}} (Q\cP)_{\bsy{4}} Q_{\bsy{2}} \cP_{\bsy{3}} Q_{\bsy{1}}.
\]
Indeed, setting $Q = \tT \cP$, this reads
\[
 \tT_{\bsy{1}} \tT_{\bsy{2}} \cP_{\bsy{2}} \cP_{\bsy{1}} \tT_{\bsy{2}} \cP_{\bsy{2}}
 = \cP_{\bsy{3}} \tT_{\bsy{4}} \tT_{\bsy{2}} \cP_{\bsy{2}} \cP_{\bsy{3}} \tT_{\bsy{1}} \cP_{\bsy{1}}.
\]
Writing it as
\[
 \tT_{\bsy{1}} \tT_{\bsy{2}} \cP_{\bsy{3}} (\cP_{\bsy{3}} \cP_{\bsy{2}} \cP_{\bsy{1}}) \tT_{\bsy{2}} \cP_{\bsy{2}}
 = \cP_{\bsy{3}} \tT_{\bsy{4}} \tT_{\bsy{2}} \cP_{\bsy{1}} (\cP_{\bsy{1}} \cP_{\bsy{2}} \cP_{\bsy{3}}) \tT_{\bsy{1}} \cP_{\bsy{1}},
\]
and using the identities $\cP_{\bsy{3}} \cP_{\bsy{2}} \cP_{\bsy{1}} \tT_{\bsy{2}} = \tT_{\bsy{1}} \cP_{\bsy{2}} \cP_{\bsy{1}}$,
$ \cP_{\bsy{1}} \cP_{\bsy{2}} \cP_{\bsy{3}} \tT_{\bsy{1}} = \tT_{\bsy{2}} \cP_{\bsy{1}} \cP_{\bsy{2}} $,
we obtain
\[
 \tT_{\bsy{1}} \tT_{\bsy{2}} \cP_{\bsy{3}} \tT_{\bsy{1}} \cP_{\bsy{2}} \cP_{\bsy{1}} \cP_{\bsy{2}}
 = \cP_{\bsy{3}} \tT_{\bsy{4}} \tT_{\bsy{2}} \cP_{\bsy{1}} \tT_{\bsy{2}} \cP_{\bsy{1}} \cP_{\bsy{2}} \cP_{\bsy{1}},
\]
which, by use of the braid equation $\cP_{\bsy{2}} \cP_{\bsy{1}} \cP_{\bsy{2}} = \cP_{\bsy{1}} \cP_{\bsy{2}} \cP_{\bsy{1}}$,
is equivalent to \eqref{dual6gon_pos}.
\end{Remark}

\subsection{Dual heptagon equation}
The \emph{dual heptagon equation} is
\begin{gather}
 \cP_{\bsy{3}} \tT_{\hat{7},\bsy{4}} \tT_{\hat{5},\bsy{2}} \cP_{\bsy{4}} \cP_{\bsy{1}}
 \tT_{\hat{3},\bsy{2}} \tT_{\hat{1},\bsy{4}}
 = \tT_{\hat{2},\bsy{1}} \cP_{\bsy{3}} \cP_{\bsy{4}} \tT_{\hat{4},\bsy{2}} \cP_{\bsy{5}} \cP_{\bsy{4}}
 \cP_{\bsy{3}} \tT_{\hat{6},\bsy{1}} \cP_{\bsy{3}},
 \label{dual7-gon_eq}
\end{gather}
with maps $\tT_{ijklmp} \colon \cU_{ijklp} \times \cU_{ijlmp} \times \cU_{jklmp}
 \rightarrow \cU_{iklmp} \times \cU_{ijkmp} \times \cU_{ijklm}$, $i<j<k<l<m<p$.
Both sides act on $\cU_{12347} \times \cU_{12457} \times \cU_{12567} \times \cU_{23457} \times \cU_{23567} \times \cU_{34567}$ and map to $\cU_{14567} \times \cU_{13467} \times \cU_{13456} \times \cU_{12367} \times \cU_{12356} \times \cU_{12345}$.

If all basic sets are the same and there is only a single map $\tT$, writing
\[
 \tT(a,b,c) = (\{ a,b,c \}, \langle a,b,c \rangle, [a,b,c] ),
\]
 \eqref{dual7-gon_eq} is equivalent
to the following conditions for the three ternary operations,
\begin{gather}
 \{ b,c,\{d,e,f\}\} =\{\{a,b,d\}, \{\langle a,b,d\rangle, c,e\},f\}, \nonumber \\
 \{ a, \langle b,c,\{d,e,f\}\rangle, \langle d,e,f\rangle \}
 = \langle \{a,b,d\},\{\langle a,b,d\rangle,c,e\},f\rangle, \nonumber \\
 \langle a, \langle b,c, \{d,e,f\} \rangle, \langle d,e,f \rangle \rangle
 = \langle \langle a,b,d\rangle, c,e \rangle, \nonumber \\
 \{ [a, \langle b,c,\{d,e,f\} \rangle, \langle d,e,f\rangle], [b,c,\{d,e,f\}],[d,e,f]\}
 = [\{a,b,d\}, \{\langle a,b,d \rangle,c,e\},f], \nonumber \\
 \langle [a, \langle b,c,\{d,e,f\} \rangle, \langle d,e,f\rangle], [b,c,\{d,e,f\}],[d,e,f] \rangle
 = [\langle a,b,d \rangle, c,e], \nonumber \\
 [ [a, \langle b,c,\{d,e,f\} \rangle, \langle d,e,f\rangle], [b,c,\{d,e,f\}],[d,e,f] ]
 = [a,b,d],
 \label{dualhept_conditions}
\end{gather}
for all $a,b,c,d,e,f \in \cU$.

\subsection{Dual octagon equation}
\label{subsec:dual_oct}
The \emph{dual octagon equation} is
\[% \label{dual8-gon_eq}
 \cP_{\bsy{4}} \cP_{\bsy{7}} \cP_{\bsy{5}} \cP_{\bsy{6}} \tT_{\hat{8},\bsy{7}}
 \cP_{\bsy{3}} \tT_{\hat{6},\bsy{4}} \cP_{\bsy{6}} \cP_{\bsy{3}} \cP_{\bsy{2}}
 \tT_{\hat{4},\bsy{3}} \cP_{\bsy{1}} \cP_{\bsy{2}} \cP_{\bsy{3}} \tT_{\hat{2},\bsy{4}}
 = \tT_{\hat{1},\bsy{1}} \tT_{\hat{3},\bsy{3}} \cP_{\bsy{5}} \cP_{\bsy{6}} \cP_{\bsy{2}}
 \tT_{\hat{5},\bsy{3}} \cP_{\bsy{6}} \cP_{\bsy{5}} \cP_{\bsy{4}} \tT_{\hat{7},\bsy{1}}
 \cP_{\bsy{3}},
\]
for maps $\tT_{ijklmpq} \colon \cU_{ijklmp} \times \cU_{ijkmpq} \times \cU_{iklmpq} \rightarrow
\cU_{jklmpq}\times \cU_{ijlmpq} \times \cU_{ijklpq} \times \cU_{ijklmp}$, where $i<j<k<l<m<p<q$.
Both sides of the equation act on
$\cU_{\widehat{67}} \times \cU_{\widehat{47}}\times \cU_{\widehat{45}} \times \cU_{\widehat{27}}
\times \cU_{\widehat{25}} \times \cU_{\widehat{23}}$.

If all basic sets are the same and there is only a single map $\tT$, writing
\[
 \tT(a,b,c) = (\{ a,b,c \}, \langle a,b,c \rangle, [a,b,c], |a,b,c| ),
\]
 with four ternary operations $\cU^3 \rightarrow \cU$, the dual octagon equation, acting
on $(a,b,c,d,e,f)$ results in \eqref{6gon_ternary_ops_cond}, \eqref{7gon_conditions_beyond_6gon},
 and the following conditions,
\begin{gather}
 \{ | a, [b, c, \langle d, e, f \rangle ], [d, e, f] |, | b, c, \langle d, e, f \rangle |, | d, e, f | \} \nonumber \\
\qquad
 = | \{ a, b, d \}, \{ [a, b, d], c, e \}, \{ \langle a, b, d \rangle, \langle [a, b, d], c, e \} \rangle, f |, \nonumber \\
 \langle | a, [b, c, \langle d, e, f \rangle ], [d, e, f] |, | b, c, \langle d, e, f \rangle |, | d, e, f | \rangle
 = | \langle a, b, d \rangle, \langle [a,b.d], c,e, \rangle, f |, \nonumber \\
 [ | a, [b, c, \langle d, e, f \rangle ], [d, e, f] |, | b, c, \langle d, e, f \rangle |, | d, e, f| ] = | [a, b, d], c, e |
, \nonumber \\
 | | a, [b, c, \langle d, e, f \rangle ], [d, e, f] |, | b, c, \langle d, e, f \rangle |, |d, e, f| |
 = | a, b, d |. \label{dual8gon_beyond_hept}
\end{gather}

\section{Relations between solutions of neighboring polygon equations}
\label{sec:relations_polygon_eqs}

By a (dual) polygon map we mean a solution of a (dual) polygon equation.
In this section, the case when all the basic sets appearing in a multiple Cartesian product,
on which a (dual) polygon map acts, are the same set $\cU$ is considered.

First, we formulate a few special cases of theorems in Section~\ref{sec:reductions}. Though they are rather
corollaries, because of their relevance they also deserve to be called theorems. Let $N \in \mathbb{N}$, $N>2$.

\begin{Theorem}
Let $\tT^{(N+1)}$ be a dual $(N+1)$-gon map and $T$ that map with the last component of
its codomain cut off. Then $T$ is an $N$-gon map.
\end{Theorem}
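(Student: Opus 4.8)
The plan is to deduce the statement directly from Theorem~\ref{thm:dualpolygon->polygon_red2}, of which it is the single-map incarnation. First I would translate the single-map hypothesis into the framework of that theorem: all basic sets are the common set $\cU$, and the family $\big\{\tT_K \mid K \in {[N+1] \choose N}\big\}$ consists of copies of the one map $\tT^{(N+1)}$, distributed over the appropriate consecutive factors by the position indices. Under this dictionary the operation ``cut off the last component of the codomain of $\tT^{(N+1)}$'' is exactly the operation ``replace each $\tT_K$ with $N+1 \in K$ by the map obtained by deleting the last codomain component'' from Theorem~\ref{thm:dualpolygon->polygon_red2}. A quick arity check confirms consistency: the dual $(N+1)$-gon map has $\lceil N/2 \rceil$ output factors, so after the cut it has $\lceil N/2 \rceil - 1 = \lfloor (N-1)/2 \rfloor$ of them, matching the output count of an $N$-gon map, while the input count $\lfloor N/2 \rfloor = \lceil (N-1)/2 \rceil$ is already that of an $N$-gon map and the domain is left untouched.

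The essential combinatorial fact, already isolated in the proof of Theorem~\ref{thm:dualpolygon->polygon_red2}, is that for $N+1 \in K$ the face $K \setminus \{N+1\}$ is the first element of $\vec{P}(K)$, hence the last element of $\cev{P}_{\rm o}(K)$. Thus the factor removed is always $\cU_{K \setminus \{N+1\}}$, the unique face of $K$ omitting $N+1$, and after the relabeling $\cU_J := \cU_{J \cup \{N+1\}}$ each surviving map becomes a genuine $N$-gon map $T_{K'}$ with $K' = K \setminus \{N+1\}$. At the same time the single leftover map $\tT_{\widehat{N+1}}$ involves only faces $J$ with $N+1 \notin J$, so it acts entirely within the removed sector and decouples. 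Deleting $N+1$ from every surviving label then turns the reduced dual $(N+1)$-gon equation into the $N$-gon equation, for either parity of $N$, as carried out in that proof.

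Finally I would return to single-map language: since all the $\tT_K$ were the one map $\tT^{(N+1)}$, all the reduced $T_{K'}$ are the one map $T$, and the family $N$-gon equation collapses to the single-map $N$-gon equation; reintroducing the transposition maps $\cP$ that encode the equivalence relation $\boldsymbol{\sim}$ (as in Sections~\ref{sec:ExPEs} and~\ref{sec:dualPol}) produces its position-index form. I expect the only genuinely delicate point to be the compatibility of the cut with this bookkeeping, namely that projecting away the last codomain factor commutes with the compositions on both sides of the equation. This is guaranteed precisely by the decoupling above: the removed ``$N+1 \notin J$'' factors are produced and consumed only within their own sector, so the projection of each side equals the composition of the projected maps. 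Because we only ever delete a \emph{codomain} component, which requires no hypothesis on the map, no degeneracy assumption on $\tT^{(N+1)}$ is needed here, in contrast to the reductions that remove a domain component.
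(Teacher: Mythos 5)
Your proposal is correct and takes essentially the same route as the paper: the paper presents this theorem exactly as the single-map special case (all basic sets equal to $\cU$, one map $\tT^{(N+1)}$) of Theorem~\ref{thm:dualpolygon->polygon_red2}, whose proof rests on the same key fact you isolate, namely that $K \setminus \{N+1\}$ is the first element of $\vec{P}(K)$, hence the last element of $\cev{P}_{\rm o}(K)$, so that the cut factors $\cU_{K\setminus\{N+1\}}$ and the map $\tT_{\widehat{N+1}}$ decouple and never feed back into the kept sector. Your arity check, the observation that no degeneracy hypothesis is needed when only a codomain component is deleted, and the verification that projecting away the removed sector commutes with the compositions are accurate elaborations of details the paper leaves implicit.
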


\begin{Theorem}
For even $N$, let $T^{(N+1)}$ be an $(N+1)$-gon map and $T$ that map with the first component of
its codomain cut off. Then $T$ is an $N$-gon map.
\end{Theorem}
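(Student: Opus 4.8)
The plan is to recognize this final theorem as a direct specialization of Theorem~\ref{thm:polygon->evenpolygon_red1} to the single-map setting where all basic sets coincide with $\cU$. In Theorem~\ref{thm:polygon->evenpolygon_red1} we showed, for even $N$, that given maps $T_K$ solving the $(N+1)$-gon equation, deleting the first component of the codomain of each $T_K$ with $1 \in K$ yields maps $T_{K'}$ that solve the $N$-gon equation. In the present situation all the sets $\cU_J$ are identified with a single set $\cU$, and there is only one map $T$ rather than a family indexed by $K$. The essential observation is that, under this identification, the hypothesis ``$T_K$ with $1 \in K$ has a redundant first codomain component'' becomes simply the statement that the single map $T^{(N+1)}$ has a first codomain component, and cutting it off produces the single map $T$.

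First I would invoke Theorem~\ref{thm:polygon->evenpolygon_red1} to reduce the problem to verifying that its hypotheses hold in the single-map case. Since the $(N+1)$-gon equation here is even in the sense that $N$ is even (so $N+1$ is odd), I would note that the preceding Remark following Theorem~\ref{thm:polygon->evenpolygon_red1} already records exactly the structural consequence needed: a solution of the odd $(N+1)$-gon equation has each $T_K$ with $1 \in K$ of the product form $S_{K'} \times T_{K'}$, and the maps $T_{K'}$ solve the even $N$-gon equation. In the single-map reduction this says precisely that $T^{(N+1)} = S \times T$ for some map $S$ carrying the split-off first codomain component, and that $T$ is an $N$-gon map. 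So the content of the theorem is just the single-map shadow of that Remark.

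The one point that genuinely requires care, and which I expect to be the only real step, is checking that passing to a single map does not destroy the applicability of the reduction. In the general (family) version, the reduction relies on the fact that the dropped map $T_{\hat 1}$ maps into spaces disregarded by all other maps, and on the index-shift bookkeeping that converts complementary indices in $[N+1]\setminus\{1\}$ into complementary indices in $[N]$. When all sets are equal and there is a single map, this combinatorial bookkeeping still governs \emph{on which positions} the map acts; identifying the sets does not alter the pasting scheme of the higher Tamari order $T(N+1,N-1)$, it merely makes the individual factor sets indistinguishable. Thus the position-index form of the equation (as in the even-polygon examples of Section~\ref{sec:ExPEs}) transforms exactly as in the proof of Theorem~\ref{thm:polygon->evenpolygon_red1}, and the reduced equation is the position-index form of the $N$-gon equation for the single map $T$.

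The hard part, to the extent there is one, is therefore purely expository: making explicit that ``$T$ with the first component of its codomain cut off'' is well defined as a single map, i.e.\ that after the index shift all the reduced maps $T_{K'}$ coincide as the same map $T$ acting at the appropriate positions. This is immediate from the hypothesis that all basic sets are the same set $\cU$, so that $\cU_J = \cU$ for every $J$ and $\cU_{K'}$ is likewise $\cU$; the codomain of $T^{(N+1)}$ is $\cU^{m}$ for the appropriate $m$, and cutting its first factor yields a map $\cU^{\,\bullet} \to \cU^{\,m-1}$ which is exactly the single $N$-gon map. I would then conclude by citing Theorem~\ref{thm:polygon->evenpolygon_red1} and the subsequent Remark, remarking that no further computation is needed beyond this identification.
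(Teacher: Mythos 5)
Your proposal is correct and matches the paper's own route: the paper states this theorem precisely as a special case (``rather a corollary'') of Theorem~\ref{thm:polygon->evenpolygon_red1} and its subsequent Remark, obtained by taking all sets $\cU_J$ equal to $\cU$ and all maps $T_K$ equal to the single map $T^{(N+1)}$, so that all reduced maps $T_{K'}$ coincide with the single map $T$. Your additional observation that identifying the sets leaves the positional bookkeeping of the reduction untouched is exactly the point that makes the specialization immediate, and note that in this even-$N$ case no degeneracy hypothesis is needed at all, since Theorem~\ref{thm:polygon->evenpolygon_red1} splits off $T_{\hat{1}}$ unconditionally.
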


\begin{Theorem}
For odd $N$, let $\tT^{(N+1)}$ be a dual $(N+1)$-gon map and $\tT$ that map with the first component of
its codomain cut off. Then $\tT$ is a dual $N$-gon map.
\end{Theorem}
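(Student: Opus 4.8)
The plan is to obtain this statement as the single-map specialization of Theorem~\ref{thm:dualpolygon->odddualpolygon_red1}; being a corollary is no loss, since that theorem already carries the combinatorial content. For odd $N$ it shows that any family $\big\{ \tT_K \mid K \in {[N+1] \choose N} \big\}$ solving the dual $(N+1)$-gon equation reduces, upon excluding the first codomain component of each $\tT_K$ with $1 \in K$, to a family $\{ \tT_{K'} \}$ solving the dual $N$-gon equation, and it imposes no degeneracy hypothesis in the odd case. The task is therefore only to check that collapsing all labelled sets to a single set $\cU$, and all maps to a single map, faithfully realizes both the hypotheses and the conclusion of that theorem.

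First I would set $\cU_J = \cU$ for every label $J$ and $\tT_K = \tT^{(N+1)}$ for every $K$, this being exactly what a single dual $(N+1)$-gon map means. I would then confirm that the arities are compatible: for odd $N$ the element $K \setminus \{1\}$ is the first member of $\cev{P}_{\rm o}(K)$, so the slot being removed is the leading codomain factor of $\tT^{(N+1)}$, and cutting it leaves a map $\tT$ with precisely the numbers of $\cU$-factors in domain and codomain demanded of a dual $N$-gon map. The decisive observation is that, since the cut is applied uniformly to one and the same map $\tT^{(N+1)}$, all of the reduced maps $\tT_{K'}$ produced by Theorem~\ref{thm:dualpolygon->odddualpolygon_red1} coincide with this single $\tT$; its conclusion then reads verbatim as the assertion that $\tT$ solves the dual $N$-gon equation.

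The step that needs genuine care -- and which I expect to be the main obstacle -- is that the splitting-off of the discarded map $\tT_{\hat 1}$ (the one with $1 \notin K$) in Theorem~\ref{thm:dualpolygon->odddualpolygon_red1} relies on distinguishing its codomain factors, which are labelled by subsets not containing $1$, from those of the remaining maps; once every set is identified with $\cU$ this distinction survives only through the position bookkeeping. I would therefore run the argument in position-index form, starting from the single-map dual even $(N+1)$-gon equation \eqref{dual_N-gon_even} (with $N$ replaced by $N+1$), track how the positions shift when the leading codomain factor is removed, and verify that cutting the first codomain component commutes with the passage from the labelled form to the position-index form, so that what survives is exactly the single-map dual $N$-gon equation. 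The renumbering used in the general proof acts trivially here, merely relabelling copies of the one map $\tT$.

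As a concrete check of the smallest case $N = 5$, I would confirm the claim directly from the explicit relations. Writing the dual hexagon map as $\tT^{(6)}(a,b) = (a \ast b,\, a \cdot b,\, a \diamond b)$, cutting the first codomain component leaves $\tT(a,b) = (a \cdot b,\, a \diamond b)$. The second and third of the conditions \eqref{dualhex_conditions_beyond_pent}, together with the associativity of $\cdot$ supplied by \eqref{5gon_prod_conditions}, are then precisely the dual pentagon conditions \eqref{dual5gon_conditions} for the pair $(\cdot,\diamond)$, confirming that $\tT$ is a dual pentagon map.
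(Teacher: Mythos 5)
Your proposal is correct and matches the paper's own treatment: the paper states this theorem as one of the ``rather corollaries'' obtained by specializing Theorem~\ref{thm:dualpolygon->odddualpolygon_red1} (which for odd $N$ needs no degeneracy hypothesis) to a single set $\cU$ and a single map, exactly as you do. Your additional position-index bookkeeping and the explicit $N=5$ check via \eqref{dualhex_conditions_beyond_pent} reducing to \eqref{dual5gon_conditions} are sound, just more detail than the paper records.
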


Further theorems derived from those in Section~\ref{sec:reductions} will be formulated in
the following subsections, where we also provide examples for these results and derive more powerful results
for the (dual) polygon equations up to the (dual) octagon equation.

\subsection[Degenerate dual (N+1)-gon maps from (dual) N-gon maps]{Degenerate dual $\boldsymbol{(N+1)}$-gon maps from (dual) $\boldsymbol{N}$-gon maps}

For any (not necessarily trigon or dual trigon) map $T\colon \cU \rightarrow \cU$, $T^{(4)}(a,b) := T(a)$
and also $T^{(4)}(a,b) := T(b)$ are (degenerate) tetragon maps. Furthermore, we have the following.

\begin{Proposition}
Let $\tT^{(4)}$ be a dual tetragon map. Then $T^{(5)}(a,b) := \tT^{(4)}(b)$
is a pentagon map and $\tT^{(5)}(a,b) := \tT^{(4)}(a)$
is a dual pentagon map.
\end{Proposition}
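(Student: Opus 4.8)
The plan is to reduce both claims to the component characterization of a single-map dual tetragon map already recorded after \eqref{dual4gon_eq_pos}: such a map is a pair $\tT^{(4)} = (\tT_1, \tT_2)$ of maps $\tT_i \colon \cU \to \cU$ that are idempotent, $\tT_i \tT_i = \tT_i$, and that commute, $\tT_1 \tT_2 = \tT_2 \tT_1$. So I would first fix such a pair and then translate each candidate map into the binary-operation notation used for the (dual) pentagon equation, exploiting the crucial feature that each candidate map ignores one of its two arguments.

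For the pentagon map $T^{(5)}(a,b) := \tT^{(4)}(b)$, writing $T^{(5)}(a,b) = (a \ast b, a \cdot b)$ as in \eqref{5gon_multiplications} gives $a \ast b = \tT_1(b)$ and $a \cdot b = \tT_2(b)$, both independent of $a$. Substituting these into the three pentagon conditions \eqref{5gon_prod_conditions}, I expect each to collapse to a dual tetragon axiom: the first condition becomes $\tT_1 \tT_1(c) = \tT_1(c)$, i.e.\ idempotency of $\tT_1$; the second becomes $\tT_2 \tT_1(c) = \tT_1 \tT_2(c)$, commutativity; and the associativity condition $(a \cdot b) \cdot c = a \cdot (b \cdot c)$ becomes $\tT_2(c) = \tT_2 \tT_2(c)$, idempotency of $\tT_2$. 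Hence all three hold precisely because $\tT^{(4)}$ is a dual tetragon map.

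For the dual pentagon map $\tT^{(5)}(a,b) := \tT^{(4)}(a)$, I would use the parametrization $\tT(a,b) = (a \cdot b, a \ast b)$ fixed before \eqref{dual5gon_conditions}; note that this orders the two operations oppositely to the pentagon parametrization, so it is worth flagging which component is which. Here one reads off $a \cdot b = \tT_1(a)$ and $a \ast b = \tT_2(a)$, both independent of $b$. Substituting into \eqref{dual5gon_conditions}, I expect the associativity condition $a \cdot (b \cdot c) = (a \cdot b) \cdot c$ to reduce to idempotency of $\tT_1$, the mixed condition $(a \ast (b \cdot c)) \cdot (b \ast c) = (a \cdot b) \ast c$ to $\tT_1 \tT_2(a) = \tT_2 \tT_1(a)$, again commutativity, and the last condition to idempotency of $\tT_2$.

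The computations are entirely routine, so the only real care is bookkeeping: getting right which component of $\tT^{(4)}$ plays the role of which binary operation in each target equation (recall the swapped ordering of the two operations between the two parametrizations), and tracking that each operation ignores either its first or its second argument. Once these identifications are fixed, every pentagon and dual-pentagon axiom is either an idempotency statement for one of the $\tT_i$ or the commutativity $\tT_1 \tT_2 = \tT_2 \tT_1$, and so is supplied directly by the dual tetragon hypothesis; no further structure is needed.
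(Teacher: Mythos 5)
Your proof is correct and is precisely the routine verification the paper summarizes as ``This is quickly verified'': you use the paper's own characterization of a single-map dual tetragon map after \eqref{dual4gon_eq_pos} (a pair $\tT_1,\tT_2$ of idempotent commuting maps) and substitute the degenerate binary operations into \eqref{5gon_prod_conditions} and \eqref{dual5gon_conditions}, with each axiom collapsing exactly as you state. Your explicit flagging of the swapped ordering of $\ast$ and $\cdot$ between the pentagon parametrization \eqref{5gon_multiplications} and the dual pentagon one is exactly the right bookkeeping point, and nothing further is needed.
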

\begin{proof}
This is quickly verified.
\end{proof}

\begin{Proposition}
Let $T^{(5)}$ be a pentagon map. Then $
 T^{(6)}(a,b,c) := T^{(5)}(a,b)$
is a hexagon map.
\end{Proposition}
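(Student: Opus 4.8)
The plan is to verify the hexagon conditions \eqref{6gon_ternary_ops_cond} directly and observe that they collapse to the pentagon conditions \eqref{5gon_prod_conditions}. First I would name the component operations on both sides. Writing the pentagon map as $T^{(5)}(a,b) = (a \ast b, a \cdot b)$ as in \eqref{5gon_multiplications}, the hypothesis is precisely that the three identities \eqref{5gon_prod_conditions} hold for all $a,b,c \in \cU$. A hexagon map $T^{(6)}$ is encoded through two ternary operations via \eqref{6gon_ternary_ops}, $T^{(6)}(a,b,c) = (\langle a,b,c\rangle, [a,b,c])$, and being a hexagon map is equivalent to the three identities \eqref{6gon_ternary_ops_cond}.

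Next I would read off the ternary operations of the candidate $T^{(6)}(a,b,c) := T^{(5)}(a,b)$. Since its last argument is ignored, $\langle a,b,c\rangle = a \ast b$ and $[a,b,c] = a \cdot b$, both independent of the third entry. Substituting these degenerate operations into \eqref{6gon_ternary_ops_cond}, every spectator third entry drops out and each hexagon identity reduces to a pentagon identity: the first becomes $(a \ast b) \ast ((a \cdot b) \ast c) = b \ast c$, the second becomes $(a \ast b) \cdot ((a \cdot b) \ast c) = a \ast (b \cdot c)$, and the third becomes $(a \cdot b) \cdot c = a \cdot (b \cdot c)$. These are exactly the three conditions in \eqref{5gon_prod_conditions}, which hold by assumption, so $T^{(6)}$ solves the hexagon equation.

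Conceptually this is the single-map shadow of Theorem~\ref{thm:polygon->oddpolygon_red1}(2) for $N = 5$, which extends a pentagon solution to a degenerate hexagon solution. The only point not automatic from that general statement is that in the general extension the dropped map $T_{\hat{1}}$ may be chosen freely, whereas in the single-map setting it must again be the same map $T^{(6)}$; the computation above is precisely the check that the degenerate choice is compatible with this extra constraint. I do not expect a genuine obstacle here: the sole care required is to keep track of which arguments the degenerate ternary operations ignore when substituting into \eqref{6gon_ternary_ops_cond}, after which the reduction to \eqref{5gon_prod_conditions} is immediate.
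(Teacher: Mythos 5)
Your proof is correct and is essentially the paper's own argument: the paper likewise sets $\langle a,b,c\rangle = a\ast b$ and $[a,b,c] = a\cdot b$ and observes that \eqref{6gon_ternary_ops_cond} then reduces exactly to \eqref{5gon_prod_conditions}. Your explicit substitution of the three identities and the remark connecting this to Theorem~\ref{thm:polygon->oddpolygon_red1}(2) simply spell out details the paper leaves implicit.
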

\begin{proof}
If $\langle a,b,c \rangle = a \ast b$ and $[a,b,c] = a \cdot b$, \eqref{6gon_ternary_ops_cond} becomes
\eqref{5gon_prod_conditions}.
\end{proof}

\begin{Proposition}
Let $\tT^{(5)}$ be a dual pentagon map. Then
$T^{(6)}(a,b,c) := \tT^{(5)}(b,c)$
is a hexagon map.
\end{Proposition}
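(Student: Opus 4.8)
The plan is to reduce the three hexagon conditions \eqref{6gon_ternary_ops_cond} to the three dual pentagon conditions \eqref{dual5gon_conditions} by direct substitution. First I would write the dual pentagon map as $\tT^{(5)}(a,b) = (a\cdot b, a\ast b)$, so that the binary operations $\cdot$ and $\ast$ satisfy \eqref{dual5gon_conditions}. Since $T^{(6)}(a,b,c) := \tT^{(5)}(b,c) = (b\cdot c, b\ast c)$, the two ternary operations defining the candidate hexagon map are $\langle a,b,c\rangle = b\cdot c$ and $[a,b,c] = b\ast c$. The crucial structural feature, which I would exploit throughout, is that both of these operations ignore their \emph{first} argument.

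The key step is to observe that this independence makes every nested expression in \eqref{6gon_ternary_ops_cond} collapse. For instance $\langle a,b,d\rangle = b\cdot d$ and $[a,b,d] = b\ast d$, whence $\langle [a,b,d],c,e\rangle = c\cdot e$ and $[[a,b,d],c,e] = c\ast e$ regardless of the value of the discarded first slot. Substituting in this manner, the first hexagon condition becomes $(c\cdot e)\cdot f = c\cdot(e\cdot f)$, i.e.\ associativity of $\cdot$, which is the first line of \eqref{dual5gon_conditions}. The second becomes $(c\cdot e)\ast f = (c\ast(e\cdot f))\cdot(e\ast f)$ and the third becomes $c\ast e = (c\ast(e\cdot f))\ast(e\ast f)$; after the renaming $a\mapsto c$, $b\mapsto e$, $c\mapsto f$ these are exactly the second and third dual pentagon conditions. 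Each of the three hexagon conditions is thereby equivalent to one of the three dual pentagon conditions, and since the latter hold by hypothesis, $T^{(6)}$ solves the hexagon equation.

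I do not expect any genuine obstacle: as the pattern of the neighbouring propositions (notably the pentagon-to-hexagon case, where one observes that \eqref{6gon_ternary_ops_cond} specializes to \eqref{5gon_prod_conditions}) suggests, the entire content is the collapse of nested ternary operations under the first-argument degeneracy, followed by a variable renaming. The only point demanding care is to keep track of the order of the two output components --- which operation plays the role of $\langle\,\cdot\,\rangle$ and which of $[\,\cdot\,]$ --- so that the three reduced identities are matched with the correct lines of \eqref{dual5gon_conditions}; with that bookkeeping in place the verification is immediate.
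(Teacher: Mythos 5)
Your proposal is correct and follows essentially the same route as the paper, whose proof consists precisely of the observation that setting $\langle a,b,c \rangle = b \cdot c$ and $[a,b,c] = b \ast c$ turns \eqref{6gon_ternary_ops_cond} into \eqref{dual5gon_conditions}. Your explicit substitutions, the collapse of nested operations under the first-argument degeneracy, and the final renaming $a\mapsto c$, $b\mapsto e$, $c\mapsto f$ simply spell out that one-line verification in full detail, with the component bookkeeping handled correctly.
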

\begin{proof}
If $\langle a,b,c \rangle = b \cdot c$ and $[a,b,c] =b \ast c$, then \eqref{6gon_ternary_ops_cond} becomes \eqref{dual5gon_conditions}.
\end{proof}

\begin{Proposition}
Let $\tT^{(6)}$ be a dual hexagon map. Then
$T^{(7)}(a,b,c) := \tT^{(6)}(b,c)$
is a heptagon map.
\end{Proposition}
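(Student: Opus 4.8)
The plan is to work entirely in the single-map, ternary-operation form, exactly as in the proofs of the preceding propositions. Writing the dual hexagon map as $\tT^{(6)}(a,b) = (a \ast b,\, a \cdot b,\, a \diamond b)$, the hypothesis is that $\ast$, $\cdot$, $\diamond$ satisfy \eqref{5gon_prod_conditions} and \eqref{dualhex_conditions_beyond_pent}. The candidate heptagon map is $T^{(7)}(a,b,c) = \tT^{(6)}(b,c)$, so in the notation $T(a,b,c) = (\{a,b,c\},\, \langle a,b,c\rangle,\, [a,b,c])$ of \eqref{7gon_ternary_ops} I would set $\{a,b,c\} = b \ast c$, $\langle a,b,c\rangle = b \cdot c$ and $[a,b,c] = b \diamond c$. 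To prove the claim I then have to check that these three ternary operations satisfy the six heptagon conditions, namely \eqref{6gon_ternary_ops_cond} together with \eqref{7gon_conditions_beyond_6gon}.

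The key structural feature is that every one of the three ternary operations ignores its first argument. Consequently the heavily nested left-hand sides of \eqref{6gon_ternary_ops_cond} and \eqref{7gon_conditions_beyond_6gon} collapse: an inner term such as $\{a,b,d\}$, $\langle a,b,d\rangle$ or $[a,b,d]$ only ever feeds a non-first slot of the surrounding operation, so the dummy variables $a$, $b$, $d$ drop out and only $c$, $e$, $f$ survive as genuine arguments. Carrying out the substitution, I expect each heptagon condition to reduce, under the relabelling $(a,b,c) \mapsto (c,e,f)$, to exactly one dual hexagon condition: the three conditions \eqref{6gon_ternary_ops_cond} involving only $\langle\,\rangle$ and $[\ ]$ should give the associativity of $\cdot$ in \eqref{5gon_prod_conditions} and the two $\diamond$-identities of \eqref{dualhex_conditions_beyond_pent}, while the three conditions \eqref{7gon_conditions_beyond_6gon} involving $\{\ \}$ should give the two remaining identities of \eqref{5gon_prod_conditions} and the first identity of \eqref{dualhex_conditions_beyond_pent}.

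The only real work is the bookkeeping: tracking which heptagon condition maps to which dual hexagon condition and confirming that after the collapse no residual dependence on $a$, $b$ or $d$ remains, so that the reduced identities are precisely those guaranteed by the hypothesis. I do not anticipate a genuine obstacle, since this is a routine (if notation-heavy) verification. As a sanity check, and to avoid the calculation altogether, I would note that the statement is the single-map instance of Theorem~\ref{thm:polygon->dualpolygon_red2}(2) with $N = 6$: the map $T^{(7)}$ is independent of its first argument, hence a degenerate solution obtained by extending the dual hexagon map, and the combinatorial argument there already certifies that such an extension solves the heptagon equation.
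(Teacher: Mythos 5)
Your proposal is correct and follows essentially the same route as the paper's proof: substitute $\{a,b,c\}=b\ast c$, $\langle a,b,c\rangle=b\cdot c$, $[a,b,c]=b\diamond c$ into \eqref{6gon_ternary_ops_cond} and \eqref{7gon_conditions_beyond_6gon}, and your stated correspondence (associativity and the last two $\diamond$-identities from \eqref{6gon_ternary_ops_cond}; the first two of \eqref{5gon_prod_conditions} and the first of \eqref{dualhex_conditions_beyond_pent} from \eqref{7gon_conditions_beyond_6gon}) matches the paper's condition-by-condition reduction exactly. Your fallback via Theorem~\ref{thm:polygon->dualpolygon_red2}(2) is also the paper's own justification for the general $2n$-gon version of this statement in Section~\ref{sec:relations_polygon_eqs}.
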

\begin{proof}
Writing $T^{(7)}(a,b,c) = (b \ast c, b \cdot c, b \diamond c)$,
the first two of equations \eqref{7gon_conditions_beyond_6gon} become the first two of \eqref{5gon_prod_conditions}.
The last two of \eqref{6gon_ternary_ops_cond} become the last two of \eqref{dualhex_conditions_beyond_pent}.
The first of equations \eqref{6gon_ternary_ops_cond} becomes the third of \eqref{5gon_prod_conditions},
the third of \eqref{7gon_conditions_beyond_6gon} the first of \eqref{dualhex_conditions_beyond_pent}.
\end{proof}

\begin{Proposition}
Let $\tT^{(6)}$ be a dual hexagon map. Then
$\tT^{(7)}(a,b,c) := \tT^{(6)}(a,b)$
is a dual heptagon map.
\end{Proposition}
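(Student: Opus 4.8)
The plan is to mirror the computation in the preceding proposition, but now substituting the degenerate ternary map into the \emph{dual} heptagon conditions \eqref{dualhept_conditions} rather than into the heptagon conditions. First I would write $\tT^{(6)}(a,b) = (a \ast b,\, a \cdot b,\, a \diamond b)$, so that the three ternary operations defining $\tT^{(7)}$ are
\[
 \{a,b,c\} = a \ast b, \qquad \langle a,b,c\rangle = a \cdot b, \qquad [a,b,c] = a \diamond b,
\]
each discarding its third argument. The key structural observation is that, because every operation ignores its last slot, in each of the six equations \eqref{dualhept_conditions} all the compound blocks $\{d,e,f\}$, $\langle d,e,f\rangle$, $[d,e,f]$ and the trailing $f$ sit in third positions and are therefore inert; the two sides of each condition collapse to an identity involving only the binary operations $\ast$, $\cdot$, $\diamond$.

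Next I would carry out this substitution line by line. Dropping the inert third arguments, the first three conditions in \eqref{dualhept_conditions} reduce respectively to $(a\ast b)\ast((a\cdot b)\ast c)=b\ast c$, to $(a\ast b)\cdot((a\cdot b)\ast c)=a\ast(b\cdot c)$, and to $(a\cdot b)\cdot c=a\cdot(b\cdot c)$, which are exactly the three relations \eqref{5gon_prod_conditions} that a dual hexagon map already imposes on its first two binary operations. The last three conditions reduce to $(a\diamond(b\cdot c))\ast(b\diamond c)=(a\ast b)\diamond((a\cdot b)\ast c)$, to $(a\diamond(b\cdot c))\cdot(b\diamond c)=(a\cdot b)\diamond c$, and to $(a\diamond(b\cdot c))\diamond(b\diamond c)=a\diamond b$, which are precisely the three further relations \eqref{dualhex_conditions_beyond_pent} characterizing $\diamond$. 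Hence all six dual heptagon conditions hold and $\tT^{(7)}$ is a dual heptagon map.

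Since the dual hexagon conditions are exactly \eqref{5gon_prod_conditions} together with \eqref{dualhex_conditions_beyond_pent}, the correspondence is one-to-one and no residual constraint survives. The only care required is purely combinatorial bookkeeping: in each equation of \eqref{dualhept_conditions} one must track into which operation's third slot each compound block falls, to confirm that it is discarded and that the surviving binary identity is the asserted one. There is no genuine analytic obstacle here; the argument is the precise dual of the preceding proposition, in which $\tT^{(6)}(b,c)$ produced a heptagon map by matching \eqref{7gon_conditions_beyond_6gon} and \eqref{6gon_ternary_ops_cond} against the same dual hexagon relations.
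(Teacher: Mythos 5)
Your proposal is correct and follows exactly the paper's own argument: writing $\tT^{(7)}(a,b,c) = (a \ast b,\, a \cdot b,\, a \diamond b)$ and substituting into \eqref{dualhept_conditions}, the six conditions collapse precisely to \eqref{5gon_prod_conditions} and \eqref{dualhex_conditions_beyond_pent}, which together characterize a dual hexagon map. Your line-by-line verification, including the observation that all blocks built from $d,e,f$ land in discarded third slots, is just the explicit bookkeeping the paper's one-line proof leaves implicit.
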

\begin{proof}
Writing $\tT^{(7)}(a,b,c) = (a \ast b, a \cdot b, a \diamond b)$,
\eqref{dualhept_conditions} becomes \eqref{5gon_prod_conditions} and
\eqref{dualhex_conditions_beyond_pent}.
\end{proof}

\begin{Proposition}
Let $T^{(7)}$ be a heptagon map. Then
$T^{(8)}(a,b,c,d) := T^{(7)}(a,b,c)$
is an octagon map.
\end{Proposition}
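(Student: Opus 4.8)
The plan is to verify the claim directly at the level of the defining operations, in the same spirit as the preceding propositions in this subsection. Write the heptagon map as $T^{(7)}(a,b,c)=(\{a,b,c\},\langle a,b,c\rangle,[a,b,c])$, so that its three ternary operations satisfy the six heptagon conditions, namely \eqref{6gon_ternary_ops_cond} together with \eqref{7gon_conditions_beyond_6gon}. Setting $T^{(8)}(a,b,c,d):=T^{(7)}(a,b,c)$ means that the three quaternary operations defining $T^{(8)}$ via $T^{(8)}(a,b,c,d)=(\{a,b,c,d\},\langle a,b,c,d\rangle,[a,b,c,d])$ are exactly the heptagon operations extended to ignore their fourth slot, i.e.\ $\{a,b,c,d\}=\{a,b,c\}$, and likewise for $\langle\cdot\rangle$ and $[\cdot]$.

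First I would substitute this degeneracy into the six octagon conditions \eqref{8gon_conditions}. The key observation is that every quaternary bracket occurring there has its fourth argument collapse: on the left-hand sides the outermost bracket loses its last entry $l$ and each inner quaternary bracket loses its last entry (so that, e.g., $\{[a,b,d,g],c,e,h\}$ becomes $\{[a,b,d],c,e\}$), while on the right-hand sides the compound argument $\{g,h,k,l\}$ drops out wherever it sits in a last slot, so that $\langle d,e,f,\{g,h,k,l\}\rangle\mapsto\langle d,e,f\rangle$, $[d,e,f,\{g,h,k,l\}]\mapsto[d,e,f]$, and $[b,c,\langle d,e,f,\{g,h,k,l\}\rangle,\langle g,h,k,l\rangle]\mapsto[b,c,\langle d,e,f\rangle]$. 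Carrying this out term by term, the first three of \eqref{8gon_conditions} reduce precisely to the three equations of \eqref{7gon_conditions_beyond_6gon}, and the last three reduce precisely to the three equations of \eqref{6gon_ternary_ops_cond}. Since these hold by the hypothesis that $T^{(7)}$ is a heptagon map, all six octagon conditions are satisfied and $T^{(8)}$ is an octagon map.

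The hard part here is purely clerical rather than conceptual: one must track which of the three operations each deeply nested bracket in \eqref{8gon_conditions} represents, and confirm that after the fourth-slot collapse the surviving nesting pattern matches the heptagon conditions with no residual dependence on $g,h,k,l$. Conceptually the result is unsurprising, since it is the single-map, all-sets-equal instance of part~(2) of Theorem~\ref{thm:polygon->oddpolygon_red1} for the odd value $N=7$, which already guarantees that a $7$-gon map extends to a degenerate $8$-gon map not depending on the last component of its domain; the direct check above is the explicit confirmation of that extension in terms of the underlying operations.
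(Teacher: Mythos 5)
Your proposal is correct and follows essentially the same route as the paper's proof: substituting the degenerate quaternary operations $\{a,b,c,d\}=\{a,b,c\}$, $\langle a,b,c,d\rangle=\langle a,b,c\rangle$, $[a,b,c,d]=[a,b,c]$ into the six octagon conditions \eqref{8gon_conditions} and observing that the first three reduce to \eqref{7gon_conditions_beyond_6gon} and the last three to \eqref{6gon_ternary_ops_cond}. Your term-by-term tracking of the fourth-slot collapse, and the remark tying the result to Theorem~\ref{thm:polygon->oddpolygon_red1}\,(2), are both accurate and consistent with the paper.
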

\begin{proof}
If $T$ does not depend on the last argument, setting
\[
 \{ a,b,c,d \} =: \{ a,b,c \}, \qquad
 \langle a,b,c,d \rangle =: \langle a,b,c \rangle, \qquad
 [a,b,c,d] =: [a,b,c],
\]
the first three of conditions \eqref{8gon_conditions} reduce to \eqref{7gon_conditions_beyond_6gon} and the
last three to \eqref{6gon_ternary_ops_cond}. The resulting six conditions are those for $T^{(7)}$ to be a
heptagon map.
\end{proof}

These results suggest that, more generally, the following statements hold, which determine extensions of each solution of an odd $N$-gon equation, or each solution of a dual $N$-gon equation, to a degenerate solution of a (dual) $(N+1)$-gon equation. Indeed, these general results follow from theorems in Section~\ref{sec:reductions}.

\begin{Theorem}
For $n \in \bbN$, $n>1$, let $T^{(2n-1)}$ be a $(2n-1)$-gon map. Then
\[
 T^{(2n)}(a_1,\ldots,a_n) := T^{(2n-1)}(a_1,\ldots,a_{n-1})
\]
is a $2n$-gon map.
\end{Theorem}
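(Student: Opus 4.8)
The plan is to read this statement as the single-map specialization of the extension part, part~(2), of Theorem~\ref{thm:polygon->oddpolygon_red1}. I set $N := 2n-1$, which is odd and, since $n>1$, satisfies $N>2$ (this is exactly why the hypothesis $n>1$ appears); then the $2n$-gon equation is the $(N+1)$-gon equation with $N+1$ even. Recalling the arities in the single-map setting, a $(2n-1)$-gon map is a map $T^{(2n-1)}\colon \cU^{n-1}\to\cU^{n-1}$, while a $2n$-gon map is a map $T^{(2n)}\colon \cU^{n}\to\cU^{n-1}$. The proposed $T^{(2n)}$ has exactly this type and, by construction, does not depend on its last argument $a_n$.

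First I would translate the defining degeneracy of $T^{(2n)}$ into the combinatorial-index language of Section~\ref{sec:reductions}. I realize $T^{(2n)}$ as the family $\big\{T_K\mid K\in{[N+1]\choose N}\big\}$ with every $T_K$ equal to $T^{(2n)}$ and all sets $\cU_J=\cU$. For $K$ with $1\in K$ and $N$ odd, the subset $K':=K\setminus\{1\}$ is the last element of the lexicographically ordered half-packet $\vec{P}_{\rm o}(K)$ (this is the parity computation already used in the proof of Theorem~\ref{thm:polygon->oddpolygon_red1}), and $|\vec{P}_{\rm o}(K)|=n$. Hence the last component $\cU_{K\setminus\{1\}}$ of the domain of $T_K$ is precisely the slot filled by the $n$-th argument $a_n$, and ``$T^{(2n)}$ ignores $a_n$'' is literally the hypothesis of Theorem~\ref{thm:polygon->oddpolygon_red1} that each $T_K$ with $1\in K$ is independent of the last component of its domain, with the excluded-component restriction returning $T^{(2n-1)}$.

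Next I would invoke the extension direction. Theorem~\ref{thm:polygon->oddpolygon_red1}(2) asserts that every solution of the odd $N$-gon equation extends to a degenerate solution of the $(N+1)$-gon equation: one adjoins to each $T_K$ ($1\in K$) the dummy last slot $\cU_{K\setminus\{1\}}$ on which it does not depend, and one chooses the remaining map $T_{\hat1}$, $\hat1=\{2,\ldots,N+1\}$, freely. In the single-map setting I take the one choice compatible with having a single map, namely $T_{\hat1}:=T^{(2n)}$, which has the correct type $\cU^n\to\cU^{n-1}$ (indeed $|\vec{P}_{\rm o}(\hat1)|=n$ and $|\cev{P}_{\rm e}(\hat1)|=n-1$). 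Since this is an admissible instance of the free choice in Theorem~\ref{thm:polygon->oddpolygon_red1}(2), the construction produces a solution of the even polygon equation \eqref{N-gon_even} (with $N$ replaced by $2n$) in which every $T_K$ equals $T^{(2n)}$; that is, $T^{(2n)}$ itself is a $2n$-gon map.

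The main obstacle is not a computation but the consistency check in the last step: the reduction theorems are phrased for families of possibly distinct maps, whereas the single-map constraint forces $T_{\hat1}=T^{(2n)}$. One must verify that this forced value is a legitimate instance of the ``freely chosen'' map of Theorem~\ref{thm:polygon->oddpolygon_red1}(2)---which it is, since that map is unconstrained beyond its domain and codomain, and is allowed to ignore an argument---and that passing from the abstract, rule-based combinatorial form to the position-index form with explicit transposition maps $\cP$ does not disturb the argument, the reduction having been established at the combinatorial level underlying both presentations. It is worth recording the endpoint $n=2$: there $T^{(3)}\colon\cU\to\cU$ is merely idempotent, and the construction recovers the statement that $T^{(4)}(a,b)=T^{(3)}(a)$ satisfies the associativity (tetragon) equation precisely because $T^{(3)}$ is idempotent, in agreement with the propositions verified directly above.
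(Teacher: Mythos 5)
Your proposal is correct and follows essentially the same route as the paper, which disposes of this theorem with the single remark that it is a special case of Theorem~\ref{thm:polygon->oddpolygon_red1}\,(2). Your elaboration of the one point the paper leaves implicit---that in the single-map setting the ``freely chosen'' map $T_{\hat{1}}$ is forced to equal $T^{(2n)}$ itself, which is admissible since the extension theorem places no constraint on it beyond its domain $\cU^{n}$ and codomain $\cU^{n-1}$---is accurate, as are your arity and parity checks identifying the dummy slot with the last argument $a_n$.
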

\begin{proof}
This a special case of Theorem~\ref{thm:polygon->oddpolygon_red1}\,(2). See the proof there.
\end{proof}

\begin{Theorem}
For $n \in \bbN$, $n>1$, let $\tT^{(2n-1)}$ be a dual $(2n-1)$-gon map. Then
\[
 T^{(2n)}(a_1,\ldots,a_n) := \tT^{(2n-1)}(a_2,\ldots,a_n)
\]
is a $2n$-gon map.
\end{Theorem}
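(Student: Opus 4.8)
The plan is to read this as the single-map, single-set specialization of Theorem~\ref{thm:polygon->dualpolygon_red2}\,(2), which holds for every $N>2$ and in particular for the odd value $N=2n-1$. That theorem asserts that any solution of the dual $N$-gon equation extends to a degenerate solution of the $(N+1)$-gon equation; with $N=2n-1$ the target is the even $2n$-gon equation, which is exactly the assertion here. So after invoking that result, the only thing left to verify is that the abstract extension there specializes to the concrete prescription $T^{(2n)}(a_1,\dots,a_n):=\tT^{(2n-1)}(a_2,\dots,a_n)$.

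First I would pin down the arities and the location of the frozen slot. Recall from the proof of Theorem~\ref{thm:polygon->dualpolygon_red2} that, for $K$ with $N+1\in K$, the set $K'=K\setminus\{N+1\}$ is the first element of $\vec{P}_{\rm o}(K)$, so the first factor of the domain of $T_K$ is $\cU_{K\setminus\{N+1\}}$, and the construction requires $T_K$ to be independent of precisely this factor. Collapsing all $\cU_J$ to a single set $\cU$, the dual $(2n-1)$-gon map $\tT^{(2n-1)}$ has domain $\cU^{n-1}$ while the $2n$-gon map $T^{(2n)}$ has domain $\cU^{n}$, the codomains agreeing (both $\cU^{n-1}$); the single extra input of $T^{(2n)}$ is the dummy first one. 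Hence the extension is the prepend-a-dummy-argument map, that is $T^{(2n)}(a_1,\dots,a_n)=\tT^{(2n-1)}(a_2,\dots,a_n)$, as stated.

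Next I would verify that the equation is genuinely satisfied. Under the identification $\cU_J:=\cU_{J\cup\{N+1\}}$ the dropped map $T_{\widehat{N+1}}$ maps only into factors disregarded by all other maps, so it splits off and the surviving composition is literally the dual $N$-gon equation obeyed by $\tT^{(2n-1)}$. In the single-map form this says that the system of conditions defining a $2n$-gon map reduces, on a map independent of its first argument, to the system defining a dual $(2n-1)$-gon map. A good sanity check is to match the base cases already recorded above: for $n=3$ this is exactly the proposition turning a dual pentagon map into a hexagon map, where \eqref{6gon_ternary_ops_cond} collapses to \eqref{dual5gon_conditions}, and for $n=4$ it is the dual heptagon into octagon instance, where \eqref{8gon_conditions} collapses to \eqref{dualhept_conditions} once the last (quaternary) argument is frozen.

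The main obstacle is the positional bookkeeping once the transposition maps $\cP$ are written out explicitly in the even $2n$-gon equation: one must confirm that freezing the first domain factor is compatible with the boldface position indices, and that no $\cP$ ever transports the dummy first slot into an active slot on which $\tT^{(2n-1)}$ operates. This is the one step I would write out with care; however, it has essentially been discharged at the combinatorial level in the proof of Theorem~\ref{thm:polygon->dualpolygon_red2}, so here it reduces to checking that passing to a single set is compatible with that reduction, which it is.
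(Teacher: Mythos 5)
Your proposal is correct and is essentially the paper's own proof: the paper likewise disposes of this theorem with a one-line citation of Theorem~\ref{thm:polygon->dualpolygon_red2}\,(2) applied with $N=2n-1$, and your arity count and identification of the dummy slot as the \emph{first} argument (since $K'=K\setminus\{N+1\}$ heads $\vec{P}_{\rm o}(K)$) are exactly right. One slip in your $n=4$ sanity check: there too the frozen argument is the first one, $T^{(8)}(a,b,c,d)=\tT^{(7)}(b,c,d)$, consistent with your second paragraph, whereas freezing the \emph{last} argument collapses \eqref{8gon_conditions} to the heptagon conditions \eqref{6gon_ternary_ops_cond} and \eqref{7gon_conditions_beyond_6gon} (the paper's heptagon-to-octagon proposition), not to \eqref{dualhept_conditions}.
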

\begin{proof}
This a special case of Theorem~\ref{thm:polygon->dualpolygon_red2}\,(2).
\end{proof}

\begin{Theorem}
For $n \in \bbN$, $n>1$, let $\tT^{(2n)}$ be a dual $2n$-gon map. Then
\[
 T^{(2n+1)}(a_1,\ldots,a_n) := \tT^{(2n)}(a_2,\ldots,a_n)
\]
is a $(2n+1)$-gon map.
\end{Theorem}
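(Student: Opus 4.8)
The plan is to read this statement as the single-map, all-sets-equal specialization of the extension direction, namely part~(2) of Theorem~\ref{thm:polygon->dualpolygon_red2}, evaluated at $N = 2n$. First I would pin down the arities to confirm that the prescription has the right shape. From the examples of Sections~\ref{sec:ExPEs} and~\ref{sec:dualPol}, a dual $2n$-gon map is of type $\cU^{n-1} \to \cU^{n}$ (e.g.\ the dual hexagon map $\cU^2 \to \cU^3$, the dual octagon map $\cU^3 \to \cU^4$), whereas an odd $(2n+1)$-gon map is of type $\cU^{n} \to \cU^{n}$ (e.g.\ the pentagon map $\cU^2 \to \cU^2$, the heptagon map $\cU^3 \to \cU^3$). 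Hence $T^{(2n+1)}(a_1,\ldots,a_n) := \tT^{(2n)}(a_2,\ldots,a_n)$ is a map $\cU^{n} \to \cU^{n}$ that simply disregards its first argument $a_1$, which is exactly the profile of a degenerate extension obtained by adjoining a dummy first domain slot.

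Next I would invoke the reduction $\mathrm{pr}_{N+1}$ of Theorem~\ref{thm:polygon->dualpolygon_red2}, which is stated for all $N>2$ without any parity restriction, so I may take $N = 2n$. Its part~(2) guarantees that every dual $N$-gon map extends to a degenerate $(N+1)$-gon map, by reversing the steps of part~(1). In the combinatorial index picture one sets $K = K' \cup \{N+1\}$ and builds $T_K$ from $\tT_{K'}$ by reinstating the degenerate first domain component $\cU_{K'}$; here $K' = K \setminus \{N+1\}$ is the minimal element of the lexicographically ordered packet $P(K)$ and therefore the first entry of $\vec{P}_{\rm o}(K)$, so this component genuinely sits in the first domain position. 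One then chooses an arbitrary $T_{\widehat{N+1}}$ whose domain involves only spaces ignored by all the other maps, so that it splits off. Collapsing all the sets $\cU_J$ to a single $\cU$ and all the $\tT_{K'}$ to a single map $\tT^{(2n)}$ turns this extension into precisely the stated formula, and deleting the split-off factor reduces the $(2n+1)$-gon equation, written in its odd standard form \eqref{N-gon_odd}, to the dual $2n$-gon equation that $\tT^{(2n)}$ satisfies by hypothesis.

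The step I expect to require the most care — and the natural main obstacle — is verifying that taking $N = 2n$ \emph{even}, rather than the odd $N = 2n-1$ handled in the immediately preceding theorem, does not disturb the identification of the adjoined component as the \emph{first} domain slot, since these two parities of $N$ yield two different ``neighboring'' relations from the very same reduction $\mathrm{pr}_{N+1}$. This rests on the purely combinatorial fact, independent of the parity of $N$, that $K' = K \setminus \{N+1\}$ is always the first element of $\vec{P}_{\rm o}(K)$; the only parity-dependent feature is the standard form in which the ambient $(2n+1)$-gon equation is displayed, and for odd $2n+1$ that is \eqref{N-gon_odd}, as required. Thus the argument of Theorem~\ref{thm:polygon->dualpolygon_red2}\,(2) goes through verbatim for even $N$, and no separate operation-level computation is needed. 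Should one wish to bypass the general reduction theorem, the fallback would be to match the explicit conditions for the dual $2n$-gon and $(2n+1)$-gon equations directly, as was done for the propositions above; but since those conditions are only tabulated through $N=8$, it is exactly the combinatorial route via $\mathrm{pr}_{N+1}$ that secures the result for all $n$.
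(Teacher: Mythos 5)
Your proposal is correct and coincides with the paper's own proof, which consists precisely of the observation that the statement is the special case $N=2n$ of Theorem~\ref{thm:polygon->dualpolygon_red2}\,(2), the extension direction of the reduction induced by $\mathrm{pr}_{N+1}$. Your additional checks — the arities $\cU^{n-1}\to\cU^{n}$ versus $\cU^{n}\to\cU^{n}$, and the parity-independent fact that $K'=K\setminus\{N+1\}$ is always the first element of \smash{$\vec{P}_{\rm o}(K)$}, so the dummy slot is the first domain argument — are exactly the details the paper leaves implicit.
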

\begin{proof}
This a special case of Theorem~\ref{thm:polygon->dualpolygon_red2}\,(2).
\end{proof}

\begin{Theorem}
For $n \in \bbN$, $n>1$, let $\tT^{(2n)}$ be a dual $2n$-gon map. Then
\[
 \tT^{(2n+1)}(a_1,\ldots,a_n) := \tT^{(2n)}(a_1,\ldots,a_{n-1})
\]
is a dual $(2n+1)$-gon map.
\end{Theorem}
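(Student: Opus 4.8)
The plan is to recognize this statement as the single-map, degenerate-extension specialization of one of the reduction results in Section~\ref{sec:reductions}, and then to pin down exactly which one. Since the map produced is again a \emph{dual} polygon map, the underlying projection must be the dual-to-dual projection $\tT(N+1,N-1) \to \tT(N,N-2)$ induced by $\mathrm{pr}_1$, rather than the projection induced by $\mathrm{pr}_{N+1}$, which interchanges the dual and non-dual Tamari orders and would instead produce an ordinary polygon map. With $N = 2n$ even, the relevant statement is therefore Theorem~\ref{thm:dualpolygon->evendualpolygon_red1}, and the direction we need is its part~(2).

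First I would settle the arity bookkeeping, to be sure the index shift in the assertion is the correct one. In the single-map case a dual $m$-gon map has domain $\cU^{\lfloor (m-1)/2 \rfloor}$; thus a dual $2n$-gon map has domain $\cU^{n-1}$ and a dual $(2n+1)$-gon map has domain $\cU^{n}$, so the extension must adjoin exactly one argument. I would then confirm, from the proof of Theorem~\ref{thm:dualpolygon->evendualpolygon_red1}\,(1), that for even $N$ the element $K \setminus \{1\}$ is the \emph{last} entry of $\vec{P}_{\rm e}(K)$, so that the component dropped in the reduction is the last one of the domain. Reversing that step, the extension supplied by part~(2) is exactly $\tT^{(2n)}$ made to ignore a newly adjoined last argument, i.e.\ $\tT^{(2n+1)}(a_1,\ldots,a_n) := \tT^{(2n)}(a_1,\ldots,a_{n-1})$, which is the formula in the statement.

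Next I would carry out the specialization to a single map. Part~(2) of Theorem~\ref{thm:dualpolygon->evendualpolygon_red1} leaves the split-off map $\tT_{\hat 1}$ (the one with $1 \notin K$) free; here I would take it to be another copy of the extended map, consistent with the standing convention that all $\cU_J$ equal $\cU$ and all maps coincide. The content of part~(1) is that, once every map with $1 \in K$ is degenerate in its last argument, the map $\tT_{\hat 1}$ acts only on components disregarded by all other maps and hence splits off, so that the dual $(2n+1)$-gon equation collapses to the dual $2n$-gon equation for $\tT^{(2n)}$. Reading this implication backwards gives the claim: if $\tT^{(2n)}$ solves the dual $2n$-gon equation, then its degenerate extension solves the dual $(2n+1)$-gon equation.

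The main obstacle is not any calculation but the verification that the abstract, diagram-level ``degenerate extension'' of Section~\ref{sec:reductions} coincides, once all maps are collapsed to a single $\tT$, with the concrete prescription ``ignore the last argument.'' The cleanest safeguard is to test the general prescription against the already-established smallest case: for $n = 3$ it must reproduce the Proposition asserting that $\tT^{(7)}(a,b,c) := \tT^{(6)}(a,b)$ is a dual heptagon map, where one checks directly that the conditions \eqref{dualhept_conditions} collapse onto \eqref{5gon_prod_conditions} together with \eqref{dualhex_conditions_beyond_pent}. Once the index conventions are matched in this known instance, the general result follows as the stated special case of Theorem~\ref{thm:dualpolygon->evendualpolygon_red1}\,(2).
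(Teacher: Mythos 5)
Your proposal is correct and takes essentially the same route as the paper, whose entire proof is the citation of Theorem~\ref{thm:dualpolygon->evendualpolygon_red1}\,(2) with $N=2n$ (see the proof of Theorem~\ref{thm:polygon->oddpolygon_red1}\,(2) for the extension mechanism). Your supporting checks --- the arity count, the fact that for even $N$ the element $K\setminus\{1\}$ is the last entry of \smash{$\vec{P}_{\rm e}(K)$} so the adjoined argument sits in the last slot, the free choice of $\tT_{\hat 1}$ as another copy of the extended map, and the consistency test against the dual heptagon Proposition --- merely make explicit what the paper leaves implicit.
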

\begin{proof}
This a special case of Theorem~\ref{thm:dualpolygon->evendualpolygon_red1}\,(2).
\end{proof}

\subsection{Further relations between neighboring polygon maps}
\label{subsec:further_rels}

\begin{Proposition}
Let $T_i^{(3)}$, $i=1,2$, be trigon maps. Then
\[
 \tT^{(4)}(a) = \big(T_1^{(3)}(a),T_2^{(3)}(a)\big)
\]
is a dual tetragon map if and only if the two trigon maps commute.
\end{Proposition}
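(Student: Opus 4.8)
The plan is to unwind the dual tetragon equation \eqref{dual4gon_eq_pos} componentwise for the specific map $\tT^{(4)}=\big(T_1^{(3)},T_2^{(3)}\big)$ and to observe that the idempotency conditions it produces are automatically satisfied, because each $T_i^{(3)}$ is a trigon map. First I would recall that, for a single map acting between copies of one set $\cU$, the trigon equation \eqref{3gon_eq} says precisely that the map is idempotent; hence by hypothesis $T_i^{(3)}\circ T_i^{(3)}=T_i^{(3)}$ for $i=1,2$.

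Next I would expand \eqref{dual4gon_eq_pos}. Writing $\tT^{(4)}=\big(\tT^{(4)}_1,\tT^{(4)}_2\big)$ with components $\tT^{(4)}_i=T_i^{(3)}$, applying $\tT^{(4)}$ to $a\in\cU$ yields the pair $\big(T_1^{(3)}(a),T_2^{(3)}(a)\big)\in\cU^2$. Applying $\tT^{(4)}_{\bsy 1}$ inserts a further copy of $\tT^{(4)}$ acting on the first slot, giving $\big(T_1^{(3)}(T_1^{(3)}(a)),T_2^{(3)}(T_1^{(3)}(a)),T_2^{(3)}(a)\big)$, whereas $\tT^{(4)}_{\bsy 2}$ acts on the second slot, giving $\big(T_1^{(3)}(a),T_1^{(3)}(T_2^{(3)}(a)),T_2^{(3)}(T_2^{(3)}(a))\big)$. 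Equating these two triples componentwise produces exactly the three conditions
\begin{gather*}
 T_1^{(3)}\big(T_1^{(3)}(a)\big)=T_1^{(3)}(a), \qquad
 T_2^{(3)}\big(T_1^{(3)}(a)\big)=T_1^{(3)}\big(T_2^{(3)}(a)\big), \qquad
 T_2^{(3)}\big(T_2^{(3)}(a)\big)=T_2^{(3)}(a),
\end{gather*}
which is precisely the idempotency-and-commutativity split already recorded below \eqref{dual4gon_eq_pos}.

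Since $T_1^{(3)}$ and $T_2^{(3)}$ are trigon maps, the first and third equations hold identically, so the dual tetragon equation collapses to the single middle condition $T_2^{(3)}\circ T_1^{(3)}=T_1^{(3)}\circ T_2^{(3)}$, the commutativity of the two trigon maps. This delivers both directions of the asserted equivalence simultaneously. The only point requiring care—and the sole potential obstacle—is the bookkeeping of the position indices $\bsy 1$ and $\bsy 2$ when composing $\tT^{(4)}\colon\cU\to\cU^2$ with itself, making sure the two outputs are inserted into the correct slots; beyond this there is no genuine difficulty, which is why the statement is quickly verified.
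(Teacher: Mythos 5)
Your proof is correct and is exactly the argument the paper intends: the paper's proof simply says ``easily verified'' and points to Section~5.2, where the split of the dual tetragon equation $\tT_{\bsy{2}}\tT = \tT_{\bsy{1}}\tT$ into idempotency of each component plus commutativity is recorded, and you have carried out precisely that componentwise expansion. Your bookkeeping of the position indices is also right, so nothing further is needed.
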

\begin{proof}
This is easily verified. Also see Section~\ref{subsec:dual4gon}.
\end{proof}

\begin{Proposition}
For a map $T^{(5)}: \cU \times \cU \rightarrow \cU \times \cU$, let us write
\[
 T^{(5)}(a,b) =: \big(a \ast b, T^{(4)}(a,b)\big),
\]
with a binary operation $\ast$ and a map $T^{(4)} \colon \cU \times \cU \rightarrow \cU$. The following conditions are equivalent:
\begin{itemize}\itemsep=0pt
\item[$(1)$] $T^{(5)}$ is a pentagon map.
\item[$(2)$] $T^{(4)}$ is a tetragon map and, with $a \cdot b := T^{(4)}(a,b)$, the binary operations $\cdot$
and $\ast$ satisfy
\[
 (a \ast b) \ast ((a \cdot b) \ast c) = b \ast c, \qquad
 (a \ast b) \cdot ((a \cdot b) \ast c) = a \ast (b \cdot c),
\]
for all $a,b,c \in \cU$.
\end{itemize}
\end{Proposition}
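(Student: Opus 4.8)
The plan is to exploit the explicit characterization of pentagon maps already recorded in the excerpt, namely that a single-map pentagon solution $T(a,b) = (a\ast b, a\cdot b)$ is equivalent to the three conditions \eqref{5gon_prod_conditions}. Since the statement writes $T^{(5)}(a,b) = (a\ast b, T^{(4)}(a,b))$ and sets $a\cdot b := T^{(4)}(a,b)$, the two operations $\ast$ and $\cdot$ appearing here are precisely the two multiplications in \eqref{5gon_multiplications}. Thus condition $(1)$ is, by that earlier equivalence, simply the conjunction of all three equations in \eqref{5gon_prod_conditions}: the two ``mixed'' conditions displayed in part $(2)$ together with the associativity condition $(a\cdot b)\cdot c = a\cdot(b\cdot c)$.

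First I would invoke the equivalence between the restricted pentagon equation \eqref{5gon_eq_pos} and the system \eqref{5gon_prod_conditions}, which the paper states outright, so that $(1)$ holds if and only if all three equations in \eqref{5gon_prod_conditions} hold. Next I would recall, from the subsection on the tetragon equation, that for a single map $T^{(4)}$ with $T^{(4)}(a,b) = a\cdot b$, being a tetragon map is exactly the associativity relation $(a\cdot b)\cdot c = a\cdot(b\cdot c)$ (this is the content of ``the tetragon equation $TT_{\bsy1} = TT_{\bsy2}$ becomes the associativity relation for $\cdot$''). Hence the third equation of \eqref{5gon_prod_conditions} is literally the assertion that $T^{(4)}$ is a tetragon map. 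With these two identifications in hand, the proof is then purely a matter of matching: the first two equations in part $(2)$ are verbatim the first two equations of \eqref{5gon_prod_conditions}, and the third of \eqref{5gon_prod_conditions} is the tetragon condition on $T^{(4)}$. Therefore the full system \eqref{5gon_prod_conditions} is equivalent to the conjunction of ``$T^{(4)}$ is a tetragon map'' and the two displayed mixed conditions, which is exactly $(2)$.

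The argument is essentially a decomposition of the three pentagon conditions into one ``diagonal'' condition (associativity of $\cdot$, recognized as the tetragon equation for the second component $T^{(4)}$) plus the two genuinely pentagonal conditions coupling $\ast$ and $\cdot$. I do not expect any serious obstacle here, since everything reduces to citing equivalences already established in the excerpt and reading off the correspondence of operations; indeed the preceding Proposition's one-line proof (``This is quickly verified'') signals that the author regards such matchings as routine. The only point requiring a little care is to state clearly that the binary operation $\ast$ of $T^{(5)}$ in \eqref{5gon_multiplications} coincides with the $\ast$ in the present notation and that $T^{(4)}(a,b) = a\cdot b$ is identified with the second multiplication of \eqref{5gon_multiplications}; once this notational alignment is made explicit, both implications $(1)\Rightarrow(2)$ and $(2)\Rightarrow(1)$ follow immediately and symmetrically from the equivalence $\eqref{5gon_eq_pos}\Leftrightarrow\eqref{5gon_prod_conditions}$ together with the tetragon characterization of associativity.
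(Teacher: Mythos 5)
Your proposal is correct and follows exactly the paper's own argument: the paper's proof likewise just cites the equivalence of the single-map pentagon equation \eqref{5gon_eq_pos} with the three conditions \eqref{5gon_prod_conditions}, identifies the associativity condition $(a\cdot b)\cdot c = a\cdot(b\cdot c)$ as the statement that $T^{(4)}$ is a tetragon map, and reads off the remaining two equations as the compatibility conditions in part~$(2)$. Your extra care about aligning the operations $\ast$ and $\cdot$ with those in \eqref{5gon_multiplications} is a fair point of rigor but does not change the route.
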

\begin{proof}
\eqref{5gon_prod_conditions} shows that the conditions for $T^{(5)}$
to be a pentagon map are equivalent to $T^{(4)}$ being a tetragon map (which means associativity of $\cdot$)
and the two compatibility conditions for the two binary operations.
\end{proof}

\begin{Corollary}
Let $T^{(4)}$ be a tetragon map. Then
\[
 T^{(5)}(a,b) = \big(b, T^{(4)}(a,b)\big)
\]
is a pentagon map.
\end{Corollary}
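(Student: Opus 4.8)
The plan is to invoke the immediately preceding Proposition, which characterizes a pentagon map written as $T^{(5)}(a,b) = (a \ast b, T^{(4)}(a,b))$ by the two requirements that $T^{(4)}$ be a tetragon map and that the pair of compatibility conditions relating $\ast$ and $\cdot := T^{(4)}$ hold. The defining formula $T^{(5)}(a,b) = (b, T^{(4)}(a,b))$ has exactly this shape with the special choice $a \ast b := b$, i.e.\ $\ast$ is projection onto the second argument. Thus the task reduces to checking that this choice of $\ast$, together with the given $\cdot$, satisfies the two compatibility conditions.

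First I would note that the tetragon hypothesis on $T^{(4)}$ is precisely one of the two requirements of the Proposition, so nothing further is needed there; in particular, associativity of $\cdot$ (the third line of \eqref{5gon_prod_conditions}) is automatic. It then remains only to verify the two compatibility conditions. Since $a \ast b = b$ for all $a,b$, every occurrence of $\ast$ collapses to its right factor: $a \ast b = b$ and $(a \cdot b) \ast c = c$. Substituting into the first condition $(a \ast b) \ast ((a \cdot b) \ast c) = b \ast c$ gives $c = c$, and substituting into the second condition $(a \ast b) \cdot ((a \cdot b) \ast c) = a \ast (b \cdot c)$ gives $b \cdot c = b \cdot c$. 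Both are identities.

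With both compatibility conditions verified and the tetragon hypothesis supplying the remaining requirement, the Proposition yields that $T^{(5)}$ is a pentagon map. There is no real obstacle here: choosing $\ast$ to be a projection trivializes the compatibility conditions that mix $\ast$ and $\cdot$, and the single substantive input---associativity of $\cdot$---is handed over verbatim by the tetragon assumption. The only point requiring minor care is to place the component $b$ in the first slot of $T^{(5)}$, matching the convention $T^{(5)}(a,b) = (a \ast b, a \cdot b)$ of \eqref{5gon_multiplications}.
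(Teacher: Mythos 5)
Your proof is correct and follows essentially the same route as the paper, which also deduces the Corollary from the preceding Proposition by setting $a \ast b = b$ and observing that the two compatibility conditions become identities. Your explicit verification of the substitutions ($c = c$ and $b \cdot c = b \cdot c$) simply spells out what the paper leaves implicit.
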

\begin{proof}
Setting $a \ast b = b$ solves the two compatibility equations in condition (2) of the preceding proposition.
\end{proof}

\begin{Corollary}
Let $T^{(4)}$ be a tetragon map and $u$ a fixed element of $\cU$. Then
\[
 T^{(5)}(a,b) = \big(u, T^{(4)}(a,b)\big)
\]
is a pentagon map if and only if $T^{(4)}(u,u)=u$.
\end{Corollary}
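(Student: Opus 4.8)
The plan is to reduce this statement to the preceding proposition, which characterizes exactly when a map of the form $T^{(5)}(a,b) = \big(a \ast b, T^{(4)}(a,b)\big)$ is a pentagon map. Here the map under consideration corresponds to the choice in which the first binary operation is constant, $a \ast b = u$ for all $a,b \in \cU$, while the second operation is $a \cdot b := T^{(4)}(a,b)$. So the whole argument is a matter of substituting $a \ast b = u$ into the two compatibility conditions of condition~(2) of that proposition and seeing which survive.

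First I would record that, by hypothesis, $T^{(4)}$ is already a tetragon map, so the associativity part of condition~(2) is automatically in force and plays no role in the equivalence; only the two compatibility conditions
\[
 (a \ast b) \ast ((a \cdot b) \ast c) = b \ast c, \qquad
 (a \ast b) \cdot ((a \cdot b) \ast c) = a \ast (b \cdot c)
\]
remain to be analyzed.

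Next I would substitute $a \ast b = u$ everywhere. In the first condition every expression of the form $x \ast y$ equals $u$, so both sides collapse to $u$ and the equation holds identically, regardless of any property of $u$. In the second condition the left-hand side becomes $u \cdot u$ and the right-hand side becomes $u$, so this condition is equivalent to $u \cdot u = u$, that is, to $T^{(4)}(u,u) = u$.

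Finally, combining these two observations with the equivalence of the preceding proposition yields that $T^{(5)}$ is a pentagon map precisely when $T^{(4)}(u,u) = u$, which is the claim. I do not expect any genuine obstacle; the only point deserving a moment's care is to verify that the first compatibility condition imposes no hidden constraint on $u$, which the constancy of $\ast$ makes immediate.
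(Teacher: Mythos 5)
Your proposal is correct and follows essentially the same route as the paper's proof: both invoke the preceding proposition and observe that the constant choice $a \ast b = u$ makes the first compatibility condition hold identically while reducing the second to $u \cdot u = u$, i.e., $T^{(4)}(u,u)=u$. Your only addition is the explicit remark that the associativity of $\cdot$ is already guaranteed by the hypothesis that $T^{(4)}$ is a tetragon map, which the paper leaves implicit.
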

\begin{proof}
Setting $a \ast b = u$ for all $a,b \in \cU$, solves the first of the two compatibility equations in condition (2) of the
preceding proposition and reduces the second to $u \cdot u =u$.
\end{proof}

\begin{Proposition}
Let $T^{(4)}$ be a tetragon map.
\begin{itemize}\itemsep=0pt
\item[$(1)$] The map
\[
 \tT^{(5)}(a,b) = \big(T^{(4)}(a,b),a\big)
\]
is a dual pentagon map.
\item[$(2)$] If $u$ is a fixed element of $\cU$, then
\[
 \tT^{(5)}(a,b) = \big(T^{(4)}(a,b),u\big)
\]
is a dual pentagon map if and only if $T^{(4)}(u,u)=u$.
\end{itemize}
\end{Proposition}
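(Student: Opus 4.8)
The plan is to verify directly the three conditions \eqref{dual5gon_conditions} that characterize a dual pentagon map, using the decomposition $\tT^{(5)}(a,b)=(a\cdot b,\,a\ast b)$. In both parts the first operation is $a\cdot b:=T^{(4)}(a,b)$, whose associativity is precisely the statement that $T^{(4)}$ is a tetragon map, as observed right after the tetragon equation. Hence the first of \eqref{dual5gon_conditions}, namely $a\cdot(b\cdot c)=(a\cdot b)\cdot c$, holds automatically in both cases, and the whole matter reduces to checking the remaining two conditions, which involve the second operation $\ast$.

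For part~(1) the second operation is the projection $a\ast b:=a$. The second of \eqref{dual5gon_conditions}, $(a\ast(b\cdot c))\cdot(b\ast c)=(a\cdot b)\ast c$, has left-hand side $a\cdot b$, since $a\ast(b\cdot c)=a$ and $b\ast c=b$, and right-hand side $a\cdot b$, since $(a\cdot b)\ast c=a\cdot b$; it therefore holds identically. The third condition $(a\ast(b\cdot c))\ast(b\ast c)=a\ast b$ similarly reduces to $a=a$. No further hypothesis on $T^{(4)}$ is needed, so $\tT^{(5)}$ is a dual pentagon map. This is parallel to the earlier Corollaries producing pentagon maps from tetragon maps by filling the free slot with a projection.

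For part~(2) the second operation is the constant $a\ast b:=u$. Here the third condition reads $u=u$ and is vacuous, while the second condition collapses on the left to $u\cdot u=T^{(4)}(u,u)$, since $a\ast(b\cdot c)=u$ and $b\ast c=u$, and on the right to $(a\cdot b)\ast c=u$. Thus the second condition is equivalent to $T^{(4)}(u,u)=u$. Combining this with the automatic validity of the first and third conditions shows that $\tT^{(5)}$ solves the dual pentagon equation precisely when $T^{(4)}(u,u)=u$, which gives both directions of the stated equivalence.

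Since every step is a direct substitution into \eqref{dual5gon_conditions}, no real obstacle arises; the only point to keep straight is the ordering convention in the decomposition $\tT(a,b)=(a\cdot b,\,a\ast b)$, so that the first component is the associative product $T^{(4)}$ and the second is the trivial (projection or constant) operation. This is why the free slot must be placed as the \emph{second} component here, in contrast to the pentagon case where it occupies the first.
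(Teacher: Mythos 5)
Your verification is correct and is exactly the computation the paper leaves implicit behind its one-line proof (``This is easily verified''): a direct substitution of $a\cdot b := T^{(4)}(a,b)$ together with $a\ast b := a$, respectively $a\ast b := u$, into the dual pentagon conditions \eqref{dual5gon_conditions}. You also correctly handle the ordering convention $\tT(a,b)=(a\cdot b,\,a\ast b)$, which differs from the pentagon convention \eqref{5gon_multiplications}, so nothing is missing.
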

\begin{proof}
This is easily verified.
\end{proof}

\begin{Proposition}
For a map $\tT^{(6)} \colon \cU \times \cU \rightarrow \cU \times \cU \times \cU$, let us write
\[
 \tT^{(6)}(a,b) =: \big(T^{(5)}(a,b), a \diamond b\big),
\]
with a map $T^{(5)} \colon \cU \times \cU \rightarrow \cU \times \cU$ and a binary operation $\diamond$.
The following conditions are equivalent:
\begin{itemize}\itemsep=0pt
\item[$(1)$] $\tT^{(6)}$ is a dual hexagon map.
\item[$(2)$] $T^{(5)}$ is a pentagon map and, expressed as in \eqref{5gon_multiplications} $($so that \eqref{5gon_prod_conditions} holds$)$, it satisfies~\eqref{dualhex_conditions_beyond_pent}.
\end{itemize}
\end{Proposition}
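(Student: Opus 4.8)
The plan is to reduce the claim to the componentwise characterization of the dual hexagon equation already recorded in the dual hexagon subsection. First I would write the two-component map $T^{(5)}$ in the form \eqref{5gon_multiplications}, i.e.\ $T^{(5)}(a,b) = (a \ast b,\, a \cdot b)$, so that the given decomposition $\tT^{(6)}(a,b) = \big(T^{(5)}(a,b),\, a \diamond b\big)$ becomes exactly $\tT^{(6)}(a,b) = (a \ast b,\, a \cdot b,\, a \diamond b)$. This is precisely the three-binary-operation form $\tT(a,b) = (a \ast b, a \cdot b, a \diamond b)$ analysed there, and the essential point is that the operations $\ast$ and $\cdot$ read off from $\tT^{(6)}$ coincide with those coming from its first component $T^{(5)}$.

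Next I would invoke the characterization from that subsection: the dual hexagon equation \eqref{dual6gon_pos} is equivalent to \eqref{5gon_prod_conditions} holding for the first two operations $\ast,\cdot$ together with the supplementary conditions \eqref{dualhex_conditions_beyond_pent} involving $\diamond$. Separately, by \eqref{5gon_prod_conditions} read through \eqref{5gon_multiplications}, the two-component map $T^{(5)}$ is a pentagon map if and only if \eqref{5gon_prod_conditions} holds. Combining these two facts, $\tT^{(6)}$ being a dual hexagon map (condition~(1)) is equivalent to the conjunction of \eqref{5gon_prod_conditions} and \eqref{dualhex_conditions_beyond_pent}, which in turn is equivalent to $T^{(5)}$ being a pentagon map and satisfying \eqref{dualhex_conditions_beyond_pent} --- that is, condition~(2). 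Both implications follow at once from this splitting of the conditions.

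This argument runs parallel to the earlier Proposition that splits a pentagon map into a tetragon map plus compatibility conditions, and to the Proposition producing a heptagon map from a dual hexagon map. There is essentially no computational obstacle: all the required identities are already contained in the derivation, recorded in the dual hexagon subsection, of \eqref{5gon_prod_conditions} and \eqref{dualhex_conditions_beyond_pent} from the dual hexagon equation. The only point demanding care is the bookkeeping of the three dual hexagon operations: one must ensure that the $\ast$ and $\cdot$ of the pentagon subsystem are genuinely the first two components of $\tT^{(6)}$, and that $\diamond$ is its third component, so that the supplementary conditions \eqref{dualhex_conditions_beyond_pent} are expressed in terms of the pentagon data exactly as stated in condition~(2). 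Once this identification is fixed, the equivalence is immediate.
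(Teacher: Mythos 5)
Your argument is correct and is exactly the verification the paper has in mind: its proof is simply ``This is easily verified,'' relying on the fact that the dual hexagon subsection already reduces \eqref{dual6gon_pos} for $\tT(a,b)=(a\ast b, a\cdot b, a\diamond b)$ to \eqref{5gon_prod_conditions} plus \eqref{dualhex_conditions_beyond_pent}, while \eqref{5gon_prod_conditions} is precisely the pentagon condition on $T^{(5)}$ written as in \eqref{5gon_multiplications}. Your explicit splitting of the conditions, including the bookkeeping identifying $\ast,\cdot$ with the first two components and $\diamond$ with the third, is the same route made explicit.
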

\begin{proof}
This is easily verified.
\end{proof}

\begin{Example}
\label{ex:Rogers_dilog}
Let $\cdot$ be commutative and $a \diamond b := b \ast a$.
Then, as a consequence of \eqref{5gon_prod_conditions}, the additional dual hexagon map
conditions \eqref{dualhex_conditions_beyond_pent} are reduced to the single condition
\[
 ((a \cdot b) \ast c) \ast (a \ast b) = ((b \cdot c) \ast a) \ast (c \ast b).
\]
This holds, for example, if $\cU = (0,1) \subset \bbR$ and $a \ast b := (1-a) b/(1-ab)$.
Hence, using the pentagon map \eqref{5gon_sol_proj_geom}, the map
\[
 (a,b) \longmapsto \left( \frac{(1-a)b}{1-ab}, ab, \frac{a(1-b)}{1-ab} \right)
\]
solves the dual hexagon equation, also see \cite{Kash15}.
This map actually shows up in an identity for the
\emph{Rogers dilogarithm} function $L(a)$ \cite{Roge07}. $\cS(a) ={\rm e}^{\lambda L(a)}$,
with an arbitrary constant $\lambda \neq 0$, satisfies the pentagon relation\footnote{If we order
the parameters as $(a_0,\dots,a_4) := (a,1-ab,b,(1-b)/(1-ab),(1-a)/(1-ab))$, they are given by
the recursion relation $a_{n-1} a_{n+1} = 1-a_n$ (a special $Y$-system), which has
$\bbZ_5$ symmetry $a_{n+5} = a_n$ \cite{Fadd11,Glio+Tate95,Volk11}. }
\[
 \cS(b) \cS(a)
 = \cS\left( \frac{a(1-b)}{1-ab} \right) \cS(ab) \cS\left( \frac{(1-a)b}{1-ab} \right).
\]
Kashaev called a solution $\hat{T} \colon I \rightarrow \mathrm{End}(\cU \otimes \cU)$, where $I$ is the open unit interval
$(0,1) \subset \bbR$ and~$\cU$ a vector space, a matrix or operator dilogarithm if it satisfies the local
pentagon equation~\cite{Kash99TMP,Kash00TMP}
\[
 \hat{T}_{\bsy{23}}(a) \hat{T}_{\bsy{12}}(b)
 = \hat{T}_{\bsy{12}}\left( \frac{a(1-b)}{1-ab} \right) \hat{T}_{\bsy{13}}(ab) \hat{T}_{\bsy{23}}\left( \frac{(1-a)b}{1-ab} \right).
\]
\end{Example}

\begin{Corollary}
If $T^{(5)}$ is a pentagon map, then
\[
 \tT^{(6)}(a,b) := \big(T^{(5)}(a,b),a\big)
\]
is a dual hexagon map.
\end{Corollary}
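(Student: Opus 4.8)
The plan is to reduce this Corollary directly to the preceding Proposition, which characterizes exactly when a map of the form $(a,b) \mapsto \big(T^{(5)}(a,b), a \diamond b\big)$ is a dual hexagon map: namely, when $T^{(5)}$ is a pentagon map and the binary operation $\diamond$ satisfies the three extra conditions \eqref{dualhex_conditions_beyond_pent}. First I would observe that the proposed map $\tT^{(6)}(a,b) = \big(T^{(5)}(a,b), a\big)$ is precisely of this shape, with the third binary operation taken to be the left projection, $a \diamond b := a$. Writing $T^{(5)}$ as in \eqref{5gon_multiplications}, so that $\ast$ and $\cdot$ obey \eqref{5gon_prod_conditions} by hypothesis, we have $\tT^{(6)}(a,b) = (a \ast b, a \cdot b, a)$.

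By the preceding Proposition it then suffices to verify that this choice of $\diamond$ satisfies \eqref{dualhex_conditions_beyond_pent}. Here I would simply substitute $a \diamond b = a$ into each of the three conditions. Since $\diamond$ discards its second argument, one has $a \diamond (b \cdot c) = a$, $\ b \diamond c = b$, $\ (a \ast b) \diamond \big((a \cdot b) \ast c\big) = a \ast b$, and $(a \cdot b) \diamond c = a \cdot b$. Hence the first condition of \eqref{dualhex_conditions_beyond_pent} collapses to $a \ast b = a \ast b$, the second to $a \cdot b = a \cdot b$, and the third to $a = a$. All three are trivial identities, holding for \emph{any} maps $\ast$ and $\cdot$.

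The point is thus that there is essentially no obstacle: the only content is the recognition that the left-projection choice $a \diamond b = a$ annihilates each of the three additional constraints independently of \eqref{5gon_prod_conditions}. Combined with the assumption that $T^{(5)}$ is a pentagon map, condition (2) of the preceding Proposition is satisfied, so $\tT^{(6)}$ is a dual hexagon map.
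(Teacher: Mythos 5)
Your proposal is correct and follows exactly the paper's argument: the paper's proof is the one-line remark that setting $a \diamond b := a$ solves the three equations \eqref{dualhex_conditions_beyond_pent}, which you have simply verified in detail via the preceding Proposition. Your explicit substitutions confirming that all three conditions reduce to trivial identities are accurate.
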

\begin{proof}
Setting $a \diamond b := a$ solves the three equations \eqref{dualhex_conditions_beyond_pent}.
\end{proof}

\begin{Corollary}
\label{cor:pent->dhex,u}
Let $T^{(5)}$ be a pentagon map and $u \in \cU$ a fixed element. Then
\[
 \tT^{(6)}(a,b) := \big(T^{(5)}(a,b),u\big)
\]
is a dual hexagon map if and only if $T^{(5)}(u,u)=(u,u)$.
\end{Corollary}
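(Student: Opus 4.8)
The plan is to reduce everything to the Proposition immediately preceding Example~\ref{ex:Rogers_dilog}, specialized to the \emph{constant} binary operation $a \diamond b := u$. That Proposition states that, writing $\tT^{(6)}(a,b) = \big(T^{(5)}(a,b), a\diamond b\big)$, the map $\tT^{(6)}$ is a dual hexagon map if and only if $T^{(5)}$ is a pentagon map and the three conditions \eqref{dualhex_conditions_beyond_pent} hold. Since $T^{(5)}$ is assumed to be a pentagon map, $T^{(5)}$ can be expressed as in \eqref{5gon_multiplications}, equation \eqref{5gon_prod_conditions} holds automatically, and the entire question collapses to checking whether \eqref{dualhex_conditions_beyond_pent} is satisfied once $\diamond$ is taken to be the constant $u$.

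First I would substitute $x \diamond y = u$ throughout \eqref{dualhex_conditions_beyond_pent}. Every compound term built from $\diamond$ then outputs $u$, leaving no residual dependence on $a,b,c$: the left-hand sides become $u \ast u$, $u \cdot u$, and $u \diamond u = u$, while the right-hand sides become $u$, $u$, and $a \diamond b = u$, respectively. Thus the three conditions reduce to
\[
 u \ast u = u, \qquad u \cdot u = u, \qquad u = u.
\]
The third is a tautology, and the first two assert precisely that the two components of $T^{(5)}(u,u) = (u \ast u,\, u \cdot u)$ both equal $u$, i.e., $T^{(5)}(u,u) = (u,u)$.

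I would then conclude by combining these observations: under the standing hypothesis that $T^{(5)}$ is a pentagon map, the additional conditions \eqref{dualhex_conditions_beyond_pent} hold if and only if $T^{(5)}(u,u) = (u,u)$, which is exactly the claimed criterion. It is worth remarking that this parallels the preceding Corollary, where the choice $a \diamond b := a$ makes all three conditions hold unconditionally; replacing the projection onto the first argument by a constant value $u$ costs precisely the single fixed-point requirement on the diagonal.

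The only possible obstacle here is bookkeeping, namely verifying that each of the three equations genuinely loses all dependence on $a,b,c$ under the constant substitution. This is immediate, since $\diamond$ occurs on both sides of every equation and always returns $u$, so no free variable can survive; there is no deeper difficulty.
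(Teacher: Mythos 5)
Your proposal is correct and matches the paper's own proof, which likewise invokes the preceding Proposition and observes that the constant choice $a \diamond b := u$ collapses the three conditions \eqref{dualhex_conditions_beyond_pent} to $u \ast u = u$ and $u \cdot u = u$, i.e., $T^{(5)}(u,u)=(u,u)$. Your version merely spells out the bookkeeping (including the tautological third equation) that the paper's one-line proof leaves implicit.
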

\begin{proof}
Setting $a \diamond b := u$ reduces the three equations \eqref{dualhex_conditions_beyond_pent} to
$u \ast u = u$ and $u \cdot u = u$.
\end{proof}

\begin{Proposition}
\label{cor:dpent->dhex}
Let $\tT^{(5)}$ be a dual pentagon map.
\begin{itemize}\itemsep=0pt
\item[$(1)$] The map
\[
 \tT^{(6)}(a,b) = \big(b,\tT^{(5)}(a,b)\big)
\]
is a dual hexagon map.
\item[$(2)$] If $u$ is a fixed element of $\cU$, then
\[
 \tT^{(6)}(a,b) = \big(u,\tT^{(5)}(a,b)\big)
\]
is a dual hexagon map if and only if $\tT^{(5)}(u,u)=(u,u)$.
\end{itemize}
\end{Proposition}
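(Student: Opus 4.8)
The plan is to reduce both parts to the preceding Proposition characterizing dual hexagon maps, which states that a map of the form $\tT^{(6)}(a,b) = \big(T^{(5)}(a,b), a \diamond b\big)$ is a dual hexagon map exactly when $T^{(5)}$ is a pentagon map and, writing $T^{(5)}$ as in \eqref{5gon_multiplications}, its two operations together with $\diamond$ satisfy \eqref{dualhex_conditions_beyond_pent}. Throughout I write the given dual pentagon map as $\tT^{(5)}(a,b) = (a \cdot b, a \ast b)$, so that the three identities \eqref{dual5gon_conditions} hold.

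For part (1), the constructed map is $\tT^{(6)}(a,b) = (b, a \cdot b, a \ast b)$, which I read in the form required by the characterization with pentagon part $T^{(5)}(a,b) = (b, a \cdot b)$ and third operation $\diamond$ given by $a \diamond b := a \ast b$. First I would check that $T^{(5)}$ is a pentagon map: its second component $a \cdot b$ is a tetragon map since the first of the dual pentagon conditions \eqref{dual5gon_conditions} is precisely the associativity of $\cdot$, so the earlier Corollary producing a pentagon map $(b, T^{(4)}(a,b))$ from a tetragon map $T^{(4)}$ applies. In the form \eqref{5gon_multiplications}, the two operations of this pentagon map are the second projection $(a,b) \mapsto b$ and the product $\cdot$; substituting these, and $\diamond = \ast$, into \eqref{dualhex_conditions_beyond_pent}, the first equation collapses to the identity $b \ast c = b \ast c$ once one uses that the first operation is a projection, while the second and third equations become verbatim the second and third conditions of \eqref{dual5gon_conditions}. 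The characterization then yields that $\tT^{(6)}$ is a dual hexagon map.

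For part (2), the map is $\tT^{(6)}(a,b) = (u, a \cdot b, a \ast b)$, read with pentagon part $T^{(5)}(a,b) = (u, a \cdot b)$ and $\diamond = \ast$. Here I would invoke the fixed-element version of the same earlier Corollary, by which $(u, a \cdot b)$ is a pentagon map if and only if $u \cdot u = u$. Rerunning the reduction of \eqref{dualhex_conditions_beyond_pent}, the only change from part (1) is that the first pentagon operation is now the constant $u$ instead of the second projection; this turns the first equation of \eqref{dualhex_conditions_beyond_pent} into the single requirement $u \ast u = u$, while the second and third remain the always-valid conditions of \eqref{dual5gon_conditions}. By the characterization, $\tT^{(6)}$ is a dual hexagon map if and only if both $u \cdot u = u$ and $u \ast u = u$ hold, that is, if and only if $\tT^{(5)}(u,u) = (u,u)$.

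All of these reductions are routine term-by-term collapses, so I do not expect any genuine algebraic difficulty. The one point demanding care is the bookkeeping of component orders and symbols: the dual pentagon map is written $(\cdot, \ast)$, whereas the pentagon part of the constructed hexagon map must be read in the order $(\ast, \cdot)$ of \eqref{5gon_multiplications}, so that the symbol $\ast$ ends up playing two distinct roles (the projection or constant first operation of $T^{(5)}$ on one side, and $\diamond$ on the other). Keeping these straight is the main obstacle to presenting the verification cleanly.
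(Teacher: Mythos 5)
Your proof is correct and follows essentially the same route as the paper, whose proof is just the remark that the claim "is easily verified using results of Section~5": you carry out exactly that componentwise verification, matching \eqref{dual5gon_conditions} against \eqref{5gon_prod_conditions} and \eqref{dualhex_conditions_beyond_pent} via the characterization of dual hexagon maps. Your bookkeeping is also accurate — in particular the identification of the dual pentagon's $\ast$ with $\diamond$, the collapse of the first extra condition to a tautology in part (1), and its reduction to $u \ast u = u$ in part (2) all check out.
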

\begin{proof}
This is also easily verified using results of Section~\ref{sec:dualPol}.
\end{proof}

\begin{Proposition}
For a map $T^{(7)}\colon \cU \times \cU \times \cU \rightarrow \cU \times \cU \times \cU$, let us write
\[
 T^{(7)}(a,b,c) =: (\{a,b,c\}, T^{(6)}(a,b,c)),
\]
with a ternary operation $\{ \,, \,, \}$ and a map $T^{(6)}\colon \cU \times \cU \times \cU \rightarrow \cU \times \cU$.
The following conditions are equivalent:
\begin{itemize}\itemsep=0pt
\item[$(1)$] $T^{(7)}$ is a heptagon map.
\item[$(2)$] $T^{(6)}$ is a hexagon map and, expressed as in \eqref{6gon_ternary_ops} $($so that \eqref{6gon_ternary_ops_cond} holds$)$, it satisfies the
compatibility conditions \eqref{7gon_conditions_beyond_6gon} with the above ternary operation.
\end{itemize}
\end{Proposition}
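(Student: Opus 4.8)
The plan is to obtain the equivalence by directly comparing the two systems of conditions already derived in Section~\ref{sec:ExPEs}, rather than re-expanding either equation from scratch. Writing $T^{(7)}(a,b,c) = (\{a,b,c\}, \langle a,b,c\rangle, [a,b,c])$ as in \eqref{7gon_ternary_ops}, the map obtained by cutting off the first component is $T^{(6)}(a,b,c) = (\langle a,b,c\rangle, [a,b,c])$, which is exactly the two-operation encoding \eqref{6gon_ternary_ops} of a hexagon map. The whole argument rests on the structural fact that the heptagon conditions separate cleanly along the projection $T^{(7)} \mapsto T^{(6)}$: one subsystem involves only the data of $T^{(6)}$, while the complementary subsystem couples in the extra ternary operation $\{\,,\,,\}$.

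First I would invoke the characterization established after \eqref{7gon_ternary_ops}, namely that $T^{(7)}$ is a heptagon map if and only if its three ternary operations satisfy \eqref{6gon_ternary_ops_cond} together with \eqref{7gon_conditions_beyond_6gon}. Next I would note that the three equations \eqref{6gon_ternary_ops_cond} involve only $\langle\,,\,,\rangle$ and $[\,,\,,]$---precisely the two components of $T^{(6)}$---and, by the characterization of the hexagon equation following \eqref{6gon_ternary_ops}, are exactly the conditions for $T^{(6)}$ to be a hexagon map. The remaining three equations \eqref{7gon_conditions_beyond_6gon} are then the asserted compatibility conditions linking $\{\,,\,,\}$ to the hexagon data. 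Combining these two observations, $T^{(7)}$ is a heptagon map if and only if $T^{(6)}$ is a hexagon map and \eqref{7gon_conditions_beyond_6gon} holds, which is the equivalence of (1) and (2). For the forward half of the hexagon statement one may alternatively cite the single-map instance of Theorem~\ref{thm:polygon->evenpolygon_red1} with $N=6$, where deleting the first component of the codomain of a heptagon map yields a hexagon map; the direct comparison has the advantage of also producing \eqref{7gon_conditions_beyond_6gon} explicitly and of giving the converse.

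The only step that demands genuine care---and hence the main obstacle---is confirming that the split is truly clean, i.e., that none of the three hexagon conditions \eqref{6gon_ternary_ops_cond} tacitly involves the extra operation $\{\,,\,,\}$, and that conversely each equation of \eqref{7gon_conditions_beyond_6gon} genuinely constrains $\{\,,\,,\}$ rather than being already implied by the hexagon subsystem. Inspecting the displayed systems confirms both points, after which no further computation is required; the argument then follows the same template as the preceding Propositions for $T^{(5)}$ and $\tT^{(6)}$.
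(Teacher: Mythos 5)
Your proposal is correct and takes essentially the same approach as the paper, whose entire proof is to cite the last part of Section~\ref{subsec:7gon}: there the heptagon equation for a single map, written via \eqref{7gon_ternary_ops}, is shown equivalent to \eqref{6gon_ternary_ops_cond} together with \eqref{7gon_conditions_beyond_6gon}, and since \eqref{6gon_ternary_ops_cond} involves only the two operations constituting $T^{(6)}$ and is precisely the hexagon characterization following \eqref{6gon_ternary_ops}, the equivalence is immediate. Your explicit verification that the six conditions split cleanly (no $\{\,,\,,\,\}$ appearing in \eqref{6gon_ternary_ops_cond}) spells out exactly what makes the paper's one-line proof work, and your side remark about Theorem~\ref{thm:polygon->evenpolygon_red1} is a consistent but inessential addition.
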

\begin{proof}
This is an immediate consequence of the last part of Section~\ref{subsec:7gon}.
\end{proof}

\begin{Corollary}
If $T^{(6)}$ is a hexagon map, then
\[
 T^{(7)}(a,b,c) := \big(c,T^{(6)}(a,b,c)\big)
\]
is a heptagon map.
\end{Corollary}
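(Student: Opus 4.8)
The plan is to invoke the immediately preceding Proposition, which characterizes exactly when a map of the form $T^{(7)}(a,b,c) = (\{a,b,c\}, T^{(6)}(a,b,c))$ is a heptagon map: this holds precisely when $T^{(6)}$ is a hexagon map and the first ternary operation $\{\,,\,,\}$ satisfies the compatibility conditions \eqref{7gon_conditions_beyond_6gon} together with the two ternary operations of $T^{(6)}$. Since $T^{(6)}$ is assumed to be a hexagon map, its operations $\langle\,,\,,\rangle$ and $[\,,\,,]$ already fulfill \eqref{6gon_ternary_ops_cond}. The only remaining task is therefore to verify \eqref{7gon_conditions_beyond_6gon} for the particular choice $\{a,b,c\} := c$, i.e.\ for the operation that projects onto its last argument.

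First I would substitute $\{a,b,c\} = c$ throughout \eqref{7gon_conditions_beyond_6gon}. Because this operation simply returns its final slot, the three inner expressions $\{a,b,d\}$, $\{[a,b,d],c,e\}$ and $\{\langle a,b,d\rangle,\langle[a,b,d],c,e\rangle,f\}$ collapse to $d$, $e$ and $f$, respectively. Hence the three left-hand sides of \eqref{7gon_conditions_beyond_6gon} reduce to $\{d,e,f\}$, $\langle d,e,f\rangle$ and $[d,e,f]$. Applying the same projection to the right-hand sides, the expressions $\{d,e,f\}$, $\{b,c,\langle d,e,f\rangle\}$ and $\{a,[b,c,\langle d,e,f\rangle],[d,e,f]\}$ reduce to $f$, $\langle d,e,f\rangle$ and $[d,e,f]$. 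Since $\{d,e,f\} = f$ as well, all three equations hold identically; in particular the second and third hold without any appeal to the hexagon relations, which are needed here only implicitly through the validity of \eqref{6gon_ternary_ops_cond} for $T^{(6)}$.

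There is no genuine obstacle in this argument: once the projection $\{a,b,c\}=c$ is inserted, the verification is purely mechanical and the statement is an immediate corollary of the preceding Proposition. The only point demanding a little care is the bookkeeping—ensuring that each nested occurrence of $\{\,,\,,\}$ is collapsed to its final entry, so that the left-hand sides retain the surviving $\langle\,\rangle$ and $[\,]$ structure while the right-hand sides are trimmed to match. Accordingly, the proof reduces to the single observation that setting $\{a,b,c\} := c$ solves the three equations \eqref{7gon_conditions_beyond_6gon}.
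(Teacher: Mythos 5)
Your proof is correct and coincides with the paper's own argument, which simply notes that substituting $\{a,b,c\}=c$ into \eqref{7gon_conditions_beyond_6gon} turns all three equations into identities, the preceding Proposition then yielding the claim. Your explicit collapse of the nested brackets (inner terms reducing to $d$, $e$, $f$ and both sides of each equation matching) just spells out the same mechanical verification the paper leaves implicit.
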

\begin{proof}
Using $\{a,b,c\}=c$ in \eqref{7gon_conditions_beyond_6gon} results in identities.
\end{proof}

\begin{Corollary}
Let $T^{(6)}$ be a hexagon map and $u \in \cU$ a fixed element. Then
\[
 T^{(7)}(a,b,c) := \big(u,T^{(6)}(a,b,c)\big)
\]
is a heptagon map if and only if $T^{(6)}(u,u,u) = (u,u)$.
\end{Corollary}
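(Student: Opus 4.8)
The plan is to follow the template of the preceding corollary and lean on the equivalence just established in the Proposition above. Writing $T^{(7)}(a,b,c) = (\{a,b,c\}, T^{(6)}(a,b,c))$, that Proposition tells us $T^{(7)}$ is a heptagon map precisely when $T^{(6)}$ is a hexagon map \emph{and} the ternary operation $\{\,,\,,\,\}$ satisfies the compatibility conditions \eqref{7gon_conditions_beyond_6gon} (which couple it to the operations $\langle\,,\,,\,\rangle$ and $[\,,\,,\,]$ of $T^{(6)}$). Since $T^{(6)}$ is assumed to be a hexagon map, the entire claim reduces to determining when the \emph{constant} choice $\{a,b,c\} := u$ fulfils \eqref{7gon_conditions_beyond_6gon}.

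First I would substitute $\{a,b,c\} = u$ into the three equations of \eqref{7gon_conditions_beyond_6gon}. In the first equation the outermost operation is $\{\,,\,,\,\}$, which is identically $u$, and its right-hand side $\{d,e,f\}$ equals $u$ as well; hence the first equation is satisfied automatically. In the second and third equations every occurrence of $\{\,,\,,\,\}$ appearing as an argument collapses to $u$, so the left-hand sides become $\langle u,u,u\rangle$ and $[u,u,u]$, while the right-hand sides $\{b,c,\langle d,e,f\rangle\}$ and $\{a,[b,c,\langle d,e,f\rangle],[d,e,f]\}$ are both $u$. Thus the last two compatibility conditions reduce exactly to $\langle u,u,u\rangle = u$ and $[u,u,u] = u$.

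Finally, expressing $T^{(6)}(a,b,c) = (\langle a,b,c\rangle, [a,b,c])$ as in \eqref{6gon_ternary_ops}, the pair of conditions $\langle u,u,u\rangle = u$ and $[u,u,u] = u$ is nothing but $T^{(6)}(u,u,u) = (u,u)$. This delivers both implications of the stated equivalence at once. The only place demanding attention is the bookkeeping: distinguishing the outer instance of $\{\,,\,,\,\}$ from the inner ones in each line of \eqref{7gon_conditions_beyond_6gon}, after which every simplification is the routine collapse of a constant operation.
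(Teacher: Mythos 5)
Your proposal is correct and follows exactly the paper's route: invoking the preceding Proposition to reduce the claim to checking the constant choice $\{a,b,c\}=u$ in \eqref{7gon_conditions_beyond_6gon}, which collapses the first equation to an identity and the other two to $\langle u,u,u\rangle = u$ and $[u,u,u]=u$, i.e., $T^{(6)}(u,u,u)=(u,u)$. The paper's own proof is a one-line version of precisely this substitution, so your write-up is just a more detailed account of the same argument.
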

\begin{proof}
Using $\{a,b,c\}=u$ in \eqref{7gon_conditions_beyond_6gon} results in $\langle u,u,u \rangle = u = [u,u,u]$.
\end{proof}

\begin{Proposition}
Let $T^{(6)}$ be a hexagon map.
\begin{itemize}\itemsep=0pt
\item[$(1)$] The map
\[
 \tT^{(7)}(a,b,c) = \big(T^{(6)}(a,b,c),a\big)
\]
is a dual heptagon map.
\item[$(2)$] If $u$ is a fixed element of $\cU$, then
\[
 \tT^{(7)}(a,b,c) = \big(T^{(6)}(a,b,c),u\big)
\]
is a dual heptagon map if and only if $T^{(6)}(u,u,u)=(u,u)$.
\end{itemize}
\end{Proposition}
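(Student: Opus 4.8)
The plan is to substitute the proposed ansatz into the six dual heptagon conditions \eqref{dualhept_conditions} and to check that, under the identification of operations described below, they reduce to the hexagon conditions \eqref{6gon_ternary_ops_cond} together with either trivial identities (part~(1)) or the single requirement $T^{(6)}(u,u,u)=(u,u)$ (part~(2)). Throughout I would keep the hexagon and dual heptagon operations notationally apart, since both families use the symbols $\langle\,,\,,\rangle$ and $[\,,\,,]$ with shifted meanings; writing the two hexagon operations of \eqref{6gon_ternary_ops} temporarily as $\langle\,,\,,\rangle_{h}$ and $[\,,\,,]_{h}$ removes the ambiguity.

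With $\tT^{(7)}(a,b,c)=\big(T^{(6)}(a,b,c),a\big)=\big(\langle a,b,c\rangle_{h},[a,b,c]_{h},a\big)$, comparison with \eqref{7gon_ternary_ops} gives the identification: the dual heptagon's $\{\,,\,,\}$ is the hexagon's $\langle\,,\,,\rangle_{h}$, its $\langle\,,\,,\rangle$ is the hexagon's $[\,,\,,]_{h}$, and its $[\,,\,,]$ is the projection $(a,b,c)\mapsto a$. The first three conditions of \eqref{dualhept_conditions} involve only $\{\,,\,,\}$ and $\langle\,,\,,\rangle$, hence only the hexagon operations; I would verify that they become exactly the three relations \eqref{6gon_ternary_ops_cond} (up to interchanging the two sides), so they hold because $T^{(6)}$ is a hexagon map. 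For the last three conditions I would use that every occurrence of the dual heptagon $[\,,\,,]$ collapses to its first argument: tracing the nested brackets, conditions 4, 5 and 6 degenerate to $\{a,b,d\}=\{a,b,d\}$, $\langle a,b,d\rangle=\langle a,b,d\rangle$ and $a=a$, respectively. This settles part~(1).

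For part~(2) the only change is that the dual heptagon $[\,,\,,]$ is now the constant map with value $u$ instead of the first-coordinate projection. Conditions 1--3 do not involve this operation and still reduce to \eqref{6gon_ternary_ops_cond}, hence still hold. In conditions 4 and 5 each inner $[\,,\,,]$ now contributes the value $u$, so after substitution the left-hand sides become $\langle u,u,u\rangle_{h}$ and $[u,u,u]_{h}$ while the right-hand sides are the constant $u$; condition 6 remains $u=u$. The extra requirements are therefore precisely $\langle u,u,u\rangle_{h}=u$ and $[u,u,u]_{h}=u$, i.e.\ $T^{(6)}(u,u,u)=(u,u)$, yielding the asserted equivalence.

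The main obstacle is purely organizational: because the two maps share the bracket symbols with a shift in meaning, one must track carefully which operation each bracket denotes at every nesting depth, and note that conditions 4--6 are exactly those in which the degenerate third component acts. Once the relabeling above is fixed, the remaining work is a direct substitution with no analytic difficulty, matching the pattern of the preceding propositions in this subsection.
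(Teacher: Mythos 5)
Your proposal is correct and follows essentially the same route as the paper's proof: substitute the ansatz into the six conditions \eqref{dualhept_conditions}, observe that with $[a,b,c]=a$ the last three become identities while the first three reduce, after renaming the ternary operations, to \eqref{6gon_ternary_ops_cond}, and that with $[a,b,c]=u$ the last three collapse to $T^{(6)}(u,u,u)=(u,u)$. Your explicit bookkeeping of the shifted bracket notation and the side-by-side swap of the hexagon conditions is a careful elaboration of exactly what the paper's terse proof asserts.
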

\begin{proof}
(1) Setting $[a,b,c] = a$, the last three equations of \eqref{dualhept_conditions} become identities and the
first three are equivalent to \eqref{6gon_ternary_ops_cond} by a renaming of the ternary operations.

(2) With $[a,b,c] = u$, the last three of equations \eqref{dualhept_conditions} become $T^{(6)}(u,u,u)=(u,u)$.
\end{proof}

\begin{Proposition}
For a map $\tT^{(8)}\colon \cU \times \cU \times \cU \rightarrow \cU \times \cU \times \cU \times \cU$, let us write
\[
 \tT^{(8)}(a,b,c) =: \big(T^{(7)}(a,b,c), |a,b,c| \big),
\]
with a map $T^{(7)}\colon \cU \times \cU \times \cU \rightarrow \cU \times \cU \times \cU$ and a ternary operation $|\,,\,, \, |$. The following conditions are equivalent:
\begin{itemize}\itemsep=0pt
\item[$(1)$] $\tT^{(8)}$ is a dual hexagon map.
\item[$(2)$] $T^{(7)}$ is a heptagon map and, expressed as in \eqref{7gon_ternary_ops}, it satisfies \eqref{dual8gon_beyond_hept}.
\end{itemize}
\end{Proposition}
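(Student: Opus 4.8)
The plan is to exploit the explicit decomposition of the dual octagon equation carried out in Section~\ref{subsec:dual_oct}, together with the corresponding decomposition of the heptagon equation in Section~\ref{subsec:7gon}, and to observe that the two resulting systems of conditions overlap in a particularly clean way. The whole argument should be a ``splitting'' of the conditions according to which ternary operations occur in them, exactly parallel to the proof of the preceding proposition relating $T^{(7)}$ to $T^{(6)}$.

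First I would recall that, for a single map $\tT$ equipped with the four ternary operations $(\{\,\cdot\,\},\langle\,\cdot\,\rangle,[\,\cdot\,],|\,\cdot\,|)$ as in Section~\ref{subsec:dual_oct}, the dual octagon equation evaluated on $(a,b,c,d,e,f)$ is \emph{equivalent} to the combined system \eqref{6gon_ternary_ops_cond}, \eqref{7gon_conditions_beyond_6gon} and \eqref{dual8gon_beyond_hept}. By the hypothesis of the proposition the first three components of $\tT^{(8)}$ are precisely the components of $T^{(7)}$, so that $(\{\,\cdot\,\},\langle\,\cdot\,\rangle,[\,\cdot\,])$ coincide with the ternary operations of $T^{(7)}$ in the sense of \eqref{7gon_ternary_ops}, while $|\,\cdot\,|$ is the additional fourth operation supplied by $\tT^{(8)}$.

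The key structural observation is that \eqref{dual8gon_beyond_hept} is the only part of this system in which the fourth operation $|\,\cdot\,|$ occurs, whereas \eqref{6gon_ternary_ops_cond} and \eqref{7gon_conditions_beyond_6gon} involve only $\{\,\cdot\,\}$, $\langle\,\cdot\,\rangle$ and $[\,\cdot\,]$. By the last part of Section~\ref{subsec:7gon}, the conjunction of \eqref{6gon_ternary_ops_cond} and \eqref{7gon_conditions_beyond_6gon} is exactly the statement that $T^{(7)}$, expressed as in \eqref{7gon_ternary_ops}, is a heptagon map. Hence the full dual octagon system separates into ``$T^{(7)}$ is a heptagon map'' together with the residual conditions \eqref{dual8gon_beyond_hept}. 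Putting this together yields the equivalence directly: $\tT^{(8)}$ is a dual octagon map iff all of \eqref{6gon_ternary_ops_cond}, \eqref{7gon_conditions_beyond_6gon}, \eqref{dual8gon_beyond_hept} hold, iff $T^{(7)}$ is a heptagon map and \eqref{dual8gon_beyond_hept} holds, which is condition~$(2)$.

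The hard part, such as it is, is not the logical structure but the bookkeeping: one must confirm that the first three scalar components of the dual octagon equation depend only on $(\{\,\cdot\,\},\langle\,\cdot\,\rangle,[\,\cdot\,])$ and reproduce verbatim the heptagon list, so that no condition involving $|\,\cdot\,|$ is hidden among them and no heptagon condition is lost. This is settled by inspection of the displayed lists in Sections~\ref{subsec:7gon} and~\ref{subsec:dual_oct}, and requires no further computation; accordingly I expect the written proof to be a one-line appeal to these two decompositions, as in the analogous propositions earlier in this section.
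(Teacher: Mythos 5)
Your proposal is correct and coincides with the paper's own argument: the paper's proof is precisely the one-line appeal you anticipated, namely that Section~\ref{subsec:dual_oct} already records the dual octagon equation as equivalent to the system \eqref{6gon_ternary_ops_cond}, \eqref{7gon_conditions_beyond_6gon}, \eqref{dual8gon_beyond_hept}, of which the first two lists (by the end of Section~\ref{subsec:7gon}) characterize $T^{(7)}$ as a heptagon map and only the third involves $|\,,\,,\,|$. Note in passing that condition~$(1)$ of the statement should read ``dual \emph{octagon} map''; your reading of it as such is the intended one.
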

\begin{proof}
This immediately follows from results in Section~\ref{subsec:dual_oct}.
\end{proof}

\begin{Corollary}
If $T^{(7)}$ is a heptagon map, then
\[
 \tT^{(8)}(a,b,c) := \bigl(T^{(7)}(a,b,c), a\bigr)
\]
is a dual octagon map.
\end{Corollary}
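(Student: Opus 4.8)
The plan is to invoke the immediately preceding Proposition, which states that $\tT^{(8)}(a,b,c) = \bigl(T^{(7)}(a,b,c), |a,b,c|\bigr)$ is a dual octagon map if and only if $T^{(7)}$ is a heptagon map and the fourth ternary operation (the one producing $|a,b,c|$) satisfies the four conditions \eqref{dual8gon_beyond_hept}. Since $T^{(7)}$ is assumed to be a heptagon map, the entire task reduces to checking that the choice $|x,y,z| := x$ turns each of the four equations in \eqref{dual8gon_beyond_hept} into a trivial identity. No property of $T^{(7)}$ beyond its being a heptagon map enters, so the verification is purely combinatorial.

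First I would substitute $|x,y,z| = x$ into the three inner arguments that are common to all four left-hand sides of \eqref{dual8gon_beyond_hept}, namely $X = |a, [b,c,\langle d,e,f\rangle], [d,e,f]|$, $Y = |b, c, \langle d,e,f\rangle|$ and $Z = |d,e,f|$. Projection onto the first slot collapses these to $X = a$, $Y = b$ and $Z = d$. Consequently the four left-hand sides of \eqref{dual8gon_beyond_hept} become simply $\{a,b,d\}$, $\langle a,b,d\rangle$, $[a,b,d]$ and $|a,b,d|$, respectively, the last of which is again just $a$.

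Next I would treat the right-hand sides. In each of the four equations the outermost operation is the fourth one, so the same projection $|x,y,z|=x$ selects its first argument. Reading these first arguments off \eqref{dual8gon_beyond_hept}, they are $\{a,b,d\}$ in the first equation, $\langle a,b,d\rangle$ in the second, $[a,b,d]$ in the third, and $|a,b,d| = a$ in the fourth. These match the collapsed left-hand sides term by term, so all four conditions hold identically; by the Proposition, $\tT^{(8)}$ then solves the dual octagon equation.

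The only delicate point is bookkeeping rather than genuine difficulty: one must keep straight which occurrences of the vertical-bar symbol denote the fourth operation (and therefore collapse to their first argument under $|x,y,z|=x$) as opposed to the heptagon operations $\{\,,\,\}$, $\langle\,,\,\rangle$, $[\,,\,]$ supplied by $T^{(7)}$, and confirm that the heptagon outputs $\{a,b,d\}$, $\langle a,b,d\rangle$, $[a,b,d]$ appearing as first arguments on the right are exactly those already present on the left. This is entirely parallel to the earlier corollary in which the choice $a \diamond b := a$ trivialized \eqref{dualhex_conditions_beyond_pent}, and I expect the proof to be as short.
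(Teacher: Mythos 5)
Your proposal is correct and coincides with the paper's own argument, which in its entirety reads ``Setting $|a,b,c|=a$ turns the four equations \eqref{dual8gon_beyond_hept} into identities'': you invoke the same preceding Proposition and perform exactly this substitution, merely writing out the collapse of both sides to $\{a,b,d\}$, $\langle a,b,d\rangle$, $[a,b,d]$ and $a$ explicitly. Your detailed bookkeeping is a faithful expansion of the paper's one-line verification, with no divergence in method.
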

\begin{proof}
Setting $|a, b, c | = a$ turns the four equations \eqref{dual8gon_beyond_hept} into identities.
\end{proof}

\begin{Corollary}
Let $T^{(7)}$ be a heptagon map and $u \in \cU$ a fixed element. Then
\[
 \tT^{(8)}(a,b,c) := \big(T^{(7)}(a,b,c), u\big)
\]
is a dual octagon map if and only if $T^{(7)}(u,u,u) = (u,u,u)$.
\end{Corollary}
\begin{proof}
Setting $|a, b, c | = u$ turns the four equations \eqref{dual8gon_beyond_hept} into
$\{u,u,u\} = u$, $\langle u,u,u \rangle = u$ and~${[u,u,u] = u}$.
\end{proof}

\begin{Proposition}
Let $\tT^{(7)}$ be a dual heptagon map.
\begin{itemize}\itemsep=0pt
\item[$(1)$] The map
\[
 \tT^{(8)}(a,b,c) = \big(c,\tT^{(7)}(a,b,c)\big)
\]
is a dual octagon map.
\item[$(2)$] If $u$ is a fixed element of $\cU$, then
\[
 \tT^{(8)}(a,b,c) = \big(u,\tT^{(7)}(a,b,c)\big)
\]
is a dual octagon map if and only if $\tT^{(7)}(u,u,u)=(u,u,u)$.
\end{itemize}
\end{Proposition}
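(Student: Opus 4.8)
The plan is to verify the statement by direct computation with the three ternary operations of $\tT^{(7)}$, exactly as in the one-step-lower dual pentagon $\to$ dual hexagon case of Proposition~\ref{cor:dpent->dhex}. First I would write $\tT^{(7)}(a,b,c) = (\{a,b,c\}, \langle a,b,c\rangle, [a,b,c])$, so that \eqref{dualhept_conditions} holds, and read off the four ternary operations of the candidate $\tT^{(8)}(a,b,c) = (c,\tT^{(7)}(a,b,c))$ in the notation of Section~\ref{subsec:dual_oct}: the first is the trivial projection $(a,b,c)\mapsto c$ onto the last argument (or the constant $u$ in part~(2)), while the remaining three are $\{\,\}$, $\langle\,\rangle$, $[\,]$, in that order. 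The whole argument then amounts to substituting this one-slot shift into the dual octagon conditions \eqref{6gon_ternary_ops_cond}, \eqref{7gon_conditions_beyond_6gon} and \eqref{dual8gon_beyond_hept} and sorting out which of them collapse.

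For part~(1), I would split the ten dual octagon conditions according to whether they involve the first octagon operation. The six that do not---the three equations \eqref{6gon_ternary_ops_cond} together with the last three equations of \eqref{dual8gon_beyond_hept}---reproduce, after the one-slot shift of the operations, exactly the six conditions \eqref{dualhept_conditions}, which hold since $\tT^{(7)}$ is a dual heptagon map. The four that do involve the first operation---the three equations \eqref{7gon_conditions_beyond_6gon} and the first line of \eqref{dual8gon_beyond_hept}---become identities, because the first operation merely returns the last of its three arguments and a short check shows that both sides then reduce to the same expression in $d,e,f$. Hence $\tT^{(8)}$ solves the dual octagon equation.

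For part~(2), replacing the projection by the constant $u$ leaves the six shift-conditions untouched, so they still reduce to \eqref{dualhept_conditions} and hold. Among the four conditions that involve the first operation, the first equation of \eqref{7gon_conditions_beyond_6gon} remains a tautology (both sides equal $u$), whereas the other three reduce precisely to the scalar relations $\{u,u,u\}=u$, $\langle u,u,u\rangle=u$ and $[u,u,u]=u$, that is, to $\tT^{(7)}(u,u,u)=(u,u,u)$. This yields the claimed equivalence.

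I expect the only real difficulty to be bookkeeping rather than anything conceptual: the three ternary operations of $\tT^{(7)}$ and the last three of $\tT^{(8)}$ are denoted by the same symbols $\{\,\}$, $\langle\,\rangle$, $[\,]$, yet the construction shifts them by one slot, so one must apply this shift consistently---aligning \eqref{6gon_ternary_ops_cond} with the first three of \eqref{dualhept_conditions} and the \eqref{dual8gon_beyond_hept} block with the last three---in order not to cross-contaminate the trivial identities with the genuine conditions.
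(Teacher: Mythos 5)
Your proposal is correct and takes essentially the same route as the paper, whose proof of this proposition consists only of the remark that it ``can be verified using results of Section~\ref{sec:dualPol}'', i.e., exactly the direct check via the four-ternary-operation form of the dual octagon equation in Section~\ref{subsec:dual_oct}. Your bookkeeping is accurate: the six conditions not involving the first octagon operation (\eqref{6gon_ternary_ops_cond} and the last three of \eqref{dual8gon_beyond_hept}) become, after the one-slot shift, precisely \eqref{dualhept_conditions}, while the remaining four reduce to identities in part~(1), respectively to one tautology plus $\{u,u,u\}=\langle u,u,u\rangle=[u,u,u]=u$ in part~(2).
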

\begin{proof}
This can be verified using results of Section~\ref{sec:dualPol}.
\end{proof}

Preceding results suggest the following conjectures.

\begin{Conjecture}
Let $T^{(2n)}$ be a $2n$-gon map, $n \in \bbN$, $n \geq 2$. Then
\[
 T^{(2n+1)}(a_1,\ldots,a_n) := \big(a_n,T^{(2n)}(a_1,\ldots,a_n)\big)
\]
is a $(2n+1)$-gon map.
\end{Conjecture}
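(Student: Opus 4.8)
The plan is to prove the conjecture in two stages: a \emph{splitting} reduction followed by an explicit \emph{collapse} of the resulting residual conditions. Writing $T^{(2n+1)}(a_1,\ldots,a_n)=\bigl(\omega(a_1,\ldots,a_n),\,T^{(2n)}(a_1,\ldots,a_n)\bigr)$ with an $n$-ary operation $\omega$ and $T^{(2n)}\colon\cU^n\to\cU^{n-1}$, the first step is to establish the general analogue of the splitting Propositions proved above for $n=2$ (pentagon versus tetragon) and $n=3$ (heptagon versus hexagon): namely, that $T^{(2n+1)}$ solves the $(2n+1)$-gon equation if and only if $T^{(2n)}$ solves the $2n$-gon equation and $\omega$ satisfies a residual system of compatibility conditions with the component operations of $T^{(2n)}$. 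Here $N=2n$ is even, so $N+1=2n+1$ is odd, and this splitting is the single-map incarnation of the decomposition $T_K=S_{K'}\times T_{K'}$ recorded in the Remark following Theorem~\ref{thm:polygon->evenpolygon_red1}: because the first codomain component $\cU_{K'}$ (with $1\notin K'$) decouples from the sets carrying the index $1$, the $(2n+1)$-gon equation separates into the $2n$-gon equation on the index-$1$ sets, which governs $T^{(2n)}$, together with the conditions that track the extra component $\omega$.

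The second step is to verify that the specific choice $\omega(a_1,\ldots,a_n):=a_n$ satisfies these residual conditions identically. The mechanism is already transparent in the cases $a\ast b=b$ and \smash{$\{a,b,c\}=c$}: on the left-hand side $\omega$ is applied to a tuple whose last entry is itself the output of an inner $\omega$, so iterating ``take the last argument'' telescopes the nesting down to a single deepest input, while on the right-hand side the inner expressions reduce, under the same last-projection, to the component operations of $T^{(2n)}$ evaluated on the later inputs. I would make this precise by tracking, through the pasting scheme interleaving $\vec{P}_{\rm o}([N+1])$ and $\cev{P}_{\rm e}([N+1])$, that the projected value placed in $\cU_{K'}$ is threaded through the subsequent compositions so that the chain of last-projections on the left reproduces exactly the argument fed to the corresponding operation on the right; each residual condition then becomes a tautology, precisely as one checks directly in \eqref{7gon_conditions_beyond_6gon} with \smash{$\{a,b,c\}=c$}.

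The main obstacle is that no closed form of the residual conditions is available for general $n$; the Propositions above supply them only for $n\le 3$. The cleanest way to bypass this is to argue entirely at the combinatorial level, i.e.\ to establish the extension (converse) direction of Theorem~\ref{thm:polygon->evenpolygon_red1}, which is not stated there and is exactly the ingredient the preceding theorems lack for the even-to-odd non-dual step. Concretely, I would set $T_K=S_{K'}\times T_{K'}$ for every $K$ with $1\in K$, taking $S_{K'}$ to be the projection of $\cU_{\vec{P}_{\rm o}(K)}$ onto its last factor, choose the single decoupled map $T_{\widehat{1}}$ (the index with $1\notin K$) arbitrarily on the extra sets, and compare the two compositions in the odd $(N+1)$-gon equation: on the index-$1$ sets they reproduce the given $2n$-gon equation, whereas on the remaining sets the projections merely copy inputs, so that the two sides agree once one checks that the transpositions route the copied values identically. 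This final verification, carried out uniformly in $n$, is the delicate bookkeeping whose validity is the very content of the conjecture, and it is where the substance of a complete proof resides.
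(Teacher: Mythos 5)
There is no proof in the paper to compare against: the statement you are proving is stated there as a \emph{Conjecture}, supported only by the explicit low-order cases (the tetragon-to-pentagon and hexagon-to-heptagon corollaries in Section~\ref{subsec:further_rels}) and by the degenerate-extension theorems, which concern a different kind of extension (dropping a \emph{domain} argument, Theorems~\ref{thm:polygon->oddpolygon_red1}\,(2), \ref{thm:dualpolygon->evendualpolygon_red1}\,(2), \ref{thm:polygon->dualpolygon_red2}\,(2)) rather than adjoining a \emph{codomain} component. Your proposal correctly diagnoses why: the needed combinatorial ingredient is a converse to Theorem~\ref{thm:polygon->evenpolygon_red1}, which the paper deliberately does not state, because adjoining the component $S_{K'}$ is not free --- the extra outputs on the right-hand side of the odd $(2n+1)$-gon equation must match the extra outputs on the left-hand side, which are obtained by feeding the projected values through $T_{\hat 1}$. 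Your two-stage plan (general splitting into the $2n$-gon equation plus residual conditions; then verifying that $\omega(a_1,\ldots,a_n)=a_n$ collapses the residual system) is the natural strategy, but neither stage is actually carried out: Stage~1 is asserted without a closed form of the residual conditions for general $n$, and Stage~2 is argued only by the ``telescoping'' heuristic, after which you concede that the uniform-in-$n$ routing verification ``is the very content of the conjecture.'' A proof outline that defers exactly the step whose validity is in question is not a proof; the statement remains open after your argument, just as it is in the paper.

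Beyond the deferral, one concrete step in your final paragraph would fail as written. In the single-map setting of the conjecture you cannot ``choose the single decoupled map $T_{\hat 1}$ arbitrarily on the extra sets'': every $T_K$, including $T_{\hat 1}$, must be the \emph{same} map $\bigl(\omega, T^{(2n)}\bigr)$, and $T_{\hat 1}$ appears on only one side of the odd equation $T_{\hat 1} T_{\hat 3}\cdots T_{\widehat{2n+1}} = T_{\widehat{2n}}\cdots T_{\hat 2}$. Consequently the left-hand side's values in the extra (index-$1$-free) spaces are $T_{\hat 1}$ applied to the collected $S$-projections, while the right-hand side's values there are the $S$-projections of inner $T^{(2n)}$-evaluations; agreement is a genuine constraint on $T_{\hat 1}$, not a bookkeeping triviality about copied inputs. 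One sees this already in \eqref{7gon_conditions_beyond_6gon} with $\{a,b,c\}=c$: the left sides reduce to $\langle d,e,f\rangle$ and $[d,e,f]$, i.e., to the full map $T^{(6)}$ acting on the projected arguments --- not to copies of inputs --- and the identities hold only because the projected tuple happens to be $(d,e,f)$. An argument with arbitrary $T_{\hat 1}$ would at best address a multi-map variant (and even that variant is false, for the matching reason just given), so this step must be replaced by the constrained verification that you have postponed.
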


\begin{Conjecture}
Let $T^{(2n)}$ be a $2n$-gon map, $n \in \bbN$, and $u \in \cU$ a fixed element. Then
\[
 T^{(2n+1)}(a_1,\ldots,a_n) := \big(u, T^{(2n)}(a_1,\ldots,a_n)\big)
\]
is a $(2n+1)$-gon map if and only if $T^{(2n)}(u,\ldots,u) = (u,\ldots,u)$.
\end{Conjecture}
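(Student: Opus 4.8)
The plan is to deduce the statement from a single \emph{splitting} proposition that generalises, to arbitrary $n$, the pentagon and heptagon propositions proved above (those writing $T^{(5)}=(a\ast b,T^{(4)})$ and $T^{(7)}=(\{a,b,c\},T^{(6)})$). Writing a candidate map $T^{(2n+1)}\colon\cU^n\to\cU^n$ as $T^{(2n+1)}=(S,T^{(2n)})$, where $S\colon\cU^n\to\cU$ is its first codomain component and $T^{(2n)}\colon\cU^n\to\cU^{n-1}$ collects the remaining $n-1$ components, I would first show that $T^{(2n+1)}$ solves the $(2n+1)$-gon equation if and only if $T^{(2n)}$ solves the $2n$-gon equation and $S$ satisfies a coupled system $\mathcal{C}_n(S,T^{(2n)})$ of exactly $n$ scalar conditions. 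This splitting is the single-map incarnation of Theorem~\ref{thm:polygon->evenpolygon_red1} together with the Remark following it: projecting the $(2n+1)$-gon equation onto the output slots that do not carry a first-component value reproduces precisely the $2n$-gon equation, while the slots carrying a first-component value supply $\mathcal{C}_n$.

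The second step is to pin down the shape of $\mathcal{C}_n$ and to specialise it to $S\equiv u$. The pattern visible in \eqref{5gon_prod_conditions} (its first two lines) and in \eqref{7gon_conditions_beyond_6gon} is that each of the $n$ conditions has, on its right-hand side, a \emph{single} application of $S$, and, on its left-hand side, one of the $n$ operations of $T^{(2n+1)}$ applied to $n$ arguments each of which is itself an $S$-application; moreover the $n$ choices of outermost operation on the left run exactly once through all $n$ components of $T^{(2n+1)}$. Granting this shape, set $S\equiv u$. Every $S$-application then returns $u$, so every right-hand side becomes $u$ and every left-hand side becomes (its outermost operation) evaluated at $(u,\dots,u)$. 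The one condition whose outermost operation is $S$ itself degenerates to the tautology $u=u$, while the remaining $n-1$ conditions read off the $n-1$ component equations that together constitute $T^{(2n)}(u,\dots,u)=(u,\dots,u)$.

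Both implications are then immediate. If $T^{(2n)}(u,\dots,u)=(u,\dots,u)$, all members of $\mathcal{C}_n(u,T^{(2n)})$ hold, and the splitting gives that $(u,T^{(2n)})$ solves the $(2n+1)$-gon equation. Conversely, if $(u,T^{(2n)})$ is a $(2n+1)$-gon map, the splitting forces $\mathcal{C}_n(u,T^{(2n)})$, whose non-tautological members are precisely $T^{(2n)}(u,\dots,u)=(u,\dots,u)$. For $n=2$ and $n=3$ this recovers, respectively, the pentagon and heptagon corollaries obtained above from \eqref{5gon_prod_conditions} and \eqref{7gon_conditions_beyond_6gon}.

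I expect the genuine obstacle to be the structural claim of the second step, namely that for every $n$ the $S$-slot equations of the $(2n+1)$-gon equation have the uniform ``$S$-applications as the $n$ arguments on the left, a single $S$-application on the right'' shape. No closed form for the component conditions of a general polygon equation is available, so this would have to be extracted from the pasting scheme of the higher Tamari order $T(2n+1,2n-1)$ underlying the equation (Section~\ref{sec:B&T}): one must verify that, along both sides of \eqref{N-gon_odd}, the first codomain component produced by the outermost factor is always an $S$-value, and that after the reduction of Theorem~\ref{thm:polygon->evenpolygon_red1} each of its $n$ inner arguments is again an $S$-value. Establishing this nesting pattern in general---equivalently, proving the general splitting proposition that the low-order cases verify only by direct computation---is where the real work lies; once it is in place, the specialisation to $S\equiv u$ and both directions of the equivalence follow at once.
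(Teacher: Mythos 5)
The first thing to get on the table: the paper does \emph{not} prove this statement. It is one of the six Conjectures closing Section~\ref{subsec:further_rels}, introduced with ``Preceding results suggest the following conjectures'', and the only supporting evidence in the paper are the cases $n=2$ and $n=3$ --- the corollaries asserting that $T^{(5)}(a,b)=\big(u,T^{(4)}(a,b)\big)$ is a pentagon map iff $T^{(4)}(u,u)=u$, and that $T^{(7)}(a,b,c)=\big(u,T^{(6)}(a,b,c)\big)$ is a heptagon map iff $T^{(6)}(u,u,u)=(u,u)$, both proved by direct inspection of the explicit component conditions \eqref{5gon_prod_conditions} and \eqref{7gon_conditions_beyond_6gon}. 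So there is no paper proof to match your attempt against, and your attempt, by your own candid admission, is not a proof either. The genuine gap is exactly where you locate it: the general splitting proposition --- that the component conditions of the single-map $(2n+1)$-gon equation decompose into the $2n$-gon conditions for the truncated map plus a system $\mathcal{C}_n$ of exactly $n$ conditions, each equating one of the $n$ operations applied to $n$ arguments that are all $S$-values with a single $S$-value --- is extrapolated from the pattern at $n=2,3$ and never established. Note also that the two results you cite do not close it: Theorem~\ref{thm:polygon->evenpolygon_red1} and the Remark following it give only the ``only if'' half (every solution of the odd $(2n+1)$-gon equation truncates to a $2n$-gon map, hence has the form $S\times T^{(2n)}$); they say nothing about which pairs $\big(S,T^{(2n)}\big)$ reassemble into a solution, nor about the shape of the residual conditions, and the ``if'' direction of the conjecture lives entirely in that unproven converse.

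That said, your reduction is correct modulo this lemma, and the lemma itself looks provable by the same $\mathrm{pr}_1$ bookkeeping that drives Theorem~\ref{thm:polygon->evenpolygon_red1}, rather than by seeking closed forms of the component conditions. In the labeled $(2n+1)$-gon equation, the sets $\cU_J$ with $1\notin J$ occur in exactly three roles: as the first codomain component of each $T_K$ with $1\in K$ (these carry the $S$-values), as the domain slots of $T_{\hat 1}$, and as the final output slots $\cev{P}_{\rm e}(\hat 1)$; one checks that all initial slots $\alpha^{\rm (b)}$ contain $1$, and that all domain slots of the maps $T_K$, $1\in K$, contain $1$, so $S$-values are never fed into a $T^{(2n)}$-slot (the transpositions $\cP$ merely permute labeled slots and do not disturb this accounting). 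Since $T_{\hat 1}$ is applied last on its side of \eqref{N-gon_odd} and is absent from the other side, the output components in slots $J\ni 1$ involve only $T^{(2n)}$-operations and reproduce the $2n$-gon equation, while the $n$ output components in slots $J\not\ni 1$ equate the $n$ components of $T_{\hat 1}$ evaluated on $n$ $S$-values (one supplied by each of the other maps on its side) with the single $S$-values produced by the $n$ maps on the opposite side --- precisely your claimed shape, matching the counts $3=1+2$ and $6=3+3$ at $n=2,3$. Granting the lemma, your specialisation $S\equiv u$ and both directions of the equivalence go through as written. As it stands, however, the proposal reduces one open conjecture to another unproven combinatorial statement, so it should be counted as a viable strategy, not a proof --- consistent with the paper, which likewise leaves the statement open.
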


\begin{Conjecture}
Let $T^{(2n+1)}$ be a $(2n+1)$-gon map, $n \in \bbN$. Then
\[
 \tT^{(2n+2)}(a_1,\ldots,a_n) := \big(T^{(2n+1)}(a_1,\ldots,a_n),a_1\big)
\]
is a dual $(2n+2)$-gon map.
\end{Conjecture}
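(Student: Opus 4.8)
The plan is to read this statement as the single-map, explicitly extended converse of Theorem~\ref{thm:dualpolygon->polygon_red2} --- the ``part~(2)'' that was left unstated there, in exact analogy with the way Theorem~\ref{thm:polygon->oddpolygon_red1}\,(2) reverses Theorem~\ref{thm:polygon->oddpolygon_red1}\,(1). Writing $\tT^{(2n+2)} = \big(T^{(2n+1)}, S\big)$ with the prescription $\tT(x_1,\ldots,x_n) = \big(T^{(2n+1)}(x_1,\ldots,x_n), x_1\big)$, the first $n$ output slots carry the $(2n+1)$-gon part and the last slot carries $S = \pi_1$, the projection onto the first local argument. In the indexed language of Section~\ref{sec:reductions} this reads $\tT_K = T_{K'} \times S_{K'}$ for every $K$ with $2n+2 \in K$, where $K' = K \setminus \{2n+2\}$ and $S_{K'}$ is the projection onto the first factor of $\cU_{\vec{P}_{\rm e}(K)}$ onto the adjoined last codomain factor $\cU_{K'}$. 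This is exactly the choice that produced the proved dual hexagon and dual octagon corollaries ($a \diamond b := a$ and $|a,b,c| := a$, respectively).

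First I would establish a general \emph{splitting} statement, generalising the two splitting Propositions proved for $n=2,3$: namely that $\tT^{(2n+2)} = \big(T^{(2n+1)}, S\big)$ solves the dual $(2n+2)$-gon equation if and only if $T^{(2n+1)}$ solves the $(2n+1)$-gon equation and $S$ satisfies a residual system, call it $(\star_n)$. One direction is essentially Theorem~\ref{thm:dualpolygon->polygon_red2}\,(1): since each codomain factors as $\cU_{\cev{P}'_{\rm o}(K)} \times \cU_{K'}$, the dual $(2n+2)$-gon equation splits componentwise by output, and the $\cU_{\cev{P}'_{\rm o}(K)}$-part is precisely the $(2n+1)$-gon equation computed in that proof. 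The remaining $\cU_{K'}$-parts constitute $(\star_n)$. With this splitting in hand the whole statement reduces, since $T^{(2n+1)}$ is a $(2n+1)$-gon map by hypothesis, to the single assertion that $S = \pi_1$ solves $(\star_n)$.

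The last step --- and the genuine obstacle, which is presumably why the statement is only conjectured --- is the \emph{uniform} verification that the first-argument projection satisfies $(\star_n)$ for all $n$. For $n=2,3$ this is the elementary check that setting $a\diamond b = a$ in \eqref{dualhex_conditions_beyond_pent}, respectively $|a,b,c| = a$ in \eqref{dual8gon_beyond_hept}, collapses every equation to a triviality (both sides reduce to the same expression). The difficulty is that no closed form of $(\star_n)$ is available in general: the two sides of the dual $(2n+2)$-gon equation are long compositions of copies of $\tT$ interleaved with transpositions $\cP$, and one must show that, with $S=\pi_1$, both sides deposit the \emph{same} function of the global inputs into each $\cU_{K'}$-slot. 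I would attack this by a combinatorial \emph{transport lemma}: along either maximal chain of the dual Tamari order $\tT(2n+2,2n)$, the value that the projection component carries into the slot labelled $K'$ is chain-independent. The natural route is induction on $n$, peeling off one reduction of Section~\ref{sec:reductions} at a time and tracking the first-coordinate datum through each elementary move; making this bookkeeping precise for every position index, so that it holds simultaneously for all slots and both chains, is where the real work lies.
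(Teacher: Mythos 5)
You are attempting to prove a statement that the paper itself presents only as a \emph{Conjecture}: the paper verifies exactly the instances $n=1,2,3$ (the Proposition giving the dual pentagon map $\tT^{(5)}(a,b)=\bigl(T^{(4)}(a,b),a\bigr)$, and the Corollaries giving $\tT^{(6)}(a,b)=\bigl(T^{(5)}(a,b),a\bigr)$ and $\tT^{(8)}(a,b,c)=\bigl(T^{(7)}(a,b,c),a\bigr)$, by checking that $a\diamond b:=a$ trivializes \eqref{dualhex_conditions_beyond_pent} and that $|a,b,c|:=a$ trivializes \eqref{dual8gon_beyond_hept}), and offers no argument for general $n$. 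Your proposal situates the statement correctly: the splitting direction is indeed Theorem~\ref{thm:dualpolygon->polygon_red2} together with the Remark following it, and what remains is the converse for the specific section $S=\pi_1$. But you then openly defer the decisive step --- the uniform, all-$n$ verification that $\pi_1$ satisfies the residual system $(\star_n)$ --- to an unexecuted ``transport lemma'' and an induction whose bookkeeping you do not carry out. That step \emph{is} the entire content of the conjecture; everything before it (the reduction and the $n\le 3$ checks) is already in the paper. So what you have is a correct reduction plus a research plan, not a proof, and the gap you leave open is precisely the one that makes the statement a conjecture.

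One structural inaccuracy is worth flagging, because taken literally it would derail your plan: the dual $(2n+2)$-gon equation does \emph{not} split ``componentwise by output'' into the $(2n+1)$-gon equation in the $\cU_{\cev{P}'_{\rm o}(K)}$-slots plus conditions confined to the adjoined $\cU_{K'}$-slots. The adjoined outputs of earlier copies of $\tT$ are consumed as inputs by later copies (in the indexed picture of Section~\ref{sec:reductions}, by $\tT_{\widehat{N+1}}$, whose domain factors are exactly the $\cU_J$ with $N+1\notin J$), so the residual system couples $S$ with the operations of $T^{(2n+1)}$ and partly lands in $T$-type output slots. This is visible already at $n=2$: the first two equations of \eqref{dualhex_conditions_beyond_pent} have outermost operations $\ast$ and $\cdot$ --- they sit in pentagon-output slots --- yet contain $\diamond$ inside. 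The correct splitting, which is what the paper proves case by case, is: $\tT^{(2n+2)}=\bigl(T^{(2n+1)},S\bigr)$ is a dual $(2n+2)$-gon map if and only if $T^{(2n+1)}$ is a $(2n+1)$-gon map \emph{and} a coupled residual system holds. Identifying that system in closed form for general $n$, and then verifying it for $S=\pi_1$, is exactly the open combinatorial problem your transport lemma would have to solve; until it does, the conjecture remains unproved.
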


\begin{Conjecture}
Let $T^{(2n+1)}$ be a $(2n+1)$-gon map, $n \in \bbN$, and $u \in \cU$ a fixed element. Then
\[
 \tT^{(2n+2)}(a_1,\ldots,a_n) := \big(T^{(2n+1)}(a_1,\ldots,a_n),u\big)
\]
is a dual $(2n+2)$-gon map if and only if $T^{(2n+1)}(u,\ldots,u) = (u,\ldots,u)$.
\end{Conjecture}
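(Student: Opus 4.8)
The plan is to reduce the statement to the structural decomposition underlying Theorem~\ref{thm:dualpolygon->polygon_red2}, and then to specialise the last output component to the constant map. Write the single map as $\tT = (T,L)\colon \cU^n \rightarrow \cU^{n+1}$, where $T = (T_1,\ldots,T_n)\colon \cU^n \rightarrow \cU^n$ collects the first $n$ output components and $L\colon \cU^n \rightarrow \cU$ is the last one; here $L$ is the constant map $L(a_1,\ldots,a_n)=u$, so that $\tT(a_1,\ldots,a_n)=(T(a_1,\ldots,a_n),u)$. The first step is to establish, for an arbitrary such $\tT$, the equivalence generalising the decomposition Proposition for the dual octagon in Section~\ref{subsec:further_rels}: the dual $(2n+2)$-gon equation holds for $\tT$ if and only if (A)~$T$ solves the $(2n+1)$-gon equation, and (B)~a system of $n+1$ ``conditions beyond'', generalising \eqref{dual8gon_beyond_hept}, is satisfied.

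To obtain this decomposition I would read off the two sides of the dual $(2n+2)$-gon equation componentwise, sorting the output sets $\cU_J$ according to whether $2n+2 \in J$ or $2n+2 \notin J$. For the components with $2n+2 \in J$ the analysis in the proof of Theorem~\ref{thm:dualpolygon->polygon_red2} applies verbatim: these components are produced entirely by the $T$-parts of the factors, and their equality is precisely the $(2n+1)$-gon equation for $T$, with no occurrence of $L$; this yields (A). The components with $2n+2 \notin J$ give (B). Here the key observation is that, on the side carrying $\tT_{\widehat{2n+2}}$ (the unique factor with $2n+2 \notin K$), this factor is the outermost map, its $n+1$ output components are exactly $F_1=T_1,\ldots,F_n=T_n,\,F_{n+1}=L$, and all of its inputs are the last (i.e.\ $L$-)outputs of the inner factors, since on that side the sets $\cU_J$ with $2n+2\notin J$ arise only as last components of the maps $\tT_K$, $2n+2\in K$. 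On the opposite side the same $2n+2 \notin J$ outputs are produced directly by the $L$-components of those factors. Hence the $i$-th equation of (B) takes the form $F_i\big(L(\cdots),\ldots,L(\cdots)\big) = L(\cdots)$, exactly as in \eqref{dual8gon_beyond_hept}.

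Granting this, the remainder is immediate. Substituting the constant $L\equiv u$ leaves (A) untouched, so (A) holds by hypothesis. In each equation of (B) every argument of the outer operation on the left is a value of $L$, hence equals $u$, and the right-hand side is a single value of $L$, hence also $u$; thus the $i$-th equation collapses to $F_i(u,\ldots,u)=u$. For $i \le n$ this reads $T_i(u,\ldots,u)=u$, while for $i=n+1$ it is the tautology $u=u$. Therefore (B) is equivalent to $T(u,\ldots,u)=(u,\ldots,u)$, and combining with (A) we conclude that $\tT=(T,u)$ solves the dual $(2n+2)$-gon equation if and only if $T(u,\ldots,u)=(u,\ldots,u)$; the single collapse computation supplies both directions of the equivalence at once.

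The main obstacle is the first step, namely proving the structural form of (B) for all $n$ rather than merely reading it off from \eqref{dual8gon_beyond_hept} in the octagon case. Concretely, it requires the combinatorial bookkeeping that, on the side containing $\tT_{\widehat{2n+2}}$, every input fed into $\tT_{\widehat{2n+2}}$ is an $L$-output of an inner factor (equivalently, that the external domain on that side consists only of sets $\cU_J$ with $2n+2 \in J$, so that the $2n+2\notin J$ sets occur there solely as last components of the maps), together with the dual fact that the opposite side produces each such output by a single application of $L$. Once this ``all arguments are values of $L$'' feature is secured in general, the collapse under $L\equiv u$ is forced and the equivalence follows. I expect this bookkeeping to be routine given the pasting-scheme description of the half-packets $\vec{P}_{\rm o}$ and $\vec{P}_{\rm e}$ in Section~\ref{sec:S&P}, but it is the only place where genuine work beyond the cited reduction theorem is needed.
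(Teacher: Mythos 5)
This statement is one of the paper's open \emph{Conjectures}: the paper offers no general proof, establishing only the instances $n=2$ (Corollary~\ref{cor:pent->dhex,u}) and $n=3$ (the heptagon-to-dual-octagon corollary), in each case via an explicit decomposition proposition followed by the collapse $L\equiv u$, so there is no paper proof to compare against. Your strategy is the natural generalization of exactly that route, and the second half of your argument is correct as far as it goes: granted the decomposition into (A) and (B), with each (B)-equation of the form $F_i\big(L(\cdots),\ldots,L(\cdots)\big)=L(\cdots)$, substituting the constant $L\equiv u$ forces the $i$-th equation to read $F_i(u,\ldots,u)=u$, which for $i\le n$ is $T_i(u,\ldots,u)=u$ and for $i=n+1$ is vacuous; both directions of the stated equivalence then indeed follow at once.

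The gap is that the structural lemma you defer is not a routine afterthought: it is precisely the content separating the proven cases \eqref{dual8gon_beyond_hept} from the general statement, and your text asserts it rather than proves it, so as written you have a reduction of the conjecture to a combinatorial lemma, not a proof. (Also, citing Theorem~\ref{thm:dualpolygon->polygon_red2} ``verbatim'' for (A) is slightly overstated, since that theorem gives only the forward direction; the ``iff'' of your split needs the componentwise argument below.) To close the gap you need three facts, the first of which is in the proof of Theorem~\ref{thm:dualpolygon->polygon_red2}: (i) for $K\ni 2n+2$, the set $K'=K\setminus\{2n+2\}$ is lexicographically first in $P(K)$, hence lies in $P_{\rm o}(K)$, so the domain $\cU_{\vec{P}_{\rm e}(K)}$ of $\tT_K$ involves only sets $J$ with $2n+2\in J$ (whence the (A)-components contain no occurrence of $L$) and $\cU_{K'}$ is the \emph{last} codomain factor, i.e., the $L$-slot of the single map; (ii) the domain of $\tT_{\widehat{2n+2}}$ is $\cU_{\vec{P}_{\rm e}([2n+1])}$, whose labels $[2n+1]\setminus\{k\}$, $k\in\{2,4,\ldots,2n\}$, are exactly the last-output labels of the $n$ inner factors $\tT_{\hat{2}},\ldots,\tT_{\widehat{2n}}$ on its side of \eqref{dual_N-gon_even}; (iii) its codomain $\cU_{\cev{P}_{\rm o}([2n+1])}$, with labels $[2n+1]\setminus\{k\}$, $k$ odd, matches one-to-one the last outputs of the $n+1$ factors $\tT_{\hat{1}},\tT_{\hat{3}},\ldots,\tT_{\widehat{2n+1}}$ on the opposite side. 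Since every label occurs exactly once as an input and once as an output across the pasting scheme, and since by (i) no factor on the odd side can consume a set $J\not\ni 2n+2$ (so no such set occurs in the common domain $\alpha^{\rm (r)}$, as you require), facts (ii) and (iii) pin the wiring down and give (B) exactly the form your collapse needs. Each of (i)--(iii) reduces to locating $[2n+1]\setminus\{k\}$ in a half-packet and is short, so the lemma is true and fillable along the lines you indicate --- but until it is written out, the conjecture remains unproven by your proposal.
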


\begin{Conjecture}
Let $\tT^{(2n+1)}$ be a dual $(2n+1)$-gon map, $n \in \bbN$. Then
\[
 \tT^{(2n+2)}(a_1,\ldots,a_n) := \big(a_n, \tT^{(2n+1)}(a_1,\ldots,a_n)\big)
\]
is a dual $(2n+2)$-gon map.
\end{Conjecture}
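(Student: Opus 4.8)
The plan is to recognize the statement as the single-map \emph{extension} counterpart of the reduction in Theorem~\ref{thm:dualpolygon->odddualpolygon_red1} (with $N = 2n+1$ odd), which cuts the first codomain component of each dual $(2n+2)$-gon map $\tT_K$ with $1 \in K$ and recovers a dual $(2n+1)$-gon map; the conjecture asserts that this can be reversed. First I would verify the arities: a dual $(2n+1)$-gon map has type $\cU^n \to \cU^n$ and a dual $(2n+2)$-gon map has type $\cU^n \to \cU^{n+1}$, so prepending exactly one codomain component is the right operation. The overall strategy is then to run the proof of Theorem~\ref{thm:dualpolygon->odddualpolygon_red1} backwards: promote $\tT^{(2n+1)}$ to the single map occupying every position, and for each $K$ with $1 \in K$ restore the first codomain component $\cU_{K \setminus \{1\}}$, which in that proof is the first element of $\cev{P}_{\rm o}(K)$.

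Next I would substitute the ansatz into the even dual $(2n+2)$-gon equation~\eqref{dual_N-gon_even}. Writing $\tT^{(2n+2)}(a_1,\ldots,a_n) = (a_n, \tT^{(2n+1)}(a_1,\ldots,a_n))$ realizes the restored component as a degenerate map copying the last argument; under the position-index dictionary this choice is forced, because $K \setminus \{1\}$ sits in the first slot of $\cev{P}_{\rm o}(K)$, i.e.\ the slot produced from the last argument of $\tT_K$. I would then split the target $\cU^{n+1}$ into the $n$ ``index-$1$-containing'' output components and the single copied component: on the former the equation must collapse to the dual $(2n+1)$-gon equation, holding by hypothesis, while the copied component, carrying the value $a_n$, must be produced identically on both sides. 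The single map, when playing the role of $\tT_{\hat{1}}$ (the map with $1 \notin K$, which splits off in Theorem~\ref{thm:dualpolygon->odddualpolygon_red1}), supplies this extra component.

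The hard part will be to make the decoupling rigorous. In the reductions of Theorems~\ref{thm:polygon->oddpolygon_red1}--\ref{thm:dualpolygon->polygon_red2}, the map $\tT_{\hat{1}}$ splits off precisely because its image consists of spaces disregarded by all other maps; reversing this, I must show that under the single-map ansatz the restored components carry $a_n$ \emph{inertly} -- they are never fed back as an argument of any $\tT$ and are transported by the transpositions $\cP$ to the same final slot on both sides. Since even polygon equations intrinsically contain explicit transpositions (see the Remark following~\eqref{6-gon_eq_pos}), this routing cannot be read off directly and is the genuine obstacle. I would treat it combinatorially rather than by an operadic calculation: both sides of~\eqref{dual_N-gon_even} are maximal chains representing the same element of $\tT(2n+2,2n)$, hence they induce the same transport of the index-$1$-free data, which forces the copied components to match automatically; the residual equation on the index-$1$-containing components is then the dual $(2n+1)$-gon equation. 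As consistency checks, the cases $n=2$ and $n=3$ should reproduce Proposition~\ref{cor:dpent->dhex} and the dual heptagon-to-octagon proposition, where the decoupling was verified by direct computation.
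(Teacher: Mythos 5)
First, a point of comparison that matters: the paper does \emph{not} prove this statement. It is one of the Conjectures closing Section~\ref{subsec:further_rels}, supported only by direct verification in low cases ($n=2$ is Proposition~\ref{cor:dpent->dhex}\,(1), $n=3$ is the dual heptagon-to-octagon proposition, each checked through the explicit polyadic conditions of Section~\ref{sec:dualPol}). Your structural framing is correct: the statement is the single-map reversal of the codomain-cutting reduction in Theorem~\ref{thm:dualpolygon->odddualpolygon_red1}, your arity bookkeeping ($\cU^n \rightarrow \cU^n$ versus $\cU^n \rightarrow \cU^{n+1}$, restored component in the first slot because $K \setminus \{1\}$ heads $\cev{P}_{\rm o}(K)$ for odd $N$) is right, and you correctly locate the difficulty in the routing of the restored components. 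But observe that, unlike the domain-cutting Theorems~\ref{thm:polygon->oddpolygon_red1}, \ref{thm:dualpolygon->evendualpolygon_red1} and \ref{thm:polygon->dualpolygon_red2}, the codomain-cutting Theorems~\ref{thm:polygon->evenpolygon_red1}, \ref{thm:dualpolygon->odddualpolygon_red1} and \ref{thm:dualpolygon->polygon_red2} carry no part~(2): restoring a codomain component is not automatic, since by the Remarks after Theorems~\ref{thm:polygon->evenpolygon_red1} and \ref{thm:dualpolygon->polygon_red2} one must supply maps $S_{K'}$ whose outputs satisfy their own consistency conditions inside the larger equation. That is precisely why the paper states this as a conjecture rather than deriving it from Section~\ref{sec:reductions}, and it is precisely the step your sketch does not close.

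The central move of your argument fails on two counts. Factually, the restored components are \emph{not} inert: under the single-map ansatz the map occupying the position of $\tT_{\hat{1}}$ (the one with $1 \notin K$, which in the multi-map reduction splits off) is the same nontrivial map $\tT^{(2n+2)}$, and its arguments are exactly the index-$1$-free data, i.e., the copied values; already at $n=2$, writing $\tT^{(6)}(a,b)=(a\ast b, a\cdot b, a\diamond b)$ with $a \ast b = b$ the restored component, the mixed conditions feed copies into $\tT^{(5)}$ -- for instance the second of \eqref{5gon_prod_conditions} becomes $b \cdot c = b \cdot c$ only because the copy-substitution makes both sides the same expression, not because the copied value bypasses the map. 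Circularly, your appeal to both sides being ``maximal chains representing the same element of $\tT(2n+2,2n)$'' assumes what is to be shown: the dual Tamari order only guarantees that the two chains connect the same initial and final orders; the assertion that the induced \emph{composite maps} transport the index-$1$-free data identically is (a component of) the dual $(2n+2)$-gon equation itself, and with $\tT_{\hat{1}}$ nontrivial it cannot be read off from the order-theoretic combinatorics. A genuine proof would have to show, uniformly in $n$, that every component condition of the dual $(2n+2)$-gon equation under the ansatz collapses either to a tautology or to a component of the dual $(2n+1)$-gon equation -- the pattern visible in \eqref{dualhex_conditions_beyond_pent} and \eqref{dualhept_conditions} -- for example by induction through the explicit transposition-laden form of the even equation or by a refinement of the reduction argument that tracks the $S$-components; neither you nor the paper supplies this, so the statement remains open.
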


\begin{Conjecture}
Let $\tT^{(2n+1)}$ be a dual $(2n+1)$-gon map, $n \in \bbN$, and $u \in \cU$ a fixed element. Then
\[
 \tT^{(2n+2)}(a_1,\ldots,a_n) := \big(u, \tT^{(2n+1)}(a_1,\ldots,a_n)\big)
\]
is a dual $(2n+2)$-gon map if and only if $\tT^{(2n+1)}(u,\ldots,u) =(u,\ldots,u)$.
\end{Conjecture}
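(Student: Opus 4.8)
The plan is to read this statement as the single-map specialization of the \emph{extension} (reverse) direction of the reduction induced by $\mathrm{pr}_1$ in Theorem~\ref{thm:dualpolygon->odddualpolygon_red1}, applied with the odd value $N=2n+1$. Writing $\tT := \tT^{(2n+1)}\colon \cU^n \to \cU^n$ and $\tT^{(2n+2)}\colon \cU^n \to \cU^{n+1}$, $(a_1,\ldots,a_n)\mapsto\big(u,\tT(a_1,\ldots,a_n)\big)$, the construction prepends a \emph{constant} first codomain component to $\tT$, which is exactly the operation inverse to ``excluding the first component of the codomain'' in that theorem; the dimension count $\cU^n\to\cU^{n+1}$ matches the dual $(2n+2)$-gon signature of \eqref{dual_N-gon_even}. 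I would prove both implications at once by showing that, under this ansatz, the dual $(2n+2)$-gon equation is equivalent to the dual $(2n+1)$-gon equation \emph{together with} the single vector identity $\tT(u,\ldots,u)=(u,\ldots,u)$.

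First I would decompose the dual $(2n+2)$-gon equation into its scalar components. The components that do not involve the prepended constant slot reproduce precisely the dual $(2n+1)$-gon equation: this is the content of the combinatorial reduction underlying Theorem~\ref{thm:dualpolygon->odddualpolygon_red1} run backwards, since discarding the constant first codomain component of every factor returns $\tT$ and the relation it satisfies. By hypothesis $\tT$ is a dual $(2n+1)$-gon map, so all of these components hold automatically. The explicit pattern is already in the paper: for $n=2$ this is Proposition~\ref{cor:dpent->dhex}, where the dual-hexagon conditions \eqref{5gon_prod_conditions} and \eqref{dualhex_conditions_beyond_pent} reduce to the dual-pentagon conditions \eqref{dual5gon_conditions}; for $n=3$ it is the dual heptagon-to-octagon Proposition of Section~\ref{subsec:further_rels}. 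These serve as the template for the general bookkeeping.

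Next I would isolate the components that \emph{do} involve the constant slot. On both sides of the equation the terminal output in that slot is identically $u$, so those components match trivially; the only nontrivial constraint occurs where the constant values produced by earlier factors are passed as the \emph{entire} argument of a later application of $\tT$. In the multi-map language this is the single-map image of the map $\tT_{\hat{1}}$, whose whole codomain is discarded under $\mathrm{pr}_1$ and which therefore decouples in the combinatorial setting, but which here coincides with $\tT^{(2n+2)}$ and so cannot be chosen freely. Demanding that this application again return $u$ in every slot is exactly $\tT(u,\ldots,u)=(u,\ldots,u)$, which gives the ``if'' direction; and since these are themselves scalar components of the dual $(2n+2)$-gon equation, they are necessary, giving ``only if''. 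For $n=2$ this is precisely the reduction of \eqref{dualhex_conditions_beyond_pent} and \eqref{5gon_prod_conditions} to $u\cdot u=u$ and $u\ast u=u$, cf.\ the proof of Proposition~\ref{cor:dpent->dhex}.

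The main obstacle is to make this second step uniform in $n$. Concretely, one must argue from the position-index and pasting-scheme form of the dual $(2n+2)$-gon equation, rather than from case-by-case scalar conditions, which are only written out up to the octagon, that prepending a constant first codomain component generates \emph{exactly one} family of extra constraints, all equivalent to the single equation $\tT(u,\ldots,u)=(u,\ldots,u)$, with no further hidden compatibility conditions. This amounts to tracking, in the composition prescribed by $\mathrm{pr}_1$ applied to $\tT(2n+2,2n)\to \tT(2n+1,2n-1)$, where the constant slot is created and where it is subsequently consumed, and verifying that it is consumed only as the full input of the single decoupled factor $\tT_{\hat{1}}$. Establishing this combinatorial ``single-consumption'' property is the crux; once it is in place, the equivalence, and hence the conjecture, follows as above.
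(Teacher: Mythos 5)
You should note first that the paper contains no proof of this statement: it is one of the six \emph{Conjectures} closing Section~\ref{sec:relations_polygon_eqs}, which the author says are merely ``suggested'' by the low-order results, the verified instances being Proposition~\ref{cor:dpent->dhex}\,(2) for $n=2$ and the dual heptagon-to-octagon proposition for $n=3$. Your proposal therefore has to stand entirely on its own, and as written it does not: you yourself defer the ``single-consumption'' lemma to future work, and that lemma \emph{is} the mathematical content of the conjecture. Everything preceding it in your text is either signature bookkeeping or a restatement of Theorem~\ref{thm:dualpolygon->odddualpolygon_red1}; and you cannot lean on a converse of that theorem, because the codomain-cutting reductions (Theorems~\ref{thm:polygon->evenpolygon_red1}, \ref{thm:dualpolygon->odddualpolygon_red1} and \ref{thm:dualpolygon->polygon_red2}) are precisely the ones the paper states \emph{without} an extension part~$(2)$ -- reversing a codomain cut genuinely couples the cut components to the leftover factor $\tT_{\hat 1}$, which in your single-map setting is forced to be the same map $\bigl(u,\tT^{(2n+1)}\bigr)$, as you correctly observe. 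So what you have is a sound program, consistent with the $n=2,3$ evidence, but not a proof of what the paper records as open.

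The encouraging news is that the crux you leave open can be discharged by an incidence count your sketch never makes explicit. For $J \in {[2n+2] \choose 2n}$ with $1 \notin J$ one has $[2n+2]\setminus J = \{1,i\}$, so $\cU_J$ is met by exactly two factors of the dual $(2n+2)$-gon equation: $\tT_{J \cup \{1\}}$, where it is the prepended constant first codomain component, and $\tT_{\hat 1}$, all of whose domain and codomain components omit $1$. Moreover, for $K \ni 1$ with $|K| = 2n+1$, the face $K\setminus\{1\}$ sits at the odd position $2n+1$ of $\vec{P}(K)$, hence in $P_{\rm o}(K)$, so \emph{every} domain component of every factor $\tT_K$ with $1 \in K$ contains $1$. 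Consequently the constant slots are produced only as first outputs of the $K \ni 1$ factors and consumed only by the single instance $\tT_{\hat 1} = \bigl(u,\tT^{(2n+1)}\bigr)$, whose input is then the all-$u$ tuple; comparing the input-independent constant-sector outputs of the two sides yields $\bigl(u,\tT^{(2n+1)}(u,\ldots,u)\bigr)$ on the side containing $\tT_{\hat 1}$ against $(u,u,\ldots,u)$ (the constants of $\tT_{\hat 2}, \tT_{\hat 4},\ldots$) on the other, while the $1$-containing sector reproduces the dual $(2n+1)$-gon equation by exactly the bookkeeping in the proof of Theorem~\ref{thm:dualpolygon->odddualpolygon_red1}. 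Until you carry out this two-incidence argument and the positional matching of slots through the transposition maps $\cP$ (your $n=2$ template already shows the residual constraints $u \cdot u = u$ and $u \diamond u = u$ arising from the second of \eqref{5gon_prod_conditions} and the first of \eqref{dualhex_conditions_beyond_pent}, so the uniform statement needs proof, not extrapolation), your argument remains a reduction of the conjecture to an unproven combinatorial claim rather than a proof.
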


\section{Conclusions}
\label{sec:conclusions}
The main results of this work concern the structure of solutions of polygon equations. More precisely, we have shown
that a solution of a (dual) $N$-gon equation is related in very simple ways to solutions of the (dual) $(N+1)$-gon
and (dual) $(N-1)$-gon equation.
Each solution of a (dual) polygon equation extends to solutions of the higher equations.

For a chosen polygon equation, the most important case is when all basic sets are equal and there is only a
single polygon map. Expressing polygon equations with the help of transposition maps, we can quite easily
verify the above mentioned features for the simplest equations of the family. But for general proofs, we had to return
to the underlying framework of higher Tamari orders, as developed in \cite{DMH15}.

Our results reveal a beautiful structure of the family of polygon equations. Other nice aspects are
the integrability feature \cite{DMH15}, recalled in some examples in Section~\ref{sec:ExPEs}, and
relations with polyhedra \cite{DMH15}.

In this work, we concentrated on the set-theoretic setting. Many results directly pass over to the framework of
vector spaces, tensor products, and linear maps, where the dual tetragon equation becomes the coassociativity condition
and the pentagon equation plays one of its most important roles, as mentioned in the introduction.
A further exploration of polygon equations, beyond the pentagon equation, in this framework, will be left for
a separate work.

Also, concerning set-theoretic solutions, we have only set the stage. It is to be expected that more concrete solutions
can be obtained by applying methods that have already been exploited in the case of the pentagon equation.
Whereas the (dual) pentagon and dual hexagon equation can be expressed in terms of binary operations, the hexagon
and higher polygon equations involve~$n$-ary operations with $n>2$. There is a vast literature
 about such generalizations of products, and this should be helpful in finding solutions of $N$-gon equations
with $N>5$.
In particular, there are corresponding generalizations of co-, bi- and Hopf algebras, for which higher polygon
equations may play a role. Also see \cite{Stre98} (``cocycloids'').

The question pops up whether there are higher order counterparts of the important relation between solutions of
the pentagon equation and bi- or Hopf algebras, mentioned in the introduction. Since the transposition $\cP$ solves
the pentagon equation and also its dual, Corollaries \ref{cor:pent->dhex,u} and \ref{cor:dpent->dhex}
show that $\cP \otimes u$ and $u \otimes \cP$, where $u$ is a fixed element, are both dual hexagon maps.
If~$\cU$ is an algebra $\cA$, choosing
$u = 1_\cA$, these are algebra homomorphisms. They can be regarded as counterparts of the trivial
comultiplications $\Delta_\ell$ and $\Delta_r$ mentioned in the introduction. It should be obvious how this extends
to higher dual polygon equations, since a special solution of the odd $N$-gon equation is a combination of $\cP$'s
that achieves a total inversion. For example, since $\cP_{\mathrm{inv}} := \cP_{\bsy{1}} \cP_{\bsy{2}} \cP_{\bsy{1}}$ solves the
heptagon equation and its dual, it follows that $\cP_{\mathrm{inv}} \otimes 1_\cA$, as well as
$1_\cA \otimes \cP_{\mathrm{inv}}$, solve the dual octagon equation. A similarity transformation, analogous to
the construction of non-trivial comultiplications from a trivial one, as mentioned in the introduction, leads to a new algebra homomorphism. But under what conditions does it satisfy the respective (dual) even polygon equation?

As already mentioned in the introduction, solutions of simplex equations can be obtained from solutions of a polygon
equation and its dual \cite{DMH15,Kash+Serg98,Kass23,Mail94,Serg23}. But an additional compatibility
condition has to be solved, which needs further exploration. We plan to study this in a separate work.

We have seen in Section~\ref{subsec:further_rels} that a solution of the dual hexagon equation shows up in a~pentagonal relation for the (exponentiated) Rogers dilogarithm. Also higher (dual) polygon equations may play a
role in this context \cite{Volk97}. Surely there is much more to be revealed.

\subsection*{Acknowledgements}
Some important insights that led to this work originated from my collaboration with
Aristophanes Dimakis, who, sadly, passed away in~2021. I would like to thank Jim Stasheff for helpful correspondence.

%\bibliographystyle{sigma}
%\bibliography{article}
\pdfbookmark[1]{References}{ref}
\LastPageEnding

\end{document}